\documentclass[superscriptaddress, reprint, aps, prx]{revtex4-2}

\pdfoutput=1

\usepackage{amsmath}
\usepackage{amssymb}
\usepackage{amsthm}
\usepackage{mathtools}
\usepackage{natbib}
\usepackage{url}
\usepackage{textcase}
\usepackage{bm}
\usepackage{float}
\usepackage[T1]{fontenc}
\usepackage{graphicx}
\usepackage{dcolumn}
\usepackage{hyperref}
\usepackage{xcolor}
\usepackage{soul}

\graphicspath{{./images/}}
\hypersetup{
    colorlinks,
    linkcolor={blue!90!black},
    citecolor={blue!90!black},
    urlcolor={blue!90!black}
}

\newtheorem{conjecture}{Conjecture}
\newtheorem{corollary}{Corollary}
\newtheorem{lemma}{Lemma}

\newtheorem{result}{Result}

\theoremstyle{definition}
\newtheorem{definition}{Definition}
\newtheorem{step}{Step}[subsection]

\makeatletter
\newtheorem*{rep@theorem}{\rep@title}
\newcommand{\newreptheorem}[2]{%
    \newenvironment{rep#1}[1]{%
        \def\rep@title{#2 \ref*{##1}}%
        \begin{rep@theorem}}%
    {\end{rep@theorem}}}
\makeatother

\newreptheorem{conjecture}{Conjecture}

\usepackage{thmtools}
\usepackage{thm-restate}

\newcommand{\ket}[1]{| #1 \rangle}
\newcommand{\bra}[1]{\langle #1 |}
\newcommand{\braket}[2]{\langle #1 | #2 \rangle}
\newcommand{\ketbra}[2]{| #1 \rangle \langle #2 |}
\newcommand{\ev}[1]{\langle #1 \rangle}
\newcommand{\fv}[3]{\langle #1 | #2 | #3 \rangle}

\DeclareMathOperator{\fec}{vec}
\DeclareMathOperator{\rank}{rank}
\DeclareMathOperator{\spn}{span}
\DeclareMathOperator{\tr}{tr}
\DeclareMathOperator{\Wg}{Wg}

\let\oldl\left
\let\oldr\right
\renewcommand{\left}{\mathopen{}\mathclose\bgroup\oldl}
\renewcommand{\right}{\aftergroup\egroup\oldr}

\newcommand{\xone}{\xi_{1 \mathrm{D}}}
\newcommand{\xtwo}{\xi_{2 \mathrm{D}}}

\begin{document}

\title{Typical Correlation Length of Sequentially Generated Tensor Network States}

\author{Daniel Haag}

\email{daniel.haag@tum.de}

\affiliation{Max-Planck-Institut f{\"{u}}r Quantenoptik, Hans-Kopfermann-Str.~1, 85748 Garching, Germany}

\affiliation{Physik-Department, Technische Universit{\"{a}}t M{\"{u}}nchen, James-Franck-Str.~1, 85748 Garching, Germany}

\affiliation{Munich Center for Quantum Science and Technology (MCQST), Schellingstr.~4, 80799 M{\"{u}}nchen, Germany}

\author{Flavio Baccari}

\affiliation{Max-Planck-Institut f{\"{u}}r Quantenoptik, Hans-Kopfermann-Str.~1, 85748 Garching, Germany}

\affiliation{Munich Center for Quantum Science and Technology (MCQST), Schellingstr.~4, 80799 M{\"{u}}nchen, Germany}

\author{Georgios Styliaris}

\affiliation{Max-Planck-Institut f{\"{u}}r Quantenoptik, Hans-Kopfermann-Str.~1, 85748 Garching, Germany}

\affiliation{Munich Center for Quantum Science and Technology (MCQST), Schellingstr.~4, 80799 M{\"{u}}nchen, Germany}

\date{\today}

\begin{abstract}
    The complexity of quantum many-body systems is manifested in the vast diversity of their correlations, making it challenging to distinguish the generic from the atypical features. This can be addressed by analyzing correlations through ensembles of random states, chosen to faithfully embody the relevant physical properties. Here, we focus on spins with local interactions, whose correlations are extremely well captured by tensor network states. Adopting an operational perspective, we define ensembles of random tensor network states in one and two spatial dimensions that admit a sequential generation. As such, they directly correspond to outputs of quantum circuits with a sequential architecture and random gates. In one spatial dimension, the ensemble explores the entire family of matrix product states, while in two spatial dimensions, it corresponds to random isometric tensor network states. We extract the scaling behavior of the average correlations between two subsystems as a function of their distance. Using elementary concentration results, we then deduce the typical case for measures of correlation such as the von Neumann mutual information and a measure arising from the Hilbert-Schmidt norm. We find for all considered cases that the typical behavior is an exponential decay (for both one and two spatial dimensions). We observe the consistent emergence of a correlation length that depends only on the underlying spatial dimension and not the considered measure. Remarkably, increasing the bond dimension leads to a higher correlation length in one spatial dimension but has the opposite effect in two spatial dimensions.
\end{abstract}

\maketitle

\onecolumngrid

\begin{figure}[t]
	\centering
	\includegraphics{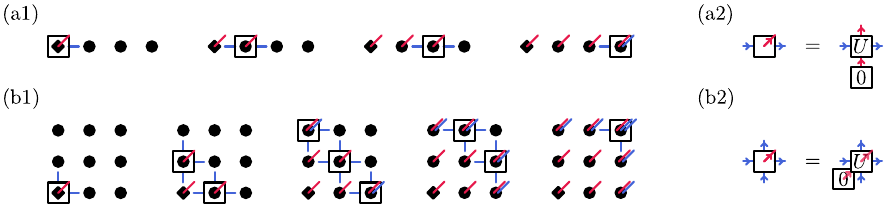}
	\caption{Sequential generation of MPS and isoTNS. Each circle represents a site of the finalized state and boxes represent the isometries of the sequential generation. (a1) Sequential generation of an MPS with physical dimension \( d \) and bond dimension \( D \). The diamond indicates the origin of the process. (a2) Each isometry arises from a unitary matrix of \( U (d D) \), with input and output as shown. The (blue) ancillary system is initialized at the first step of the sequential generation, transferred along the process, and accumulated at the final step. (b1) Sequential generation of an isoTNS with physical dimension \( d \) and bond dimension \( D \). In addition to indicating the origin of the process, the diamond also indicates the orthogonality center of the isoTNS. (b2) Each isometry arises from a unitary matrix of \( U \left ( d D^2 \right ) \). Ancillary systems are initialized and eventually accumulated at the boundary of the isoTNS at different steps of the sequential generation.}
	\label{fig:sequential_generation}
\end{figure}

\twocolumngrid

\section{INTRODUCTION}

The behavior of correlations in quantum many-body systems is an inherently difficult problem to characterize. Specifying a generic \( n \)-particle state requires exponentially many parameters, a fact which reflects the enormous variety of correlations possible in the quantum realm. Nonetheless, significant insights can be gained about the nature of correlations by utilizing random ensembles of states. A celebrated result along this direction shows that random states sampled uniformly from the full Hilbert space of an \( n \)-particle system typically exhibit strong correlations, as manifested by a volume-law behavior of the entanglement entropy~\cite{Page1993, Foong1994, SanchezRuiz1995, Hayden2006, Bengtsson2006}. However, there is by now clear evidence that the set of physically relevant states constitutes an exponentially small subset of the full Hilbert space of an \( n \)-particle system~\cite{Poulin2011}, bringing into question the relevance and utility of conclusions obtained under the assumption of uniform sampling from the full Hilbert space.

For quantum spin systems with local interactions, tensor network states have been exceedingly successful at capturing relevant properties~\cite{Cirac2021}. They exhibit an area law for the entanglement entropy by construction and are, therefore, good candidates to represent many physically relevant many-body states. Their pre-eminent one-dimensional representatives, matrix product states (MPS), have been shown to represent faithfully ground states of gapped local Hamiltonians~\cite{Verstraete2006, Hastings2007, Arad2013} and have given rise to the complete classification of topological phases of matter in one dimension~\cite{Chen2011, Schuch2011}. MPS have been generalized to their counterparts in two (or more) spatial dimensions, projected entangled pair states (PEPS). While only a weaker link between local Hamiltonians and PEPS has been proven rigorously, two-dimensional PEPS are known to efficiently represent a wide class of strongly correlated states~\cite{Cirac2021, PerezGarcia2008}, including states with power-law~\cite{Verstraete2006a} and topological correlations~\cite{Levin2005, Gu2009, Buerschaper2009}.

The importance of defining ensembles of random tensor network states for the purpose of exploring typical properties of physically relevant states has been recognized more than a decade ago~\cite{Garnerone2010}. MPS ensembles have been utilized to gain insights into, among other things, the typicality of expectation values of local observables~\cite{Garnerone2010, Garnerone2010a}, equilibration under Hamiltonian time evolution~\cite{Haferkamp2021}, the entropy of subsystems~\cite{Collins2013}, nonstabilizerness~\cite{Chen2022}, and, most relevant for this work, the behavior of correlations~\cite{Movassagh2017, Movassagh2022, Lancien2021, Bensa2023, Svetlichnyy2022}. In particular, correlation functions of random inhomogeneous MPS (that is, MPS whose local tensors can be different) were shown to exhibit almost surely an exponential decay~\cite{Movassagh2021, Movassagh2022}. A qualitatively similar behavior was observed also for correlation functions of translation-invariant MPS and PEPS with random Gaussian entries~\cite{Lancien2021}. Instead of incorporating the randomness directly at the level of states, one can also consider random local Hamiltonians and examine their ground states. The typical behavior of correlations for this case was found to depend on the nature of randomness, allowing for both long and short-range correlated states~\cite{Movassagh2017, Lemm2019, Jauslin2022}.

Here, we approach the problem of typical correlations in random MPS and PEPS from a more operational point of view. We introduce families of inhomogeneous random tensor network states that arise from a sequential generation in a quantum computer. Such ensembles are, by definition, directly connected to the study of quantum circuits with a sequential architecture and random gates, where each unitary gate is independently chosen randomly from the uniform (Haar) measure. In the one-dimensional case, every MPS admits such a preparation~\cite{Schoen2005}, where the bond dimension dictates the number of overlapping qudits between any two successive gates. In the two-dimensional setting, our ensemble can be understood as being uniform over the space of so-called isometric tensor network states (isoTNS)~\cite{Zaletel2020}, which are PEPS with given bond dimension. In this case, the resulting family of random circuits is composed of two-dimensional circuits with local overlapping gates, each resembling a tile acting on a neighborhood of qudits~\cite{Wei2022}. Although isoTNS are only a subfamily of PEPS, they are known to contain a rich variety of strongly correlated states such as topological models~\cite{Soejima2020}.

For the above ensembles, we study the scaling behavior of the average correlations between two subsystems as a function of their distance. We then utilize this average behavior of correlations to deduce the typical case via concentration inequalities. Instead of using well-known correlation functions, we perform the analysis using a measure of correlation arising from the Hilbert-Schmidt norm. Although, in a generic many-body setting, such a measure might have undesirable properties, we show that it is particularly suited in the context of tensor network states because it bounds the trace distance as well as all connected correlation functions. For MPS, we also consider the R\'{e}nyi-\( \alpha \) mutual information. Given a technical conjecture, we compute the average correlations for all integer values of \( \alpha \geq 1 \). We then use those results to retrieve the von Neumann mutual information~\cite{Wilde2013} via analytic continuation.

We confirm analytically the common intuition that inhomogeneous MPS typically exhibit exponentially decaying correlations. We show that a single common correlation length \( \xone \) persists among different measures of correlation. We obtain a similar quantitative conclusion for two-dimensional isoTNS, where we observe the emergence of a different correlation length \( \xtwo \) that is also consistent among different measures of correlation. \( \xone \) and \( \xtwo \) have a surprisingly weak dependence on the underlying bond dimension \( D \) of the tensor network, and both MPS and isoTNS remain short-range correlated for all values of \( D \). However, \( \xone \) and \( \xtwo \) have exactly opposite behaviors when the bond dimension is varied; \( \xone \) monotonically increases, while \( \xtwo \) monotonically decreases. Our findings also establish that exponentially decaying correlations are typical for the family of (inhomogeneous) isoTNS and consequently for the random states produced by the corresponding quantum circuit architecture.

While an exponential decay of correlations is not an unexpected result for MPS, the weak dependence of the average correlation length on the bond dimension is particularly surprising. Intuitively, one might expect that a higher bond dimension allows for a more effective spreading of correlations. Our results show that that is not the case for most MPS. The starkly different behavior of \( \xone \) and \( \xtwo \) is also remarkable. While increasing the bond dimension affects the typical spreading of correlations only modestly in one dimension, it is detrimental to the average correlation length in two dimensions.

The paper is structured as follows. In Sec.~\ref{sec:preliminaries}, we introduce our families of sequentially generated tensor network states and the main technical tools required to compute their average properties. In Sec.~\ref{sec:summary}, we summarize our results for both MPS and isoTNS. In Secs.~\ref{sec:mps_results} and \ref{sec:iso_results}, we, respectively, discuss our findings for MPS and isoTNS in detail. Lastly, we devote Sec.~\ref{sec:conclusion} to final observations and potential followups to our work.

\section{PRELIMINARIES} \label{sec:preliminaries}

In this section, we introduce the main technical concepts that will be needed throughout the paper. In Sec.~\ref{sec:sequential_generation}, we review the relevant families of sequentially generated tensor network states in one and two dimensions. Sec.~\ref{sec:measures} is devoted to the measures of correlation we are interested to estimate. In Sec.~\ref{sec:twirl}, we explain how to compute averages with respect to the Haar measure. Lastly, Sec.~\ref{sec:graphical} introduces the graphical notation we will use to present and prove our results.
 
\subsection{Tensor network states} \label{sec:sequential_generation}

In one dimension, the pre-eminent tensor network structure are matrix product states (MPS)~\cite{Cirac2021}. An \( n \)-particle MPS with open boundary conditions and local (physical) dimension \( d \) is given by \begin{align} \label{eq:mps_normal}
	\ket{\psi} = \sum_{i_1, \dots, i_n} \fv{L}{A_{i_1}^{(1)} \cdots A_{i_n}^{(n)}}{R} \ket{i_1 \cdots i_n},
\end{align} where \( A^{(j)} \in \mathbb{C}^{D \times D} \), \( \ket{L} \in \mathbb{C}^D \) is the left boundary condition, and \( \ket{R} \in \mathbb{C}^D \) is the right boundary condition. \( D \) is called the bond dimension of the MPS. A commonly used graphical notation for Eq.~\eqref{eq:mps_normal} is \begin{align}
	\ket{\psi} = \raisebox{-7pt}{\includegraphics{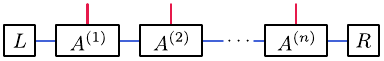}},
\end{align} where vertical (red) legs represent physical space indices \( \left ( \mathbb{C}^d \right ) \), and horizontal (blue) legs represent bond space indices \( \left ( \mathbb{C}^D \right ) \).

From its definition, it might not be evident how an MPS can be generated because each tensor \( A \) does not necessarily correspond to a physical process. However, the representation of an MPS in terms of tensors is not unique. This can be resolved by imposing a convenient canonical form~\cite{Schollwock2011}. Any MPS in such a canonical form can be seen as a state generated sequentially by applying unitary matrices \( U^{(1)}, \dots, U^{(n)} \in U (d D) \) to a product state initialized in \( \ket{0}^{\otimes n} \) for the physical space and \( \ket{0} \) for the bond space~\cite{Schoen2005}. The resulting state is given by \begin{align} \label{eq:mps_unitaries}
	\ket{\psi} = \raisebox{-32pt}{\includegraphics{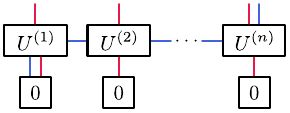}}.
\end{align} Note that the final site has dimension \( d D \), while all other sites have dimension \( d \). As we will see later, the final site will not play a significant role in our analysis, making its different dimension not an issue. In Fig.~\ref{fig:sequential_generation}~(a1), we sketch an equivalent representation of sequential generation in terms of isometries instead of unitary matrices. 

The family of MPS is thus equivalent to states resulting from quantum circuits that have a sequential architecture and act on input product states. The architecture is a consequence of the connectivity of the MPS network [see Fig.~\ref{fig:sequential_generation}~(a1)]. In this picture, larger bond dimensions translate to wider gates, each acting on \( 1 + \lceil \log_d (D) \rceil \)qudits. For example, for \( D = d^2 \), one has \begin{align}
	\ket{\psi} & = \raisebox{-32pt}{\includegraphics{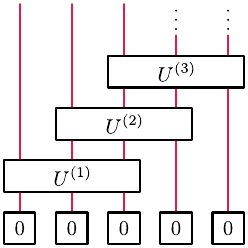}} \\
	& = \raisebox{-32pt}{\includegraphics{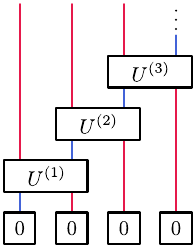}}.
\end{align} Naturally, using this correspondence, all of our results can be translated to the language of quantum circuits with the described architecture.

Projected entangled-pair states (PEPS) are the generalization of MPS to two (or more) dimensions~\cite{Cirac2021}. Because no simple generalization of the sequential generation of MPS to arbitrary PEPS is known, we restrict ourselves to the rich family of two-dimensional isometric tensor network states (isoTNS), which were first defined in Ref.~\cite{Zaletel2020} (see also Ref.~\cite{Haghshenas2019}).

Much like MPS, isoTNS can be generated sequentially by applying unitary matrices to a product state initialized in \( \ket{0}^{\otimes m n} \) for the physical space~\cite{Wei2022}, where \( m \) denotes the number of rows and \( n \) the number of columns of the underlying rectangular lattice. We will use the sequential generation sketched in Fig.~\ref{fig:sequential_generation}~(b1), which is a generalization of the one proposed in Ref.~\cite{Wei2022}. Each box corresponds to an isometry that arises from a unitary matrix \( U^{(i, j)} \in U \left ( d D^2 \right ) \). In particular, isometries in the bulk can be drawn as \begin{align}
    \raisebox{-24pt}{\includegraphics{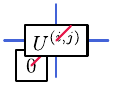}},
\end{align} as indicated in Fig.~\ref{fig:sequential_generation}~(b2). The diamond in Fig.~\ref{fig:sequential_generation}~(b1) indicates the so-called orthogonality center of the isoTNS. Its row and column constitute the orthogonality hypersurface, which can be treated like an MPS. That is, if an operator is supported only on the orthogonality hypersurface, its expectation value with respect to the isoTNS reduces to that of the underlying MPS~\cite{Zaletel2020}. Although isoTNS of a given bond dimension form by definition only a subclass of PEPS, they are known to contain states with a rich structure of correlations, such as topological models~\cite{Soejima2020}. On top, their properties make isoTNS a suitable candidate for studying correlations analytically, which is otherwise a generally challenging task in more than one dimension.

IsoTNS correspond to quantum circuits on a two-dimensional grid with local overlapping gates, which now resemble tiles. Increasing the bond dimension translates to larger tile sizes and overlaps, as in the MPS case. The corresponding architecture is dictated by the connectivity of the isoTNS network [see Fig.~\ref{fig:sequential_generation}~(b1)], and it is tedious (although straightforward), which is why we refer the reader to Ref.~\cite{Wei2022} for details.

\subsection{Quantifying correlations} \label{sec:measures}

Correlations express that knowledge about one subsystem can convey information about another. They are quantified by different measures that frequently arise from an information-theoretic perspective and are based on operationally motivated tasks. A prime example is the von Neumann mutual information~\cite{Wilde2013} \begin{align}
    I (A : B) = S (\varrho_A) + S (\varrho_B) - S (\varrho_{A B}),
\end{align} where \begin{align}
	S (\varrho) = - \tr [\varrho \log (\varrho)]
\end{align} is the von Neumann entropy. \( A \) and \( B \) are two disjoint subsystems of a larger system, and \( \varrho_A \) and \( \varrho_B \) denote the marginals of \( \varrho_{A B} \). The von Neumann mutual information captures the total (classical and quantum) amount of correlations between \( A \) and \( B \) as it is equal to the minimum rate of randomness required to asymptotically turn \( \varrho_{A B} \) into a product state~\cite{Groisman2005}. It is also a nonnegative quantity and nonincreasing under local operations~\cite{Wilde2013}, both of which are desirable properties for a measure of correlation. The latter means that a quantum channel~\cite{Wilde2013} acting on \( A \) or \( B \) alone (for example, by discarding part of a subsystem) cannot increase \( I (A : B) \). Unfortunately, the analytical treatment of the von Neumann mutual information is impractical because computing the logarithm of \( \varrho \) generally requires the knowledge of its full spectrum.

To overcome this issue, an alternative is to consider a particular R\'{e}nyi-\( \alpha \) generalization of the mutual information \begin{align} \label{eq:renyi}
	I_\alpha (A : B) = S_\alpha (\varrho_A) + S_\alpha (\varrho_B) - S_\alpha (\varrho_{A B}),
\end{align} where \begin{align}
	S_\alpha (\varrho) = \frac{1}{1 - \alpha} \log [\tr (\varrho^\alpha)]
\end{align} is the R\'{e}nyi-\( \alpha \) entropy. As is apparent from the definition, for integer values of \( \alpha \), its evaluation is considerably simpler. The R\'{e}nyi-\( \alpha \) mutual information has been investigated in the context of conformal field theories~\cite{Asplund2014, Agon2016, Chen2019}, free fermions~\cite{Bernigau2015}, and quantum dynamics~\cite{Murciano2022, McGinley2022}. We will use later that the limit of \( \alpha \to 1 \) of \( I_\alpha (A : B) \) recovers the von Neumann mutual information. The mentioned positive aspects notwithstanding, unlike the von Neumann mutual information, Eq.~\eqref{eq:renyi} does not arise from a (generalized) divergence~\cite{Khatri2020}, and the R\'{e}nyi-\( \alpha \) mutual information can be negative~\cite{Kormos2017, Scalet2021} and increasing under local operations. It is thus hard to interpret it as a proper measure of correlation in general. Nevertheless, for certain families of initial states (see, for example, Ref.~\cite{McGinley2022}), monotonicity and nonnegativity can be restored. Henceforth, we will mostly focus on the case of \( \alpha = 2 \), but we will also consider an analytic continuation on positive integer values of \( \alpha \). As we will show, in the present context of tensor network states, the case of \( \alpha = 2 \) appropriately captures the decay of correlations at large distances between subsystems \( A \) and \( B \) with little effort.

In addition to the previous quantities, we would also like to probe the trace distance \begin{align}
    T (A : B) = \frac{1}{2} \left \| \varrho_{A B} - \varrho_A \otimes \varrho_B \right \|_1,
\end{align} where \( \| \cdot \|_p \) denotes the Schatten \( p \)-norm~\cite{Watrous2018}. For an operator \( X \), the Schatten \( p \)-norm is given by \begin{align}
    \| X \|_p = \tr \left [ \left ( X^\dagger X \right )^{p / 2} \right ]^{1 / p}.
\end{align}  \( T (A : B) \) has a well-known operational interpretation as it quantifies the optimal distinguishability between \( \varrho_{A B} \) and the product of its marginals \( \varrho_A \otimes \varrho_B \) by a two-element generalized global measurement~\cite{Nielsen2012}. Moreover, the trace distance upper bounds the (properly normalized) connected correlation function~\cite{Khatri2020}: \begin{align}
    T (A : B) \geq 2 \frac{\left | \ev{M_A \otimes M_B} - \ev{M_A} \ev{M_B} \right |}{\left \| M_A \right \|_\infty \left \| M_B \right \|_\infty}
\end{align} Although the bound can be tight, the two quantities are different whenever product measurements are ineffective in distinguishing \( \rho_{A B} \) from \( \rho_A \otimes \rho_B \), a fact used in quantum data hiding~\cite{DiVincenzo2002}.

As one expects from its operational interpretation, the trace distance satisfies the monotonicity property under local operations~\cite{Nielsen2012}. However, \( T (A : B) \) is usually hard to compute exactly. We will now argue that investigating \begin{align}
	N (A : B) & = \left \| \varrho_{A B} - \varrho_A \otimes \varrho_B \right \|_2^2
\end{align} meaningfully probes \( T (A : B) \) for tensor network states with constant (that is, size-independent) bond dimension, all while being much simpler to treat.

In general, for mixed many-body states, the two measures can have vast disagreement because it holds~\cite{Watrous2018} that \begin{align} \label{eq:schatten_bound}
    \| X \|_2 \leq \| X \|_1 \leq \sqrt{\rank (X)} \| X \|_2.
\end{align} Both bounds are tight, and the upper bound is saturated for \( X \propto I \). As such, for arbitrary mixed states of an exponentially large Hilbert space, the factor \( \rank (X) \) can render the upper bound useless. Crucially, in this work, we investigate (random) tensor network states with fixed bond dimension \( D \). Let \( \partial R \) denote the boundary of a system \( R \) and \( | \partial R | \) its size (number of sites). The ranks of \( \varrho_A \) and \( \varrho_B \) are, respectively, upper bounded by \( D^{| \partial A |} \) and \( D^{| \partial B |} \), and that of \( \varrho_{A B} \) is upper bounded by \( D^{| \partial A | + | \partial B |} \)~\cite{Cirac2021}. Thus, \begin{align}
	\rank \left ( \varrho_{A B} - \varrho_A \otimes \varrho_B \right ) \leq 2 D^{| \partial A | + | \partial B |},
\end{align} yielding the bound \begin{align} \label{eq:iso_norm_1}
   \frac{1}{2} \sqrt{N (A : B)} \leq T (A : B) \leq \sqrt{\frac{D^{| \partial A | + | \partial B |}}{2}} \sqrt{N (A : B)}.
\end{align}

For MPS (one dimension) with connected subsystems \( A \) and \( B \), the bound reads \begin{align} \label{eq:mps_norm_1}
   \frac{1}{2} \sqrt{N (A : B)} \leq T (A : B) \leq \frac{D^2}{\sqrt{2}} \sqrt{N (A : B)}.
\end{align} That is, the bound is independent of the sizes of \( A \) and \( B \), unlike in the generic case of Eq.~\eqref{eq:schatten_bound}. This suggests that, for reasonably small bond dimension, \( N (A : B) \) is a reliable probe of correlations [as quantified by \( T (A : B ) \)] between subsystems \( A \) and \( B \). We will expand on this point later.

\subsection{\texorpdfstring{\( k \)}{k}-fold twirl}
\label{sec:twirl}

Let \( X \) be an operator acting on \( \left ( \mathbb{C}^q \right )^{\otimes k} \). The \( k \)-fold twirl of \( X \) with respect to the Haar measure on the unitary group \( U (q) \) is defined~\cite{Collins2006, Roberts2017, Brandao2021} as \begin{align} \label{eq:twirl_before}
	\mathcal{T}_U^{(k)} (X) = \int \mathrm{d} U \, U^{\otimes k} X \left ( U^\dagger \right )^{\otimes k}.
\end{align} One can employ the Schur-Weyl duality for unitary groups to show~\cite{Collins2003, Collins2006} that \begin{align} \label{eq:twirl_after}
	\mathcal{T}_U^{(k)} (X) = \sum_{\sigma, \tau \in S_k} \Wg \left ( \sigma \tau^{- 1}, q \right ) P_\sigma^{(q)} \tr \left [ X \left ( P_\tau^{(q)} \right )^T \right ],
\end{align} where \begin{align}
	P_\pi^{(q)} : v_1 \otimes \cdots \otimes v_k \mapsto v_{\sigma^{- 1} (1)} \otimes \cdots \otimes v_{\sigma^{- 1} (k)}
\end{align} is the representation of \( \pi \in S_k \) on \( \left ( \mathbb{C}^q \right )^{\otimes k} \), where \( S_k \) is the symmetric group. \( \Wg \left ( \sigma \tau^{- 1}, q \right ) = \left ( G^{- 1} \right )_{\sigma \tau} \)~\footnote{Although the Weingarten matrix \( W = G^{- 1} \) exits only if \( k \leq q \)~\cite{Collins2021}, the Weingarten function can easily be extended to \( k > q \)~\cite{Collins2006}.} is the Weingarten function, where \( G \in \mathbb{R}^{k! \times k!} \) denotes the Gram matrix whose entries are given by \begin{align} \label{eq:gram}
	G_{\sigma \tau} = \tr \left [ P_\sigma^{(q)} \left ( P_\tau^{(q)} \right )^T \right ] = q^{\# \left ( \sigma \tau^{- 1} \right )}.
\end{align} Here, \( \# (\pi) \) counts the number of cycles in the decomposition of \( \pi \in S_k \) into disjoint cycles. Thus, \( \Wg (\pi, q) \) depends only on the conjugacy class of \( \pi \)~\cite{Collins2003}. In Appendix~\ref{app:twirl}, we show how to obtain Eq.~\eqref{eq:twirl_after} from Eq.~\eqref{eq:twirl_before} by using a result of Ref.~\cite{Collins2003}.

\subsection{Graphical notation} \label{sec:graphical}

In this section, we introduce the graphical notation used throughout this paper. To keep the images compact, we employ the operator-vector correspondence. Let \( \{ \ket{i} \} \) denote the standard basis of \( \mathbb{C}^q \). Then, the operator-vector correspondence~\cite{Watrous2018} is defined by \begin{align}
	\fec (\ketbra{i}{j}) = \ket{i} \otimes \ket{j}
\end{align} and extended linearly to the vector space at large.

Because we consider the standard (product) basis to be fixed, we do not distinguish between tensors (as multidimensional arrays) and their basis-independent counterparts (such as vectors and operators). Let \( X \) be an operator acting on \( \left ( \mathbb{C}^q \right )^{\otimes k} \). Using the operator-vector correspondence, we denote it by \begin{align}
	\raisebox{-7pt}{\includegraphics{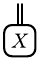}} = \fec (X).
\end{align} Note that the orientation of the legs does not have any meaning in our images. That is, \begin{align}
	\vcenter{\hbox{\includegraphics{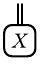}}} = \vcenter{\hbox{\includegraphics{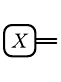}}} = \vcenter{\hbox{\includegraphics{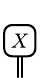}}} = \vcenter{\hbox{\includegraphics{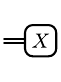}}}.
\end{align} When we need the transpose of an operator, we will explicitly use \begin{align}
	\raisebox{-7pt}{\includegraphics{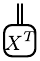}} = \fec \left ( X^T \right ).
\end{align} As such, when we contract two operators \( X \) and \( Y \), we mean the trace of their product: \begin{align}
	\vcenter{\hbox{\includegraphics{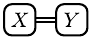}}} = \tr (X Y)
\end{align}

Let us state the two most prominent operators we will encounter. We will see \begin{align}
	\raisebox{-7pt}{\includegraphics{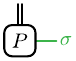}} = \fec \left ( P_{\sigma}^{(q)} \right ),
\end{align} where the horizontal (green) leg is permutation valued, and \begin{align}
	\raisebox{-7pt}{\includegraphics{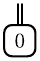}} = \fec \left ( \ketbra{0}{0}^{\otimes k} \right ).
\end{align} Their relevant contractions are \begin{align} \label{eq:permutation_contraction}
    \vcenter{\hbox{\includegraphics{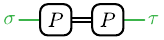}}} = \tr \left ( P_\sigma^{(q)} P_\tau^{(q)} \right ) = q^{\# (\sigma \tau)}
\end{align} and \begin{align} \label{eq:zero_contraction}
	\vcenter{\hbox{\includegraphics{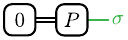}}} = \tr \left ( \ketbra{0}{0}^{\otimes k} P_\sigma^{(q)} \right ) = 1.
\end{align}

Moving forward, we will not explicitly write the operator \( \fec \), as it shall be clear from the context.

With the definition of the Weingarten matrix, \begin{align}
	\vcenter{\hbox{\includegraphics{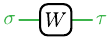}}} = \Wg \left ( \sigma \tau^{- 1}, q \right ),
\end{align} we can then write the \( k \)-fold twirl [see Eq.~\eqref{eq:twirl_after}] as \begin{align} \label{eq:twirl_graphical}
	\mathcal{T}_U^{(k)} (X) = \raisebox{-32pt}{\includegraphics{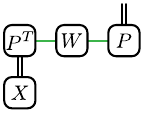}},
\end{align} where the contraction of two green legs corresponds to a summation over the permutations of \( S_k \).

\section{SUMMARY OF RESULTS} \label{sec:summary}

\begin{figure}
	\centering
	\includegraphics{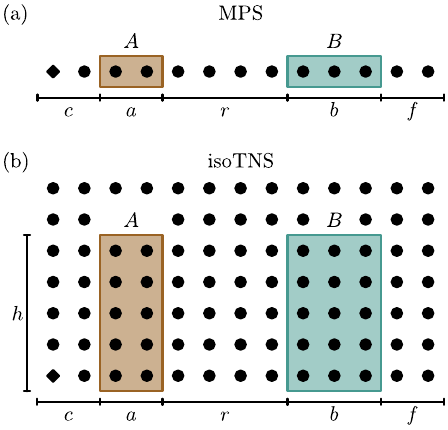}
	\caption{We investigate average correlations between two subsystems \( A \) and \( B \) as a function of their (horizontal) distance \( r \). \( A \) and \( B \) respectively stretch across \( a \) and \( b \) consecutive (horizontal) sites. (a) The diamond indicates the origin of the sequential generation of the MPS. (b) In addition to indicating the origin of the sequential generation, the diamond also indicates the orthogonality center of the isoTNS. For now, we restrict ourselves to \( A \) and \( B \) that touch the orthogonality hypersurface and stretch across \( h \) consecutive vertical sites.}
	\label{fig:setup}
\end{figure}

In this paper, we analyze the average behavior of correlations in random tensor network states. Through the average, we obtain conclusions about the generic case. Our work focuses on the disordered case, that is, the case where each local tensor is independent. Our setting can be equivalently understood as an investigation of correlations in states resulting from quantum circuits with a sequential architecture and random gates.

In one dimension, generic MPS are known to exhibit exponentially decaying correlations in the translation-invariant case~\cite{PerezGarcia2007}. This is due to the fact that injectivity is a generic property~\cite{Cirac2021}. On the other hand, injectivity alone is not enough to guarantee an exponential decay of correlations for an inhomogeneous sequence of tensors. Nevertheless, the exponential decay of correlations is widely expected to persist without translational invariance, but it has never been rigorously studied so far in this setting.

In two (or more) dimensions, the landscape of correlations is much richer. For instance, already in two dimensions, certain PEPS corresponding to thermal states of classical models are known to exhibit power-law correlations~\cite{Verstraete2006a}. Moreover, prominent topological states, such as quantum double models~\cite{Kitaev2003} (which include the toric code) and string-net models~\cite{Gu2009, Buerschaper2009}, admit a description in terms of PEPS. On the other hand, for translation-invariant PEPS whose tensors' entries are drawn from a Gaussian measure, it is known that correlations typically decay exponentially~\cite{Lancien2021}.

Computing correlations in higher-dimensional systems usually poses a significant challenge because they can be mediated through different paths connecting the two subsystems of interest. Here, we restrict our analysis to two-dimensional isoTNS. This rich class of tensor network states is relevant in both the analytical and the numerical context~\cite{MacCormack2021, Slattery2021, Hauru2021, Tepaske2021, Haghshenas2022}, all while admitting a simple physical interpretation through sequential generation [see Fig.~\ref{fig:sequential_generation}~(b1)]. Moreover, its mathematical properties make the analytical study of correlations in two dimensions tractable.

For isoTNS, it is expected that correlations between two subsystems decay exponentially if they are both on the orthogonality hypersurface because the calculation reduces to the contraction of an MPS~\cite{Zaletel2020}. Nonetheless, isoTNS can represent a rich variety of topological models, as all string-net models admit an exact and explicit description in terms of isoTNS~\cite{Soejima2020} (on the appropriate underlying lattice). This motivates us to study the typical behavior of correlations in isoTNS, particularly between subsystems that extend beyond the orthogonality hypersurface.

To investigate the decay of correlations in our two families of random tensor network states, one must specify the ensembles to draw from. Here we adopt an operational perspective and relate our measures of randomness directly to the sequential generation process. Because that is defined with respect to isometries, one can incorporate randomness at the level of the underlying unitary matrices. A natural choice is to draw each unitary matrix from the Haar measure on the appropriate unitary group. This approach was introduced for MPS in Ref.~\cite{Garnerone2010} (see also Ref.~\cite{Garnerone2010a}), and it can directly be applied to higher-dimensional tensor network states that admit a sequential generation, such as isoTNS, yielding normalized states by construction. Although one can sample random translation-invariant states with this method, we investigate the disordered case by drawing each unitary matrix independently from the Haar measure.

Because we are interested in the decay of correlations, we focus on computing average correlations between two subsystems \( A \) and \( B \) as a function of their distance \( r \). For random MPS and isoTNS, we consider subsystems \( A \) and \( B \) as sketched in Fig.~\ref{fig:setup}~(a) and Fig.~\ref{fig:setup}~(b), respectively. In both cases, \( A \) and \( B \) stretch across \( a \) and \( b \) consecutive (horizontal) sites. In Fig.~\ref{fig:setup}~(b), \( A \) and \( B \) touch the orthogonality hypersurface and stretch across \( h \) vertical sites. We will relax this condition later.

For all of the measures of correlation we study, we find that the average with respect to the considered ensemble of states decays exponentially. We formalize this type of behavior in Definition~\ref{def:definition}.

\begin{definition} \label{def:definition}
    Let \( \mathcal{C} (A : B) \) denote a measure of correlation. We say that the average of \( \mathcal{C} (A : B) \) with respect to a given ensemble of random states decays exponentially if \begin{align}
        \mathbb{E} \mathcal{C} (A : B) = K \exp \left ( - \frac{r}{\xi} \right ) + O \left [ \exp \left ( - \frac{r}{\chi} \right ) \right ],
    \end{align} where \( K \) is constant with respect to \( r \), and \( \xi > \chi \) is the average correlation length for \( \mathcal{C} (A : B) \).
\end{definition}

Remarkably, we find that a single average correlation length persists throughout the different families of measures of correlation and that it depends only on the underlying spatial dimension. We later pinpoint the origin of this behavior to the invariance of a spectral gap of a family of transfer matrices. \begin{subequations} \label{eq:lenghts}
    For MPS, \begin{align} \label{eq:mps_length}
        \xi =  \left [ \log \left ( \frac{d^2 D^2 - 1}{d D^2 - d} \right ) \right ]^{- 1} \equiv \xone,
    \end{align} and for isoTNS, \begin{align} \label{eq:iso_length}
        \xi = \left [ \log \left ( \frac{d^2 D^4 - 1}{d D^3 - d D} \right ) \right ]^{- 1} \equiv \xtwo.
    \end{align}
\end{subequations}

Note that the average correlation length for MPS coincides with that for isoTNS for \( d \to d D \). This seemingly small modification will prove to change the qualitative behavior substantially.

Before moving to the detailed presentation of our methods and results, we briefly comment on the considered measures of correlation and the implications of our findings, first for MPS and then for isoTNS.

\subsection{Results in one dimension (MPS)} \label{sec:mps_summary}

In one dimension, we compute the averages of the R\'{e}nyi-\( 2 \) mutual information \( I_2 (A : B) \), the \( 2 \)-norm expression \( N (A : B) \), and the von Neumann mutual information \( I (A : B) \) (see Sec.~\ref{sec:measures} for the definitions of the measures of correlation), with subsystems \( A \) and \( B \) as sketched in Fig~\ref{fig:setup}~(a).

We find that the averages decay exponentially as specified in Definition~\ref{def:definition} with the same correlation length \( \xone \) (see Results~\ref{res:mps_renyi}, \ref{res:mps_norm}, and \ref{res:mps_neumann}). The derivation for \( I (A : B) \) relies on a technical conjecture (see Conjecture~\ref{con:lambda_2}), which we will discuss in detail later. In addition, we show that the same conjecture is enough to assert that \( \xone \) is also the average correlation length for \( I_\alpha (A : B) \) for any integer value of \( \alpha \geq 1 \) (see Corollary~\ref{cor:mps_alpha}). In short, the same average correlation length \( \xone \) persists across different measures of correlation.

Interestingly, \( \xone \) depends only weakly on the bond dimension \( D \). In particular, for \( d, D \geq 2 \), \begin{align} \label{eq:mps_monotonicity_1}
    \xone = \left \{ \log \left [ \frac{d}{\zeta_{1 \mathrm{D}} (d, D)} \right ] \right \}^{- 1}
\end{align} with \begin{align} \label{eq:mps_monotonicity_2}
    \frac{4}{5} \leq \zeta_{1 \mathrm{D}} (d, D) < 1
\end{align} is monotonically increasing with \( D \).

Because we are concerned with random tensor network states, \( \xone \) is obtained after averaging over realizations. It is then natural to ask if exponentially decaying correlations are typical and, if so, what is the typical correlation length for an individual realization. This motivates the investigation of the concentration of the distribution around its average. To that end, we will show that it is exponentially unlikely in \( r \) that \( N (A : B) \) and \( I (A : B) \) decay slower than with \( \xone \) (see Corollaries~\ref{cor:mps_norm_concentration} and \ref{cor:mps_neumann_concentration}). Our result for \( N (A : B) \) allows us to deduce that the average of the trace distance \( T (A : B) \) decays at least exponentially with correlation length \( \xi \leq 2 \xone \), and it leads to a concentration result for \( T (A : B) \) (see Corollary~\ref{cor:mps_trace_concentration}).

As stated before, MPS are known to exhibit exponentially decaying correlations~\cite{PerezGarcia2007}. However, our finding that the average correlation length is almost independent of the bond dimension is novel and implies that long-range correlated states are very atypical members of the ensemble.

\subsection{Results in two dimensions (isoTNS)} \label{sec:iso_summary}

In two dimensions, we compute the averages of the R\'{e}nyi-\( 2 \) mutual information \( I_2 (A : B) \) and the \( 2 \)-norm expression \( N (A : B) \), where subsystems \( A \) and \( B \) are sketched in Fig~\ref{fig:setup}~(b).

As in one dimension, we find that the averages decay exponentially as specified in Definition~\ref{def:definition} with the same average correlation length \( \xtwo \) (see Results~\ref{res:iso_renyi} and \ref{res:iso_norm}).

The correlation length \( \xtwo \) displays a qualitatively different dependence on the bond dimension \( D \). In particular, for \( d, D \geq 2 \), \begin{align}
    \xtwo = \left \{ \log \left [ \frac{d}{\zeta_{2 \mathrm{D}} (d, D)} \right ] \right \}^{-1}
\end{align} with \begin{align}
    0 < \zeta_{2 \mathrm{D}} (d, D) \leq \frac{8}{21}
\end{align} is monotonically decreasing with \( D \), in contrast to its one-dimensional counterpart. As such, the largest correlation length is achieved for \( D = 2 \).

For \( N (A : B) \), we can extend the applicability of our results to any size and shape of subsystems \( A \) and \( B \); we find the decay to be at least exponential with correlation length \( \xi = \xtwo \) (see Corollary~\ref{cor:iso_norm}). We furthermore prove a concentration result for \( N (A : B) \) expressing that it is highly unlikely that \( N (A : B) \) decays slower than with \( \xtwo \) (see Corollary~\ref{cor:iso_norm_concentration}). This also allows us to draw a similar conclusion about the behavior of \( T (A : B) \) (see Corollary~\ref{cor:iso_trace_concentration}).

While it is expected for isoTNS that correlations decay exponentially if they are both on the orthogonality hypersurface~\cite{Zaletel2020}, our finding that the exponential decay persists for subsystems in the bulk is novel. Moreover, the different behavior of \( \xtwo \) as compared to \( \xone \) is particularly surprising; for typical isoTNS, increasing the bond dimension reduces the average correlation length.

\subsection{Further discussion} \label{sec:discussion}

Here, we discuss three aspects of our results on a more qualitative level: the absence of large average correlation lengths, the opposite dependence of said correlation lengths on the bond dimension in one and two dimensions, and the effect of blocking sites together.

From the analytical expressions for \( \xone \) and \( \xtwo \) [see Eq.~\eqref{eq:lenghts}], it follows that, for a fixed physical dimension \( d \), both average correlation lengths are upper bounded by a constant, irrespective of the bond dimension \( D \). In one dimension, the sequential-generation scheme corresponds to actions of unitary matrices on two neighboring \( d \)-dimensional sites together with a \( D \)-dimensional system. The unitary matrices act as an information carrier, conveying information from one physical qudit to another through interaction. Our results shows that, when these unitary matrices are random and independent from each other, the resulting correlations are, on average, weak.

The average correlation lengths for isoTNS can be obtained from the one for MPS by replacing \( d \to d D \). This simple fact, however, leads to \( \xtwo \) being a decreasing function of \( D \), in contrast to \( \xone \). To explain this feature, we use the fact (which we will prove later) that \( \xtwo \) is independent of the vertical extent \( h \) of subsystems \( A \) and \( B \). As a result, we can gain sufficient insight from the limiting case of a \( 1 \times n \) isoTNS, which corresponds to a certain MPS graphically given by \begin{align}
    \psi = \raisebox{-32pt}{\includegraphics{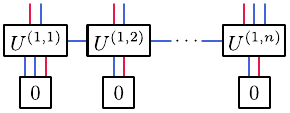}}.
\end{align} Importantly, this MPS has local dimension \( d D \), explaining why the average correlation lengths coincide for \( d \to d D \). An increase of the bond dimension of the isoTNS thus corresponds to an increase of the local dimension of the corresponding MPS, which dominates the behavior of \( \xone \).

Finally, we comment on the effect of blocking sites together. It holds that \begin{align}
    \xone = \frac{1}{\log (d)} + O \left ( \frac{1}{D^2} \right ),
\end{align} implying that \( \xone \approx 1 / \log (d) \) for \( D \gg 1 \). This suggests the following scale invariance property, valid in the context of this approximation. Consider two families \( \ket{\psi} \) and \( \ket{\widetilde \psi} \) of random MPS with \( \widetilde{d} = d^q \) and \( \widetilde{D} = D \). Graphically, for \( q = 2 \), \begin{align}
	\ket{\psi} = \raisebox{-32pt}{\includegraphics{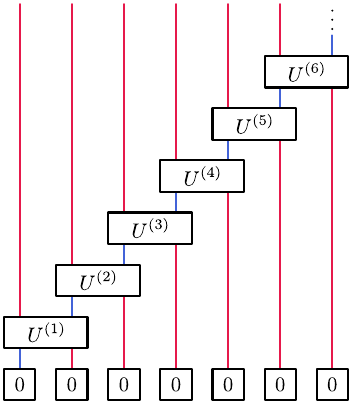}}
\end{align} and \begin{align}
    \ket{\widetilde{\psi}} = \raisebox{-32pt}{\includegraphics{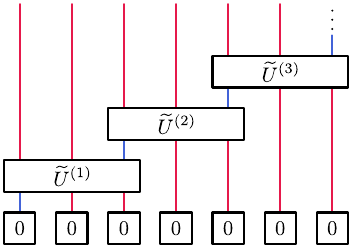}}.
\end{align} Our simple observation is that, although it holds that \( \widetilde{\xi} = \xi / q \), the effective distances (that is, the number of unitary matrices separating any two physical sites) are also rescaled as \( \widetilde{r} = r / q  \). Consequently, the circuit architectures generating \( \ket{\psi} \) and \( \ket{\widetilde \psi} \) have scale-invariant (average) correlations within the approximation \( \xone \approx 1 / \log (d) \).

\section{CORRELATIONS IN ONE DIMENSION} \label{sec:mps_results}

In this section, we state and discuss the results for random MPS summarized in Sec.~\ref{sec:mps_summary} in more detail. Before doing that, we develop the tools behind our proofs in Secs.~\ref{sec:mps_transfer} and \ref{sec:mps_correlations}. In Sec.~\ref{sec:mps_renyi}, we compute the average of \( I_2 (A : B) \), and in Sec.~\ref{sec:mps_norm}, we investigate the decays of \( N (A : B) \) and \( T (A : B) \). Finally, we discuss the behavior of \( I (A : B) \) in Sec.~\ref{sec:mps_neumann}.

When computing the averages of measures of correlation for random MPS, we will exploit a simplification with respect to the scenario depicted in Fig.~\ref{fig:setup}~(a). Instead of allowing for an arbitrary number of sites before subsystem \( A \), we prove our statements in the limit of \( c \to \infty \). As we show in Appendix~\ref{app:mps_neumann_c}, this does not constitute a limitation because the \( c \) initial sites do not affect the decay of correlations and, therefore, neither the average correlation length \( \xone \). Furthermore, we will see that the \( f \) sites after subsystem \( B \) do not play a role in the computation of average correlations, as it is expected for any sequentially generated state.

\subsection{Transfer matrices} \label{sec:mps_transfer}

The key challenge for computing the average of each measure of correlation will be evaluating multiple expressions of the form \begin{align} \label{eq:mps_caffe}
	\tr \left ( P \mathbb{E} \ketbra{\psi}{\psi}^{\otimes k} \right ),
\end{align} where \begin{align} \label{eq:mps_p}
    P & = \left ( P_e^{(d)} \right )^{\otimes c} \otimes \left ( P_\alpha^{(d)} \right )^{\otimes a} \otimes \left ( P_e^{(d)} \right )^{\otimes r} \nonumber \\
    & \hphantom{{} = {}} \otimes \left ( P_\beta^{(d)} \right )^{\otimes b} \otimes \left ( P_e^{(d)} \right )^{\otimes (f - 1)} \otimes P_e^{(d D)}.
\end{align} The permutation \( \alpha \in S_k \) acts on the sites comprising subsystem \( A \), while \( \beta \in S_k \) acts on the sites comprising \( B \). The exact forms of \( \alpha \) and \( \beta \) as well as the number of required replicas \( k \) depend on the considered measure of correlation and will be specified later. It shall also become clear why sites belonging to neither \( A \) nor \( B \) are acted upon by the trivial permutation \( e \in S_k \). In the following, we show that Eq.~\eqref{eq:mps_caffe}, for random MPS, reduces to multiplying matrices \( T_\rho \in \mathbb{R}^{k! \times k!} \) with \( \rho \in S_k \) whose definition will be natural. Because their role is analogous to the known transfer matrices mediating correlations, we will also adopt this term here.

Before introducing the transfer matrices, we must analyze \( \mathbb{E} \ketbra{\psi}{\psi}^{\otimes k} \). To that end, let us define \( V^{(j)} = U^{(j)} \otimes \overline{U^{(j)}} \). Then, by Eq.~\eqref{eq:mps_unitaries}, \begin{align}
	\ketbra{\psi}{\psi} = \raisebox{-32pt}{\includegraphics{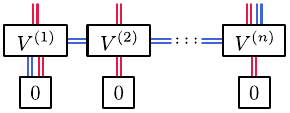}}
\end{align} and \begin{align}
	& \ketbra{\psi}{\psi}^{\otimes k} \nonumber \\
	& \quad = \raisebox{-39pt}{\includegraphics{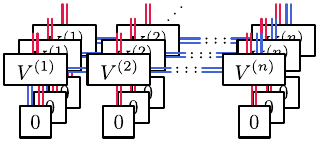}}.
\end{align}

By computing the \( k \)-fold twirl [see Eq.~\eqref{eq:twirl_graphical}], we obtain the building block \begin{align}
	\raisebox{-7pt}{\includegraphics{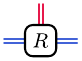}} & = \int \mathrm{d} U^{(j)} \, \raisebox{-39pt}{\includegraphics{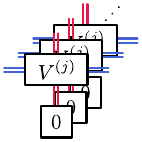}} \label{eq:mps_r} \\
	& = \raisebox{-7pt}{\includegraphics{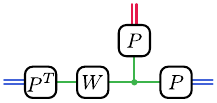}},
\end{align} where the (green) dot represents a Kronecker \( \delta \) on three permutation indices. Note that we have not drawn the contraction of a permutation matrix with \( \ketbra{0}{0}^{\otimes k} \) because it is trivial by Eq.~\eqref{eq:zero_contraction}. The average of a random MPS is then given by \begin{align}
	\mathbb{E} \ketbra{\psi}{\psi}^{\otimes k} = \raisebox{-7pt}{\includegraphics{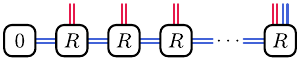}}.
\end{align}

We could, in principle, work with the building block above. However, it is not convenient to have dangling bond (blue) legs whose dimension grows with \( D \). By cutting permutation-valued (green) legs instead, we obtain a building block with fixed dimension for fixed \( k \). With that building block, the average of a random MPS is given by \begin{align}
	\mathbb{E} \ketbra{\psi}{\psi}^{\otimes k} = \raisebox{-7pt}{\includegraphics{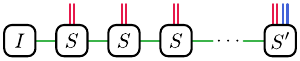}}.
\end{align}

The entries of the initial vector \( \bra{I_k} \in \mathbb{R}^{k!} \) are given by \begin{align}
	\raisebox{-7pt}{\includegraphics{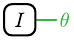}} = \raisebox{-7pt}{\includegraphics{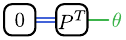}} = 1,
\end{align} where we have used Eq.~\eqref{eq:zero_contraction}. The tensors in the bulk are given via \begin{align}
	\raisebox{-7pt}{\includegraphics{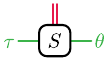}} & = \raisebox{-7pt}{\includegraphics{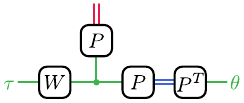}},
\end{align} and the final tensor is given via \begin{align}
	\raisebox{-7pt}{\includegraphics{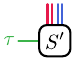}} & = \raisebox{-7pt}{\includegraphics{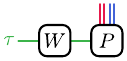}}.
\end{align}

Computing an expression of the form of Eq.~\eqref{eq:mps_caffe} corresponds to contracting each tensor \( S \) with \( P_\rho^{(d)} \), which leads us to the promised definition of a transfer matrix \( T_\rho \in \mathbb{R}^{k! \times k!} \). Using Eq.~\eqref{eq:permutation_contraction}, its entries are given by \begin{align}
	& \vcenter{\hbox{\includegraphics{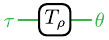}}} \nonumber \\
	& \qquad = \raisebox{-7pt}{\includegraphics{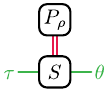}} = \raisebox{-7pt}{\includegraphics{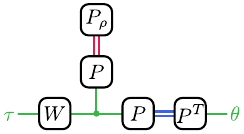}} \label{eq:gelato} \\
	& \qquad = \sum_{\sigma \in S_k} \Wg \left ( \sigma \tau^{- 1}, d D \right ) d^{\# \left ( \sigma \rho \right )} D^{\# \left ( \sigma \theta^{- 1} \right )}. \label{eq:mps_fame}
\end{align} We define \( T_\rho \) with respect to the basis defined by the map \( s_i \mapsto e_i \), where \( s_i \) is the \( i \)th element of \( S_k = \left \{ s_1, \dots, s_{k!} \right \} \), and \( \{ e_i \} \) is the standard basis of \( \mathbb{R}^{k!} \). In Appendix~\ref{app:conjecture}, we find that \( T_\rho \) is block triangular if the elements of \( S_k \) are ordered in a certain way.

As alluded to in Eq.~\eqref{eq:mps_p}, the final tensor \( S' \) will be contracted with the trivial permutation \( e \in S_k \) in our computations. The final vector \( \ket{F_k} \in \mathbb{R}^{k!} \) is thus defined via \begin{align}
	\raisebox{-7pt}{\includegraphics{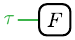}} & = \raisebox{-7pt}{\includegraphics{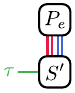}} = \raisebox{-7pt}{\includegraphics{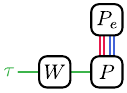}} \\
	& = \sum_{\sigma \in S_k} \Wg \left ( \sigma \tau^{- 1}, d D \right ) (d D)^{\# \left ( \sigma e \right )} = \delta_{e \tau} \, ,
\end{align} where we have used the definition of the Weingarten function.

Using the definitions of \( T_e \) and \( \ket{F_k} \), it is easy to confirm that \( T_e \ket{F_k} = \ket{F_k} \). Graphically, this implies the simplification \begin{align} \label{eq:mps_sequential_generation}
	\vcenter{\hbox{\includegraphics{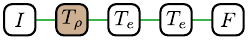}}} = \vcenter{\hbox{\includegraphics{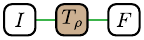}}}.
\end{align} From this, it also follows that \( \mathbb{E} \ketbra{\psi}{\psi}^{\otimes k} \) is properly normalized: \begin{align}
	\tr \left ( \mathbb{E} \ketbra{\psi}{\psi}^{\otimes k} \right ) = \fv{I_k}{T_e^{n - 1}}{F_k} = \braket{I_k}{F_k} = 1
\end{align}

With what we have laid out above, we can write Eq.~\eqref{eq:mps_caffe} in terms of transfer matrices: \begin{align} \label{eq:mps_biscottino}
    \tr \left ( P \mathbb{E} \ketbra{\psi}{\psi}^{\otimes k} \right ) = \fv{I_k}{T_e^c T_\alpha^a T_e^r T_\beta^b}{F_k}
\end{align}

We provide a simple Mathematica package~\cite{GitHub} that defines \( T_\rho \) with \( \rho \in S_k \) for \( k \in \{ 1, \dots, 20 \} \) according to Eq.~\eqref{eq:mps_fame}. The package relies on the package provided by the authors of Ref.~\cite{Fukuda2019} for evaluating the Weingarten function.

\subsection{Estimating the decay of correlations} \label{sec:mps_correlations}

The decay of the average of each measure of correlation is necessarily determined by the \( r \) sites separating subsystems \( A \) and \( B \). As we will see in the following sections, this will, for each measure, translate to a simple statement in terms of the just-defined transfer matrices. In particular, we will find that the decay of correlations is determined by \( T_e^r \) with \( e \in S_k \). Taking this as a fact for now, we connect the decay of correlations with the spectrum of \( T_e \).

The spectrum of \( T_e \) depends on \( k \) because \( k \) determines its dimension and entries. Still, we can make general statements about \( T_e \) for any \( k \geq 2 \). In particular, we will prove the following statements in Appendices~\ref{app:x_and_y} and \ref{app:lambda_1}.

\begin{restatable}{proposition}{nonnegative} \label{prop:nonnegative}
	The eigenvalues of \( T_e \) with \( e \in S_k \) are nonnegative for any \( k \geq 2 \).
\end{restatable}

\begin{restatable}{proposition}{diagonalizable} \label{prop:diagonalizable}
	\( T_e \) with \( e \in S_k \) is diagonalizable for any \( k \geq 2 \).
\end{restatable}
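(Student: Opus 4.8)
The plan is to read off from Eq.~\eqref{eq:mps_fame} an explicit factorization of \( T_e \) and then exhibit it as similar to a real symmetric matrix; diagonalizability is then immediate. Concretely, up to a transpose (immaterial, since a matrix is diagonalizable iff its transpose is) Eq.~\eqref{eq:mps_fame} identifies \( T_e \) with \( G^{(D)} \diag(g)\, W \in \mathbb{R}^{k! \times k!} \), where \( G^{(D)}_{\sigma \tau} = D^{\#(\sigma \tau^{-1})} \) is the Gram matrix of the permutation operators \( P^{(D)}_\sigma \), \( W_{\sigma \tau} = \Wg(\sigma \tau^{-1}, d D) \) is the Weingarten matrix, and \( \diag(g) \) is diagonal with entries \( g_\sigma = d^{\#(\sigma)} \). (For \( \rho = e \) the factor \( d^{\#(\sigma \rho)} \) in Eq.~\eqref{eq:mps_fame} depends only on \( \sigma \), which is why it enters as a diagonal rather than a Gram matrix.)

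First I would record four facts. (i) \( G^{(D)} \) and \( W \) are symmetric: their \( (\sigma, \tau) \) entries are class functions of \( \sigma \tau^{-1} \), and \( \sigma \tau^{-1} \) is conjugate in \( S_k \) to its inverse. (ii) \( G^{(D)} \succeq 0 \), being a Gram matrix for the Hilbert--Schmidt inner product; and for \( k \leq d D \), \( G^{(d D)} \succ 0 \), so \( W = (G^{(d D)})^{-1} \succ 0 \). (iii) \( \diag(g) \succ 0 \). (iv) Crucially, \( G^{(D)} \) and \( W \) commute: in the basis \( \{ e_\sigma \}_{\sigma \in S_k} \) both act as left multiplication on the group algebra of \( S_k \) by the central elements \( \sum_\pi D^{\#(\pi)} \pi \) and \( \sum_\pi \Wg(\pi, d D)\, \pi \); equivalently, \( G^{(D)} W = W G^{(D)} \) follows by reindexing the sums and using \( \Wg(\mu \pi, q) = \Wg(\pi \mu, q) \) together with \( \#(\pi^{-1}) = \#(\pi) \).

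For \( k \leq d D \) the argument is then short. Since \( W \succ 0 \) and \( W^{1/2} \) (which commutes with \( G^{(D)} \), being a function of \( W \)) is invertible,
\[ W^{1/2}\, T_e\, W^{-1/2} = G^{(D)}\,( W^{1/2} \diag(g)\, W^{1/2} ) =: G^{(D)} M, \]
with \( M := W^{1/2} \diag(g)\, W^{1/2} \succ 0 \); conjugating once more by \( M^{1/2} \),
\[ M^{1/2}\,( G^{(D)} M )\, M^{-1/2} = M^{1/2}\, G^{(D)}\, M^{1/2}, \]
which is symmetric (and in fact \( \succeq 0 \)). Hence \( T_e \) is similar to a real symmetric matrix and is diagonalizable; as a byproduct this re-derives Proposition~\ref{prop:non-negative}.

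The case \( k > d D \), where \( W \) is the extended Weingarten matrix of Ref.~\cite{Collins2006}, is the point I expect to require the most care. There \( W \) is still symmetric and \( \succeq 0 \), but singular, with \( \ker W \subseteq \ker G^{(D)} \) and still \( [G^{(D)}, W] = 0 \). These facts are cleanest from the expansions \( G^{(D)} = \sum_{\lambda \vdash k} a_\lambda E_\lambda \) and \( W = \sum_{\lambda \vdash k} b_\lambda E_\lambda \) over the central primitive idempotents \( E_\lambda \) of the group algebra of \( S_k \), with \( a_\lambda \geq 0 \) vanishing exactly when \( \lambda \) has more than \( D \) rows and \( b_\lambda \geq 0 \) vanishing exactly when \( \lambda \) has more than \( d D \) rows. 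Setting \( U := \operatorname{range}(W) = (\ker W)^\perp \), one has \( \operatorname{range}(T_e) \subseteq \operatorname{range}(G^{(D)}) \subseteq U \) and \( \ker W \subseteq \ker T_e \), so \( T_e = (T_e|_U) \oplus 0 \) with respect to \( \mathbb{R}^{k!} = U \oplus U^\perp \); and \( T_e|_U = (G^{(D)}|_U)( P_U \diag(g)\, P_U )( W|_U ) \) with \( W|_U \succ 0 \) and \( P_U \diag(g)\, P_U \succ 0 \) on \( U \), so the two conjugations above apply verbatim. The main obstacle is really encapsulated in fact (iv): the naive attempt to symmetrize \( T_e \) by conjugating with \( \diag(g)^{\pm 1/2} \) alone fails, and it is the commutation \( [G^{(D)}, W] = 0 \) — a consequence of \( \Wg(\cdot, q) \) being a class function — that makes the chain of similarities work.
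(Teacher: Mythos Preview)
Your argument is correct and is essentially the paper's own: the same factorization \( T_e = W X Y \) (your \( W\,\diag(g)\,G^{(D)} \), up to the transpose you note), the same key observation that \( YW \) is symmetric (equivalently \( [G^{(D)},W]=0 \), since both factors are symmetric), and the same endgame of exhibiting \( T_e \) as similar to a real symmetric matrix. The paper's chain of similarities is marginally shorter --- \( WXY \sim XYW \sim X^{1/2}(YW)X^{1/2} \), conjugating first by \( W \) and then by \( X^{1/2} \) rather than by \( W^{1/2} \) and \( M^{1/2} \) --- while your treatment of the degenerate case \( k > dD \) is more explicit than the paper's footnote.
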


\begin{restatable}{proposition}{lambdaone} \label{prop:lambda_1}
	Let \( \lambda_1 > \lambda_2 > \cdots \geq 0 \) denote the distinct eigenvalues of \( T_e \) with \( e \in S_k \). Then, \( \lambda_1 = 1 \), and it is nondegenerate for any \( k \geq 2 \) if \( d \geq 2 \).
\end{restatable}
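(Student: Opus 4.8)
The plan is to pin down the eigenvalue $1$ and then show that the rest of the spectrum sits strictly below it. That $1$ is an eigenvalue is already in hand, since $T_e \ket{F_k} = \ket{F_k}$. Invoking Propositions~\ref{prop:non-negative} and~\ref{prop:diagonalizable} (the spectrum of $T_e$ is real and non-negative, and $T_e$ is diagonalizable), the proposition reduces to the two claims (a) the spectral radius of $T_e$ equals $1$, and (b) $1$ has multiplicity one. My route to both is the block-triangular form of $T_e$ from App.~\ref{app:conjecture}.

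In that ordering of $S_k$, the spectrum of $T_e$ is the disjoint union, counted with multiplicity, of the spectra of its diagonal blocks. The diagonal block at the identity permutation is the scalar $\big[(T_e)_{ee}\big]$, and from Eq.~\eqref{eq:mps_fame} together with $\Wg(\sigma,dD) = (G^{-1})_{e\sigma}$ and $G_{\sigma e} = (dD)^{\#(\sigma)}$ one reads off $(T_e)_{ee} = \sum_{\sigma \in S_k}\Wg(\sigma,dD)\,(dD)^{\#(\sigma)} = 1$. Hence claims (a) and (b) together say exactly that every other diagonal block has spectral radius strictly below $1$, and this is the core of the argument. For $k=2$ it is immediate: $T_e$ is triangular in the permutation basis, its eigenvalues being $1$ and $\tfrac{dD^2-d}{d^2D^2-1}$ (the value appearing in Eq.~\eqref{eq:mps_length}), and $\tfrac{dD^2-d}{d^2D^2-1} < 1 \Leftrightarrow (d-1)(dD^2+1) > 0$, which holds precisely because $d \geq 2$; at $d=1$ it would equal $1$, which is where the hypothesis enters. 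For general $k$ I would bound the spectral radius of each of the larger diagonal blocks below $1$ using the explicit Weingarten expression for its entries, organized through the change of variables and estimates of App.~\ref{app:x_and_y}. A useful auxiliary fact along the way is a physical cap on growth: generalizing Eq.~\eqref{eq:mps_biscottino}, for any $\rho_1,\dots,\rho_m \in S_k$ one has $\bra{I_k}T_{\rho_1}\cdots T_{\rho_m}\ket{F_k} = \tr\!\big(P\,\mathbb{E}\ketbra{\psi}{\psi}^{\otimes k}\big)$ with $P$ a tensor product of permutation operators, so $\big|\bra{I_k}T_{\rho_1}\cdots T_{\rho_m}\ket{F_k}\big| \leq \|P\|_\infty = 1$ uniformly, since $\mathbb{E}\ketbra{\psi}{\psi}^{\otimes k}$ is positive with unit trace.

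I expect the main obstacle to be exactly this general-$k$ block estimate. For $k=2$ it is a single rational inequality, but for larger $k$ the non-identity blocks grow with $k$ and their entries are Weingarten sums with alternating signs, so extracting a clean strictly-less-than-one bound — and with it the precise role of $d \geq 2$ — is the delicate step; taming this is presumably what the $x$-and-$y$ reduction of App.~\ref{app:x_and_y} is for. Once that bound is in place, the identity block contributes $1$ with multiplicity one while every other block contributes eigenvalues in $[0,1)$, and together with diagonalizability this gives $\lambda_1 = 1$, non-degenerate, for all $k \geq 2$ and $d \geq 2$.
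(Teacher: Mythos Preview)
Your strategy is structurally sound but leaves the central estimate open, and it diverges entirely from the paper's argument. The paper does not use the block-triangular form here. Instead it passes to the companion matrix $\Sigma_e$ (related to $T_e$ by an $XY \leftrightarrow YX$ swap, hence sharing the same nonzero spectrum) and observes that, as a superoperator, $\Sigma_e$ is a quantum channel: a convex mixture of Stinespring-type maps over Haar-random $U$. This immediately gives $1$ as an eigenvalue and spectral radius at most $1$. The non-degeneracy and strict gap come from showing that $\Sigma_e$ is a \emph{primitive} channel, via the criterion that products of its Kraus operators span the full matrix algebra $\mathcal{M}_{D^k}(\mathbb{C})$. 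The hypothesis $d \geq 2$ enters precisely here: it allows controlled unitaries $\ketbra{0}{0}_E \otimes I + \sum_{j \geq 1}\ketbra{j}{j}_E \otimes V$ among the Kraus operators, whose products generate arbitrary $V_{S_1} \otimes \cdots \otimes V_{S_k}$ and hence the full algebra; with $d = 1$ one is confined to $\spn\{U^{\otimes k}\}$, the symmetric-subspace commutant.

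The gap in your route is exactly the one you flag: bounding the spectral radius of every non-identity diagonal block strictly below $1$ for arbitrary $k$. App.~\ref{app:x_and_y} does not supply this; it gives the $WXY$ factorization used for Propositions~\ref{prop:non-negative} and~\ref{prop:diagonalizable}, not block spectral bounds. Your physical-cap inequality $\bigl|\bra{I_k}T_{\rho_1}\cdots T_{\rho_m}\ket{F_k}\bigr| \leq 1$ is correct but constrains a single matrix element of a product, not the spectral radius of an individual diagonal block. Note finally that your target (each non-identity block has spectral radius $<1$) is a strictly weaker cousin of Conjecture~\ref{con:lambda_2} (each such block has spectral radius $\leq \mu_2$), which the authors could not prove; the channel-primitivity argument sidesteps the whole block analysis.
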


Given the statements above, we can expand \( T_e^r \) as \begin{align} \label{eq:expansion}
    T_e^r = \ketbra{R_1}{L_1} + \lambda_2^r \sum_{\mu = 1}^{w_2} \ketbra{R_2^{(\mu)}}{L_2^{(\mu)}} + O \left ( \lambda_3^r \right ),
\end{align} where \( \ket{R_i^{(\mu)}} \) denotes the \( \mu \)th right eigenvector corresponding to \( \lambda_i \), \( \bra{L_i^{(\mu)}} \) denotes the \( \mu \)th left eigenvector corresponding to \( \lambda_i \), and \( w_i \) denotes the degeneracy of \( \lambda_i \).

The asymptotic decay of correlations is thus determined by the subleading eigenvalue \( \lambda_2 \) of \( T_e \), and the average correlation length is given by \begin{align}
    \xi = - \frac{1}{\log \left ( \left | \lambda_2 \right | \right )}.
\end{align} The argument behind this is similar to one known from the analysis of correlations in translation-invariant MPS~\cite{Fannes1992, PerezGarcia2007}, where the decay is determined by the subleading eigenvalue of the relevant transfer matrix.

\subsection{R\'{e}nyi-\texorpdfstring{\( 2 \)}{2} mutual information} \label{sec:mps_renyi}

We start our analysis of correlations in random MPS with the simplest case, namely the computation of the average of the R\'{e}nyi-\( 2 \) mutual information \begin{align}
	I_2 (A : B) & = \log \left [ \tr \left ( \varrho_{A B}^2 \right ) \right ] \nonumber \\
	& \hphantom{{} = {}} - \log \left [ \tr \left ( \varrho_A^2 \right ) \right ] \nonumber \\
	& \hphantom{{} = {}} - \log \left [ \tr \left ( \varrho_B^2 \right ) \right ].
\end{align} The analytical treatment turns out to be comparatively simple if one assumes that \( \mathbb{E} \log (X) = \log (\mathbb{E} X) \), as is frequently done in this context~\cite{vonKeyserlingk2018, Zhou2019, Bertini2020}. The assumption can furthermore be justified with the fact that the purities of \( \varrho_A \), \( \varrho_B \), and \( \varrho_{A B} \) concentrate around their averages, as we discuss towards the end of this section. That said, we will not make the assumption further below when we study the von Neumann mutual information. The analysis there will require transfer matrices \( T_\rho \) with \( \rho \in S_k \) for all \( k \geq 2 \), while \( \rho \in S_2 \) will suffice here because only the averages of purities are needed. Our first result is summarized below.

\begin{result} \label{res:mps_renyi}
    The average of \( I_2 (A : B) \) with respect to the random MPS ensemble and subsystems \( A \) and \( B \) as sketched in Fig.~\ref{fig:setup}~(a) decays exponentially as specified in Definition~\ref{def:definition} with the average correlation length \( \xone \) defined in Eq.~\eqref{eq:mps_length}.
\end{result}

\begin{proof}
    We split the proof into four steps. The exact same structure will also appear in the proofs for the other measures of correlation. Thus, this proof serves as the simplest example and a point of reference for later proofs.
    
    \begin{step} \label{step:mps_renyi_1}
        We rewrite \( \mathbb{E} I_2 (A : B) \) in terms of expressions of the form of Eq.~\eqref{eq:mps_caffe}. To that end, we make the assumption that \( \mathbb{E} \log (X) = \log (\mathbb{E} X) \). Then, \begin{align}
            \mathbb{E} I_2 (A : B) & = \log \left [ \mathbb{E} \tr \left ( \varrho_{A B}^2 \right ) \right ] \nonumber \\
        	& \hphantom{{} = {}} - \log \left [ \mathbb{E} \tr \left ( \varrho_A^2 \right ) \right ] \nonumber \\
        	& \hphantom{{} = {}} - \log \left [ \mathbb{E} \tr \left ( \varrho_B^2 \right ) \right ].
        \end{align} \( \mathbb{E} \tr \left ( \varrho_A^2 \right ) \), \( \mathbb{E} \tr \left ( \varrho_B^2 \right ) \), and \( \mathbb{E} \tr \left ( \varrho_{A B}^2 \right ) \) can already be written in the desired form. For example, \begin{align} \label{eq:mps_renyi_2}
            \mathbb{E} \tr \left ( \varrho_{A B}^2 \right ) = \tr \left ( P \mathbb{E} \ketbra{\psi}{\psi}^{\otimes 2} \right )
        \end{align} with \begin{align} \label{eq:mps_renyi_3}
            P & = \left ( P_e^{(d)} \right )^{\otimes c} \otimes \left ( P_{(1 2)}^{(d)} \right )^{\otimes a} \otimes \left ( P_e^{(d)} \right )^{\otimes r} \nonumber \\
            & \hphantom{{} = {}} \otimes \left ( P_{(1 2)}^{(d)} \right )^{\otimes b} \otimes \left ( P_e^{(d)} \right )^{\otimes (f - 1)} \otimes P_e^{(d D)}.
        \end{align}
    \end{step}
    
    \begin{step}
        We express \( \mathbb{E} I_2 (A : B) \) in terms of the transfer matrices defined in Sec.~\ref{sec:mps_transfer}. Given the previous step, it is easy to confirm that \begin{align}
            \mathbb{E} I_2 (A : B) & = \log \left ( \fv{I_2}{T_e^c T_{(1 2)}^a T_e^r T_{(1 2)}^b}{F_2} \right ) \nonumber \\
		    & \hphantom{{} = {}} - \log \left ( \fv{I_2}{T_e^c T_{(1 2)}^a}{F_2} \right ) \nonumber \\
		    & \hphantom{{} = {}} - \log \left ( \fv{I_2}{T_e^{c + a + r} T_{(1 2)}^b}{F_2} \right ). \label{eq:mps_renyi_1}
        \end{align}
    \end{step}
    
    \begin{step}
        We expand \( \mathbb{E} I_2 (A : B) \) in terms of the spectrum of \( T_e \) with \( e \in S_2 \) [see Eq.~\eqref{eq:expansion}]. Using the relevant expressions for the Weingarten function, it is evident~\footnote{Note that the entries of \( T_e \) with \( e \in S_2 \) have previously appeared in Ref.~\cite{Haferkamp2021}.} that \begin{align}
    		T_e = \begin{pmatrix}
    			1 & \displaystyle \frac{d^2 D - D}{d^2 D^2 - 1} \\[1em]
    			0 & \displaystyle \frac{d D^2 - d}{d^2 D^2 - 1}
    		\end{pmatrix}
    	\end{align} is diagonalizable with \begin{align}
    		\lambda_1 = 1 \qquad \text{and} \qquad \lambda_2 = \frac{d D^2 - d}{d^2 D^2 - 1}.
    	\end{align} Expanding \( T_e^c \) and taking the limit of \( c \to \infty \) yields \begin{align}
    	    \mathbb{E} I_2 (A : B) & = \log \left ( \fv{L_1}{T_{(1 2)}^a T_e^r T_{(1 2)}^b}{F_2} \right ) \nonumber \\
		    & \hphantom{{} = {}} - \log \left ( \fv{L_1}{T_{(1 2)}^a}{F_2} \right ) \nonumber \\
		    & \hphantom{{} = {}} - \log \left ( \fv{L_1}{T_{(1 2)}^b}{F_2} \right ),
    	\end{align} where we have used that \( \braket{I_2}{R_1} = 1 \). After expanding also \( T_e^r \) and using that \( \ket{F_2} = \ket{R_1} \), we have \begin{align}
    	    \mathbb{E} I_2 (A : B) & = \log \Bigl ( \fv{L_1}{T_{(1 2)}^a}{R_1} \fv{L_1}{T_{(1 2)}^b}{R_1} \nonumber \\
    	    & \hphantom{{} = \log \Bigl (} {} + \lambda_2^r \fv{L_1}{T_{(1 2)}^a}{R_2} \fv{L_2}{T_{(1 2)}^b}{R_1} \Bigr ) \nonumber \\
		    & \hphantom{{} = {}} - \log \left ( \fv{L_1}{T_{(1 2)}^a}{R_1} \right ) \nonumber \\
		    & \hphantom{{} = {}} - \log \left ( \fv{L_1}{T_{(1 2)}^b}{R_1} \right ).
    	\end{align}
    \end{step}

    \begin{step}
        Finally, we can write \( \mathbb{E} I_2 (A : B) \) in the form of Definition~\ref{def:definition}. That is, \begin{align}
            & \mathbb{E} I_2 (A : B) \nonumber \\
    		& \quad = \log \left ( 1 + \lambda_2^r \frac{\fv{L_1}{T_{(1 2)}^a}{R_2} \fv{L_2}{T_{(1 2)}^b}{R_1}}{\fv{L_1}{T_{(1 2)}^a}{R_1} \fv{L_1}{T_{(1 2)}^b}{R_1}} \right ) \\
    		& \quad = \lambda_2^r \frac{\fv{L_1}{T_{(1 2)}^a}{R_2} \fv{L_2}{T_{(1 2)}^b}{R_1}}{\fv{L_1}{T_{(1 2)}^a}{R_1} \fv{L_1}{T_{(1 2)}^b}{R_1}} + O \left ( \lambda_2^{2 r} \right ) \\
    		& \quad \equiv K \exp \left ( - \frac{r}{\xi} \right ) + O \left [ \exp \left ( - \frac{2 r}{\xi} \right ) \right ],
        \end{align} where \begin{align}
    		K = \frac{\fv{L_1}{T_{(1 2)}^a}{R_2} \fv{L_2}{T_{(1 2)}^b}{R_1}}{\fv{L_1}{T_{(1 2)}^a}{R_1} \fv{L_1}{T_{(1 2)}^b}{R_1}}
    	\end{align} and \begin{align}
    		\xi = - \frac{1}{\log (\lambda_2)} = - \left [ \log \left ( \frac{d D^2 - d}{d^2 D^2 - 1} \right ) \right ]^{- 1} = \xone.
	    \end{align}
    \end{step}
    
    \noindent This concludes the proof.
\end{proof}

We are now in a position to discuss the concentration of the purities of \( \varrho_A \), \( \varrho_B \), and \( \varrho_{A B} \) around their averages, which justifies \( \mathbb{E} \log (X) = \log (\mathbb{E}  X) \). By expanding \( \mathbb{E} \tr \left ( \varrho_A^2 \right ) = \fv{L_1}{T_{(1 2)}^a}{R_1} \), one can confirm that \begin{align*}
    \mathbb{E} \tr \left ( \varrho_A^2 \right ) = \left ( \frac{d + 1}{d} \right )^2 \frac{1}{D^2} + O \left ( \frac{1}{d^a} + \frac{1}{D^4} \right ).
\end{align*} Similarly, it holds that \begin{align*}
    \mathbb{E} \tr \left ( \varrho_{A B}^2 \right ) = \left ( \frac{d + 1}{d} \right )^4 \frac{1}{D^4} + O \left ( \frac{1}{d^{a + b}} + \frac{1}{D^6} \right ).
\end{align*} The purities of \( \varrho_A \), \( \varrho_B \), and \( \varrho_{A B} \) are thus close to their minimum values, implying concentration by Markov's inequality and justifying \( \mathbb{E} \log (X) = \log (\mathbb{E}  X) \).

As discussed earlier, the R\'{e}nyi-\( 2 \) mutual information is lacking many of the desirable properties that a sound measure of correlation ought to fulfill. On top, our computation simplifies considerably because we are using the assumption that \( \mathbb{E} \log (X) = \log (\mathbb{E} X) \), which amounts to ignoring statistical fluctuations in the different realizations. In the following section, we will see that \( N (A : B) \) decays exponentially with the same average correlation length \( \xone \). We will furthermore show that \( N (A : B) \) concentrates around its average, providing evidence that fluctuations can be safely ignored in our context.

\subsection{Trace distance and \texorpdfstring{\( 2 \)}{2}-norm} \label{sec:mps_norm}

In this section, we investigate average correlations as quantified by the trace distance \( T (A : B) \). As anticipated in Sec.~\ref{sec:measures}, this is a challenging task. However, as laid out there, the \( 2 \)-norm expression \( N (A : B) \) reliably estimates \( T (A : B) \) for the case of random MPS. Hence, we compute the average of \begin{align}
	N (A : B) & = \left \| \varrho_{A B} - \varrho_A \otimes \varrho_B \right \|_2^2 \\
	& = \tr \left ( \varrho_{A B}^2 \right ) \nonumber \\
	& \hphantom{{} = {}} + \tr \left ( \varrho_A^2 \right ) \tr \left ( \varrho_B^2 \right ) \nonumber \\
	& \hphantom{{} = {}} - 2 \tr \left [ \varrho_{A B} \left (\varrho_A \otimes \varrho_B \right ) \right ]. \label{eq:mps_norm_2}
\end{align} 

Because of its connection to the Hilbert-Schmidt inner product, the average of \( N (A : B) \) can be computed without any simplifying assumptions. Making use of the transfer-matrix techniques introduced above, we prove the following result.

\begin{restatable}{result}{mpsnorm} \label{res:mps_norm}
	The average of \( N (A : B) \) with respect to the random MPS ensemble and subsystems \( A \) and \( B \) as sketched in Fig.~\ref{fig:setup}~(a) decays exponentially as specified in Definition~\ref{def:definition} with the average correlation length \( \xone \) defined in Eq.~\eqref{eq:mps_length}.
\end{restatable}

\begin{proof}[Sketch of proof.]
    The proof follows the same procedure as that of Result~\ref{res:mps_renyi}. Here, we sketch the main steps and refer to Appendix~\ref{app:mps_norm} for more details.

    In Step~\ref{step:mps_norm_1}, we write \( \mathbb{E} N (A : B) \) in terms of expressions of the form of Eq.~\eqref{eq:mps_caffe}. The second summand in Eq.~\eqref{eq:mps_norm_2} requires permutations of \( S_4 \) because \begin{align}
        \mathbb{E} \tr \left ( \varrho_A^2 \right ) \tr \left ( \varrho_B^2 \right ) = \tr \left ( P \mathbb{E} \ketbra{\psi}{\psi}^{\otimes 4} \right )
    \end{align} with \begin{align}
        P & = \left ( P_e^{(d)} \right )^{\otimes c} \otimes \left ( P_{(1 2)}^{(d)} \right )^{\otimes a} \otimes \left ( P_e^{(d)} \right )^{\otimes r} \nonumber \\
        & \hphantom{{} = {}} \otimes \left ( P_{(3 4)}^{(d)} \right )^{\otimes b} \otimes \left ( P_e^{(d)} \right )^{\otimes (f - 1)} \otimes P_e^{(d D)}.
    \end{align} The first summand and the third summand, respectively, require only permutations of \( S_2 \) and \( S_3 \). In Step~\ref{step:mps_norm_2},  we thus write \( \mathbb{E} N (A : B) \) in terms of transfer matrices \( T_\rho \) with \( \rho \in S_4 \).
    
    This means that the average correlation length is determined by the subleading eigenvalue of \( T_e \) with \( e \in S_4 \). Let \( \lambda_1 > \lambda_2 > \cdots \geq 0 \) denote the distinct eigenvalues of \( T_e \). In Step~\ref{step:mps_norm_3}, we find that \begin{align}
    	\lambda_1 = 1 \qquad \text{and} \qquad \lambda_2 = \frac{d D^2 - d}{d^2 D^2 - 1},
    \end{align} just like for \( T_e \) with \( e \in S_2 \). The former is nondegenerate, while the degeneracy of the latter is given by the number of transposition in \( S_4 \), \begin{align}
    	w_2 = \binom{4}{2} = 6.
    \end{align} Thus, the average correlation length for \( N (A : B) \) coincides with that for \( I_2 (A : B) \), as we conclude in Step~\ref{step:mps_norm_4}.
\end{proof}

The above result establishes the exponential decay of the average. However, one is usually interested in knowing if typical instances are expected to have the same exponential decay. This can be easily established by Markov's inequality because \( N (A : B) \) is nonnegative and its average decays to zero as a function of the distance \( r \).

\begin{corollary} \label{cor:mps_norm_concentration}
	For subsystems \( A \) and \( B \) as sketched in Fig.~\ref{fig:setup}~(a), sufficiently large \( r \), and all \( 0 < \varepsilon < 1 \), the random MPS ensemble satisfies \begin{align}
		\Pr \left \{ N (A : B) \geq K \exp \left [ - \frac{(1 - \varepsilon) r}{\xone} \right ] \right \} \leq \exp \left ( - \frac{\varepsilon r}{\xone} \right ),
	\end{align} where \( K \) is constant with respect to \( r \).
\end{corollary}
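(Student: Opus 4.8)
The plan is to apply Markov's inequality to the nonnegative random variable $N(A:B)$ using the exponential decay of its mean established in Result~\ref{res:mps_norm}. First I would invoke Result~\ref{res:mps_norm} to write $\mathbb{E} N(A:B) = K' \exp(-r/\xone) + O[\exp(-2r/\xone)]$ for some constant $K'$ (the prefactor appearing there). For sufficiently large $r$, the subleading term is negligible compared to the leading one, so there exists a constant $K$ (slightly larger than $K'$, absorbing the correction) with $\mathbb{E} N(A:B) \leq K \exp(-r/\xone)$ for all large enough $r$. This is the only place where "sufficiently large $r$" is needed.

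Next I would apply Markov's inequality in the form $\Pr\{Y \geq t\} \leq \mathbb{E} Y / t$, valid because $N(A:B) \geq 0$. Choosing the threshold $t = K \exp[-(1-\varepsilon) r / \xone]$ gives
\begin{align}
    \Pr \left \{ N (A : B) \geq K \exp \left [ - \frac{(1 - \varepsilon) r}{\xone} \right ] \right \} \leq \frac{K \exp \left ( - r / \xone \right )}{K \exp \left [ - (1 - \varepsilon) r / \xone \right ]} = \exp \left ( - \frac{\varepsilon r}{\xone} \right ),
\end{align}
which is exactly the claimed bound. The constraint $0 < \varepsilon < 1$ ensures the threshold still decays in $r$ (so the statement is nontrivial) and that the right-hand side is a genuine probability bound less than one.

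There is essentially no hard step here; the only mild subtlety is the bookkeeping around "sufficiently large $r$" and the precise identification of the constant $K$ — one must make sure the $O[\exp(-2r/\xone)]$ correction from Result~\ref{res:mps_norm} is subdominant so that a single clean constant $K$ works uniformly for all large $r$. Everything else is the standard one-line Markov argument. I would present it in two short paragraphs: one reducing the mean bound to the form $\mathbb{E} N(A:B) \leq K\exp(-r/\xone)$, and one carrying out the Markov estimate with the stated threshold.
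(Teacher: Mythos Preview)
Your proposal is correct and matches the paper's own proof essentially line for line: bound \( \mathbb{E} N(A:B) \leq K \exp(-r/\xone) \) for large \( r \) via Result~\ref{res:mps_norm}, then apply Markov's inequality with threshold \( K\exp[-(1-\varepsilon)r/\xone] \). The paper phrases the Markov step through an auxiliary parameter \( \eta = \exp(\varepsilon r/\xone) \), but this is the same computation.
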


\begin{proof}
    By Result~\ref{res:mps_norm}, for sufficiently large \( r \), we can bound \( \mathbb{E} N (A : B) \leq K \exp (- r / \xone) \). Because \( N (A : B) \) is nonnegative, by Markov's inequality, we have, for \( \eta > 0 \),
    \begin{align}
        & \Pr \left [ N (A : B) \geq \eta K \exp \left ( - \frac{r}{\xone} \right ) \right ] \nonumber \\
        & \qquad \leq \Pr \bigl [ N (A : B) \geq \eta \mathbb{E} N (A : B) \bigr ] \\
        & \qquad \leq \frac{1}{\eta}.
    \end{align} The result follows with \( \eta = \exp (\varepsilon r / \xone) \).
\end{proof}

The above corollary reflects that it is exponentially unlikely in \( r \) that \( N (A : B) \) decays slower than with the average correlation length \( \xone \). Because we have already established that the average case exhibits an exponential decay with correlation length \( \xone \), the average case is also typical.

By combining the above result with Eq.~\eqref{eq:mps_norm_1}, we can now also bound the correlation length for \( T (A : B) \).

\begin{corollary} \label{cor:mps_trace_concentration}
	For subsystems \( A \) and \( B \) as sketched in Fig.~\ref{fig:setup}~(a), sufficiently large \( r \), and all \( 0 < \varepsilon < 1 \), the random MPS ensemble satisfies \begin{align}
		\Pr \left \{ T (A : B) \geq K \exp \left [ - \frac{(1 - \varepsilon) r}{2 \xone} \right ] \right \} \leq \exp \left ( - \frac{\varepsilon r}{\xone} \right ),
	\end{align} where \( K \) is constant with respect to \( r \).
\end{corollary}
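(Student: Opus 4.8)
The plan is to chain the deterministic operator-norm bound of Eq.~\eqref{eq:mps_norm_1} with the probabilistic statement of Corollary~\ref{cor:mps_norm_concentration}. From Eq.~\eqref{eq:mps_norm_1}, for connected subsystems \( A \) and \( B \) of an MPS of bond dimension \( D \) we have the pointwise (realization-by-realization) inequality \( T (A : B) \leq \frac{D^2}{\sqrt{2}} \sqrt{N (A : B)} \). Hence, for any threshold \( s > 0 \), the event \( \{ T (A : B) \geq s \} \) is contained in the event \( \{ N (A : B) \geq 2 s^2 / D^4 \} \), so that \( \Pr \{ T (A : B) \geq s \} \leq \Pr \{ N (A : B) \geq 2 s^2 / D^4 \} \).

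Next I would choose \( s = \widetilde{K} \exp [ - (1 - \varepsilon) r / (2 \xone) ] \), so that \( 2 s^2 / D^4 = (2 \widetilde{K}^2 / D^4) \exp [ - (1 - \varepsilon) r / \xone ] \); here the square root is what turns the exponent \( 1 / \xone \) into \( 1 / (2 \xone) \). Picking the prefactor \( \widetilde{K} = D^2 \sqrt{K / 2} \), with \( K \) the constant furnished by Corollary~\ref{cor:mps_norm_concentration}, this threshold becomes exactly \( K \exp [ - (1 - \varepsilon) r / \xone ] \), and Corollary~\ref{cor:mps_norm_concentration} gives \( \Pr \{ N (A : B) \geq K \exp [ - (1 - \varepsilon) r / \xone ] \} \leq \exp ( - \varepsilon r / \xone ) \) for sufficiently large \( r \) and all \( 0 < \varepsilon < 1 \). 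Combining this with the set inclusion above yields \( \Pr \{ T (A : B) \geq \widetilde{K} \exp [ - (1 - \varepsilon) r / (2 \xone) ] \} \leq \exp ( - \varepsilon r / \xone ) \), which is the claim after renaming \( \widetilde{K} \) as \( K \).

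The only points requiring care are bookkeeping ones: the ``sufficiently large \( r \)'' qualifier of Corollary~\ref{cor:mps_norm_concentration} transfers verbatim, and the constant \( K \) appearing here is a bond-dimension-dependent rescaling of the one in Corollary~\ref{cor:mps_norm_concentration}, which is admissible since both are only required to be \( r \)-independent. I do not expect any genuine obstacle; the conceptual content lies entirely in Result~\ref{res:mps_norm} and the norm comparison Eq.~\eqref{eq:mps_norm_1}, while this corollary is essentially a one-line deduction recording that passing through a square root halves the decay rate in the exponent. Equivalently, one could obtain a self-contained proof bypassing Corollary~\ref{cor:mps_norm_concentration} by applying Markov's inequality directly to \( N (A : B) \) after the inclusion step, using the bound \( \mathbb{E} N (A : B) \leq K' \exp ( - r / \xone ) \) from Result~\ref{res:mps_norm}; the two routes give the same conclusion.
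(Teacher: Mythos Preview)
Your argument is correct and rests on the same two ingredients as the paper---the norm comparison Eq.~\eqref{eq:mps_norm_1} and the exponential decay of \(\mathbb{E}\,N(A:B)\) from Result~\ref{res:mps_norm}. The paper arranges them slightly differently: instead of the pointwise set inclusion \(\{T\geq s\}\subseteq\{N\geq 2s^2/D^4\}\) followed by Corollary~\ref{cor:mps_norm_concentration}, it first bounds \(\mathbb{E}\,T(A:B)\leq\sqrt{\mathbb{E}\,[T(A:B)]^2}\), controls the right-hand side through Eq.~\eqref{eq:mps_norm_1} and Result~\ref{res:mps_norm}, and then applies Markov's inequality directly to \(T(A:B)\) as in the proof of Corollary~\ref{cor:mps_norm_concentration}. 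Both routes are one-line deductions; yours avoids the Jensen step, while the paper's makes the bound on \(\mathbb{E}\,T(A:B)\) explicit before passing to concentration.
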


\begin{proof}
     It holds that \begin{align}
         \mathbb{E} T (A : B) & \leq \sqrt{\mathbb{E} [T (A : B)]^2} \\
         & \leq \frac{D^2}{2} \sqrt{\mathbb{E} N (A : B)} \\
         & \leq K \exp \left( - \frac{r}{2 \xone} \right ),
     \end{align} where, in the last line, we have assumed \( r \) to be sufficiently large. The result follows as in the proof of Corollary~\ref{cor:mps_norm_concentration}.
\end{proof}

Thus, with overwhelming probability, correlations as quantified by \( T (A : B) \) decay exponentially with \( \xi \leq 2 \xone \).

\subsection{Von Neumann mutual information} \label{sec:mps_neumann}

\begin{figure}
	\centering
	\includegraphics{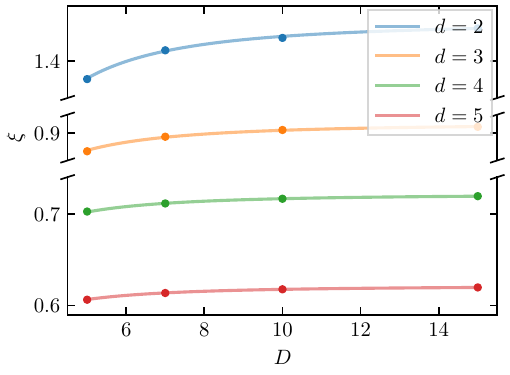}
	\caption{Numerically obtained average correlation length \( \xi \) for different \( d \) and \( D \). The data points are obtained by fitting the average value of \( I (A : B) \) against \( r \in \{ 5, 7, 9, 11, 13, 15 \} \) for \( a = b = 1 \). The sample size of \( 10\,000 \) suffices for the error bars to lie within the plot points. The opaque curves correspond to \( \xone \) [see Eq.~\eqref{eq:mps_length}]}
	\label{fig:numerical_neumann}
\end{figure}

The fact that \( I_2 (A : B) \) and \( N (A : B) \) have the same average correlation length \( \xone \) motivates the question whether other measures of correlation behave similarly. In this section, we will provide compelling evidence that \( \xone \) is indeed the average correlation length also for the von Neumann mutual information \( I (A : B) \).

We start by numerically investigating the behavior of \( I (A : B) \) for random MPS. We have generated MPS according to our measure, computed the average of \( I (A : B) \), and extracted the average correlation length from fits. As Fig.~\ref{fig:numerical_neumann} shows, the numerically obtained average correlation length coincides well with \( \xone \). It should be noted that we have set \( c = 0 \) for our numerical analysis. We discuss in Appendix~\ref{app:mps_neumann_c} why this does not affect the average correlation length. For more details on our numerical analysis, see Appendix~\ref{app:numerical_neumann}.

We now turn to the analytical computation of the average of \begin{align}
	I (A : B) & = \tr \left [ \rho_{A B} \log \left ( \rho_{A B} \right ) \right ] \nonumber \\
	& \hphantom{{} = {}} - \tr \left [ \rho_A \log \left ( \rho_A \right ) \right ] \nonumber \\
	& \hphantom{{} = {}} - \tr \left [ \rho_B \log \left ( \rho_B \right ) \right ].
\end{align}

To be able to make use of the transfer-matrix techniques introduced above, we employ two replica tricks to write \( \mathbb{E} I (A : B) \) in terms of expressions of the form of Eq.~\eqref{eq:mps_caffe}. First, we write \( S (\rho) \) as the limit of \( \alpha \to 1 \) of \( S_\alpha (\rho) \): \begin{align} \label{eq:mps_neumann_1}
	S (\rho) = \lim_{\alpha \to 1} \frac{1}{1 - \alpha} \log [\tr (\varrho^\alpha)]
\end{align} Second, instead of assuming again that \( \mathbb{E} \log (X) = \log (\mathbb{E} X) \), we use \begin{align} \label{eq:mps_neumann_2}
	\mathbb{E} \log (X) = \lim_{v \to 0} \frac{1}{v} \log \left ( \mathbb{E} X^v \right ).
\end{align} We are thus dealing with expressions of the form \( \tr \left ( P \mathbb{E} \ketbra{\psi}{\psi}^{\otimes v \alpha} \right ) \), which require transfer matrices \( T_\rho \) with \( \rho \in S_{v \alpha} \) (see Appendix~\ref{app:mps_neumann}). This means that knowing the spectrum of \( T_e \) with \( e \in S_{v \alpha} \) for all \( v \alpha \geq 2 \) allows us to draw conclusions about the decay of the average of \( I (A : B) \).

While this is in principle a daunting task, we are able to prove several properties of the transfer matrix \( T_e \) with \( e \in S_k \) for any \( k \geq 2 \) (see Propositions~\ref{prop:nonnegative}, \ref{prop:diagonalizable}, and \ref{prop:lambda_1}). 

In Appendix~\ref{app:conjecture}, we furthermore show that \( T_e \) with \( e \in S_k \) has eigenvalue \begin{align} \label{eq:mu_2}
    \mu_2 = \frac{d D^2 - d}{d^2 D^2 - 1}.
\end{align} for any \( k \geq 2 \). Its degeneracy is at least \begin{align}
	v_2 = \binom{k}{2},
\end{align} the number of transpositions in \( S_k \). We conjecture that \( \mu_2 \) is the subleading eigenvalue of \( T_e \) with \( e \in S_k \) for any \( k \geq 2 \) and that it has degeneracy \( v_2 \). We know this conjecture to hold for \( k \in \{ 2, 3, 4 \} \), and we have numerical evidence suggesting so for \( k \in \{ 5, 6, 7 \} \)~\cite{GitHub}. We were not able to prove the statement outright, but in the following we argue that it is the only missing step to show that \( \xone \) is the average correlation length for \( I (A : B) \). Let us state the conjecture formally below.

\begin{conjecture} \label{con:lambda_2}
    Let \( \lambda_1 > \lambda_2 > \cdots \geq 0 \) denote the distinct eigenvalues of \( T_e \) with \( e \in S_k \). Then, \( \lambda_2 = \mu_2 \) with degeneracy \( w_2 = v_2 \) for any \( k \geq 2 \).
\end{conjecture}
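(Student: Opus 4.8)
\emph{Approach.} The statement involves only the transfer matrix $T_e$, so the plan is entirely linear-algebraic and builds on what is already available: $T_e$ is diagonalizable with non-negative spectrum (Propositions~\ref{prop:diagonalizable} and~\ref{prop:non-negative}), and App.~\ref{app:conjecture} establishes both that $T_e$ is block triangular in a suitable ordering of $S_k$ and that $\mu_2$ [see Eq.~\eqref{eq:mu_2}] is an eigenvalue of multiplicity at least $v_2 = \binom{k}{2}$. What remains is to exclude any eigenvalue strictly between $\mu_2$ and $\lambda_1 = 1$ and to show that the multiplicity of $\mu_2$ does not exceed $v_2$. I would do this by refining the block structure, handling the block attached to transpositions, and bounding the spectral radius of every remaining block.

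First I would sharpen the block decomposition. The entries $(T_e)_{\theta\tau} = \sum_{\sigma \in S_k} \Wg(\sigma\tau^{-1}, dD)\, d^{\#(\sigma)} D^{\#(\sigma\theta^{-1})}$ are invariant under the simultaneous conjugation $\theta \mapsto \pi\theta\pi^{-1}$, $\tau \mapsto \pi\tau\pi^{-1}$ for every $\pi \in S_k$, because $\#(\cdot)$ and $\Wg(\cdot, dD)$ depend only on conjugacy classes while the substitution $\sigma \mapsto \pi\sigma\pi^{-1}$ merely reindexes the sum. Hence $T_e$ commutes with the conjugation representation of $S_k$ on $\mathbb{R}^{k!}$ and is block diagonal across its isotypic components. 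Combined with the triangular order of App.~\ref{app:conjecture} — which I expect to refine the partial order by number of cycles $\#(\pi)$, so that $\#(\pi) = k$ isolates the identity sector and $\#(\pi) = k - 1$ isolates exactly the $\binom{k}{2}$ transpositions — this organizes $T_e$ into a $1 \times 1$ identity block (carrying $\lambda_1 = 1$), a transposition sector of dimension $\binom{k}{2}$, and lower blocks indexed by cycle types with $\#(\pi) \leq k - 2$.

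Next I would pin down the transposition sector. Using the exact values of $\Wg$ on elements conjugate to a single transposition — which, as in the $k \in \{2, 3, 4\}$ cases, one expects to collapse (after the sum over $\sigma$) to $k$-independent rational functions of $d$ and $D$ — I would compute this block and show its spectrum is $\{\mu_2\}$; since App.~\ref{app:conjecture} already supplies $\binom{k}{2}$ independent $\mu_2$-eigenvectors and the diagonal blocks of a diagonalizable block-triangular matrix are themselves diagonalizable, it suffices to verify that the block equals $\mu_2$ times the identity up to the strictly-upper coupling to the identity sector, a finite Weingarten identity not involving $k$. The decisive step is then to show that every diagonal block with $\#(\pi) \leq k - 2$ has spectral radius strictly below $\mu_2$: this rules out an eigenvalue in $(\mu_2, 1)$ and, together with App.~\ref{app:conjecture}, forces the multiplicity of $\mu_2$ to be exactly $\binom{k}{2}$, giving $\lambda_2 = \mu_2$ and $w_2 = v_2$.

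\textbf{Main obstacle.} The last step is the crux. Because $\mu_2 = \tfrac{d(D^2 - 1)}{d^2 D^2 - 1}$ does not depend on $k$, the bound on the lower blocks must be uniform in $k$, so crude operator-norm estimates are useless; one needs either a closed form for the blocks or a genuine reduction in $k$. The route I find most promising is induction on $k$: the embedding $S_{k-1} \hookrightarrow S_k$ together with the tensor-network origin of $T_\rho$ (tracing out one of the $k$ replicas relates the level-$k$ object to the level-$(k-1)$ one) should let one compare the spectra at consecutive $k$, with the verified cases $k \in \{2, 3, 4\}$ as base. Alternatively, since $T_e$ is conjugate to a symmetric matrix by Propositions~\ref{prop:non-negative} and~\ref{prop:diagonalizable}, one could seek a Rayleigh-quotient certificate bounding the symmetrized operator by $\mu_2$ on the orthogonal complement of the identity and transposition sectors. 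Establishing the \emph{exactness} $w_2 = v_2$ (rather than merely $w_2 \geq v_2$) is the most delicate part: it is a threshold statement, with $\mu_2$ sitting precisely at the top of the spectrum of the lower blocks, and such statements tend to resist soft arguments — which is presumably why the authors leave it as a conjecture.
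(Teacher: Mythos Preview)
The paper does \emph{not} prove this statement: it is explicitly left as a conjecture. The authors verify it for $k \in \{2, 3, 4\}$ by direct computation and report numerical evidence for $k \in \{5, 6, 7\}$, but state outright that they ``were not able to prove the statement.'' So there is no paper proof to compare against.

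Your outline is consistent with the partial progress the paper does make. The conjugation invariance you invoke is exactly Proposition~\ref{prop:conjugacy_class}; the block-triangular structure ordered by number of fixed points is Proposition~\ref{prop:block_triangular}; and the identification of the transposition block as $\mu_2$ times the identity follows from iterating Proposition~\ref{prop:block_triangular} down to $C_2$, as the paper observes in App.~\ref{app:conjecture}. So the structural scaffolding you set up matches what is already established, and you have correctly isolated the missing piece: a uniform-in-$k$ bound on the spectral radius of every diagonal block associated with cycle types having at most $k-2$ fixed points.

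You are also right that this is the hard part, and your closing paragraph essentially concedes that you do not have an argument for it. Neither of your suggested routes is carried through: the induction-on-$k$ idea via $S_{k-1} \hookrightarrow S_k$ is already what Proposition~\ref{prop:block_triangular} exploits, and it gives the block structure but does not by itself bound the new bottom block $C_k^{(k)}$ that appears at each step; the Rayleigh-quotient idea would require an explicit handle on the symmetrized form of $T_e$ restricted to the complement of the identity and transposition sectors, and you do not supply one. In short, your proposal recovers the paper's known facts and correctly names the obstacle, but---like the paper---does not surmount it. This is not a flaw in your reasoning so much as an honest acknowledgment that the statement remains open.
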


If Conjecture~\ref{con:lambda_2} holds, the properties of \( T_e \) with \( e \in S_{v a} \) that are relevant for determining the decay of correlations are independent of \( v \alpha \). In Appendix~\ref{app:mps_neumann}, we argue that the replica limit does not affect this and prove the following result.

\begin{restatable}{result}{mpsneumann} \label{res:mps_neumann}
	If Conjecture~\ref{con:lambda_2} holds, the average of \( I (A : B) \) with respect to the random MPS ensemble and subsystems \( A \) and \( B \) as sketched in Fig.~\ref{fig:setup}~(a) decays exponentially as specified in Definition~\ref{def:definition} with the average correlation length \( \xone \) defined in Eq.~\eqref{eq:mps_length}.
\end{restatable}

We provide a concentration result also for \( I (A : B) \). The statement and its proof are identical to that for \( N (A : B) \). It is exponentially unlikely in \( r \) that \( I (A : B) \) decays slower than with the average correlation length \( \xone \), and the average case is also typical.

\begin{corollary} \label{cor:mps_neumann_concentration}
	If Conjecture~\ref{con:lambda_2} holds, for subsystems \( A \) and \( B \) as sketched in Fig.~\ref{fig:setup}~(a), sufficiently large \( r \), and all \( 0 < \varepsilon < 1 \), the random MPS ensemble satisfies \begin{align}
		\Pr \left \{ I (A : B) \geq K \exp \left [ - \frac{(1 - \varepsilon) r}{\xi} \right ] \right \} \leq \exp \left ( - \frac{\varepsilon r}{\xi} \right ),
	\end{align} where \( K \) is constant with respect to \( r \).
\end{corollary}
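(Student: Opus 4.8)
The plan is to mirror exactly the argument used for $N(A:B)$ in Corollary~\ref{cor:mps_norm_concentration}, since the only two ingredients needed are that $I(A:B)$ is a non-negative random variable and that its average decays to zero exponentially in $r$. The first is a standard fact: the von Neumann mutual information satisfies $I(A:B) \geq 0$ for every state~\cite{Wilde2013}, and this holds pointwise for every realization of the random MPS, so no averaging subtlety arises. The second is precisely the content of Result~\ref{res:mps_neumann}, which — assuming Conjecture~\ref{con:lambda_2} — gives $\mathbb{E} I(A:B) = K' \exp(-r/\xone) + O[\exp(-r/\chi)]$ with $\xone > \chi$ and $K'$ constant in $r$, hence $\mathbb{E} I(A:B) \leq K \exp(-r/\xone)$ once $r$ is large enough, where $\xi = \xone$ is the relevant correlation length appearing in the corollary.

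Given these two inputs, I would apply Markov's inequality. For any $\eta > 0$,
\begin{align}
  \Pr\left[ I(A:B) \geq \eta K \exp\left( -\frac{r}{\xi} \right) \right]
  &\leq \Pr\left[ I(A:B) \geq \eta\, \mathbb{E} I(A:B) \right] \nonumber \\
  &\leq \frac{1}{\eta},
\end{align}
where the first inequality uses $\mathbb{E} I(A:B) \leq K \exp(-r/\xi)$ for sufficiently large $r$, and the second is Markov's inequality applied to the non-negative variable $I(A:B)$. Choosing $\eta = \exp(\varepsilon r / \xi)$ turns this into the claimed bound $\Pr\{ I(A:B) \geq K \exp[-(1-\varepsilon)r/\xi] \} \leq \exp(-\varepsilon r/\xi)$, valid for all $0 < \varepsilon < 1$.

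There is essentially no technical obstacle in the corollary itself; all of the difficulty is upstream, sitting inside Result~\ref{res:mps_neumann} — and hence inside Conjecture~\ref{con:lambda_2} together with the replica-limit manipulations needed to express $\mathbb{E} I(A:B)$ through transfer matrices. The one point worth a moment's care is that the subleading $O$-term in Definition~\ref{def:definition} does not spoil the bound $\mathbb{E} I(A:B) \leq K \exp(-r/\xi)$; but since $\xi = \xone > \chi$ by construction, the leading term dominates for large $r$, which is exactly the regime in which the statement is phrased, so this is harmless. An analogous remark would let one also transfer the bound to $T(A:B)$ via Pinsker's inequality if desired, though that is not needed here.
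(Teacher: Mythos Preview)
Your proof is correct and matches the paper's own approach exactly: the paper explicitly states that the proof of this corollary is identical to that of Corollary~\ref{cor:mps_norm_concentration}, i.e., non-negativity of $I(A:B)$ plus Result~\ref{res:mps_neumann} feeding into Markov's inequality with $\eta = \exp(\varepsilon r/\xi)$. Your additional remarks about where the real work lies (Conjecture~\ref{con:lambda_2} and the replica limit) are accurate and in line with the paper's framing.
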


Another corollary of Result~\ref{res:mps_neumann} is that \( \xone \) is also the average correlation length for \( I_\alpha (A : B) \) for any integer value of \( \alpha \geq 1 \). We prove also this statement in Appendix~\ref{app:mps_neumann}.

\begin{restatable}{corollary}{mpsalpha} \label{cor:mps_alpha}
	If Conjecture~\ref{con:lambda_2} holds, for any integer value of \( \alpha \geq 1 \), the average of \( I_\alpha (A : B) \) with respect to the random MPS ensemble and subsystems \( A \) and \( B \) as sketched in Fig.~\ref{fig:setup}~(a) decays exponentially as specified in Definition~\ref{def:definition} with the average correlation length \( \xone \) defined in Eq.~\eqref{eq:mps_length}.
\end{restatable}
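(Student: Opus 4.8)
The plan is to reduce the computation of $\mathbb{E} I_\alpha(A:B)$ to exactly the transfer-matrix analysis already developed for the von Neumann mutual information. First, observe that the case $\alpha = 1$ is precisely Result~\ref{res:mps_neumann}, so it remains to treat integer $\alpha \geq 2$. For such $\alpha$ the R\'enyi entropy $S_\alpha(\varrho) = \tfrac{1}{1-\alpha}\log[\tr(\varrho^\alpha)]$ already involves only an integer number of replicas, so---unlike in the von Neumann case---no analytic continuation in $\alpha$ is needed; the only replica trick required is $\mathbb{E}\log(X) = \lim_{v\to 0}\tfrac{1}{v}\log(\mathbb{E} X^v)$, applied to $X = \tr(\varrho_A^\alpha)$, $\tr(\varrho_B^\alpha)$, and $\tr(\varrho_{AB}^\alpha)$. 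This gives
\begin{align}
	\mathbb{E} I_\alpha(A:B) & = \frac{1}{1-\alpha}\lim_{v\to 0}\frac{1}{v}\Bigl[\log\mathbb{E}(\tr\varrho_A^\alpha)^v \nonumber \\
	& \quad + \log\mathbb{E}(\tr\varrho_B^\alpha)^v - \log\mathbb{E}(\tr\varrho_{AB}^\alpha)^v\Bigr],
\end{align}
and each of the three quantities $\mathbb{E}(\tr\varrho_A^\alpha)^v$, $\mathbb{E}(\tr\varrho_B^\alpha)^v$, $\mathbb{E}(\tr\varrho_{AB}^\alpha)^v$ is of the form of Eq.~\eqref{eq:mps_caffe} with $k = v\alpha$ and $P$ as in Eq.~\eqref{eq:mps_p}: the permutation on the sites of $A$ is a product of $v$ disjoint $\alpha$-cycles (one per replica of the purity), and likewise on $B$ for the term involving $\varrho_{AB}$; for $\mathbb{E}(\tr\varrho_A^\alpha)^v$ only $A$-sites carry a nontrivial permutation, and symmetrically for $B$.

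Next I would rewrite these quantities in terms of the transfer matrices $T_\rho$ with $\rho \in S_{v\alpha}$ via Eq.~\eqref{eq:mps_biscottino}, take the limit $c\to\infty$, and expand $T_e^r$ as in Eq.~\eqref{eq:expansion}. Propositions~\ref{prop:non-negative}, \ref{prop:diagonalizable}, and \ref{prop:lambda_1} apply for every $k = v\alpha \geq 2$, and Conjecture~\ref{con:lambda_2} supplies the crucial input that the subleading eigenvalue of $T_e$ with $e\in S_{v\alpha}$ equals $\mu_2 = (dD^2-d)/(d^2D^2-1)$, \emph{independently} of $v\alpha$ (with degeneracy $\binom{v\alpha}{2}$). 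Since the leading projector $\ketbra{R_1}{L_1}$ factorizes the two halves of the contraction, the leading contribution to $\mathbb{E}(\tr\varrho_{AB}^\alpha)^v$ is exactly the product $\mathbb{E}(\tr\varrho_A^\alpha)^v\,\mathbb{E}(\tr\varrho_B^\alpha)^v$, and the first correction is of order $\mu_2^r$, the next being $O(\lambda_3^r)$ with $\lambda_3 < \mu_2$. Hence the bracket above equals $-\log(1 + g(v)\,\mu_2^r + O(\lambda_3^r))$ for some $g(v)$ built from the eigenvectors of $T_e$ and the transfer matrices associated with the $\alpha$-cycle permutations on $A$ and $B$. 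Expanding the logarithm, dividing by $v(1-\alpha)$, and sending $v\to 0$ yields $\mathbb{E} I_\alpha(A:B) = K_\alpha\,\mu_2^r + O(\mu_2^{2r},\lambda_3^r)$, i.e. the form of Definition~\ref{def:definition} with $\xi = -1/\log\mu_2 = \xone$ as in Eq.~\eqref{eq:mps_length}, constant $K_\alpha \propto g'(0)/(\alpha-1)$, and $\chi = -1/\log\max(\mu_2^2,\lambda_3) < \xone$.

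\textbf{The main obstacle} is the one already flagged for Result~\ref{res:mps_neumann}: making the replica limit $v\to 0$ rigorous. One must justify interchanging $\lim_{v\to0}$ with the spectral expansion of $T_e^r$, verify that $g(v)\to 0$ as $v\to0$ with a well-defined derivative $g'(0)$ (so that $\lim_{v\to0} g(v)/v$ exists and produces the finite prefactor $K_\alpha$), and check that the subleading pieces stay subleading uniformly in $v$ near $0$. These are exactly the points addressed in App.~\ref{app:mps_neumann}, and they are why the statement is conditional on Conjecture~\ref{con:lambda_2}. By contrast, the $\alpha$-dependence itself is harmless here: $\alpha$ is an integer and enters only through the fixed cycle structure of the permutations on $A$ and $B$ and the overall prefactor $1/(1-\alpha)$, so once Result~\ref{res:mps_neumann} is in hand the integer-$\alpha$ case follows by the same argument with the $\alpha\to1$ limit removed.
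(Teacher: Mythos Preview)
Your proposal is correct and takes essentially the same approach as the paper: apply the replica trick $\mathbb{E}\log X = \lim_{v\to 0}\tfrac{1}{v}\log(\mathbb{E}X^v)$ to each R\'enyi entropy, then rerun the transfer-matrix argument of Result~\ref{res:mps_neumann} with $k=v\alpha$, dropping the $\alpha\to 1$ limit since $\alpha$ is already an integer. The paper's proof is a one-line version of precisely this observation, while you have helpfully unpacked the intermediate steps.
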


Finally, let us summarize the reason behind the persistent appearance of the average correlation length \( \xone \). In all of the examined cases, the asymptotic behavior of the correlations was determined by the asymptotic decay of \( T_e^r \). Although the transfer matrix \( T_e \) with \( e \in S_k \) does depend on the number of replicas \( k \), its asymptotic decay does not (given Conjecture~\ref{con:lambda_2}), resulting in a common average correlation length \( \xone \) across measures of correlation with different complexity.

\section{CORRELATIONS IN TWO DIMENSIONS} \label{sec:iso_results}

\begin{figure}
	\centering
	\includegraphics{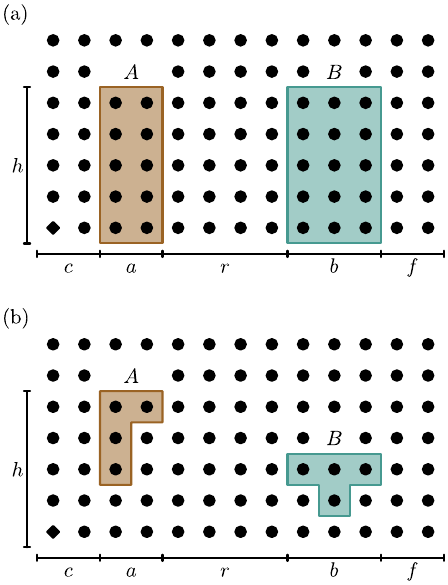}
	\caption{We investigate average correlations in random isoTNS between two subsystems \( A \) and \( B \) as a function of their horizontal distance \( r \). \( A \) and \( B \) respectively stretch across \( a \) and \( b \) consecutive horizontal sites. In addition to indicating the origin of the sequential generation, the diamond also indicates the orthogonality center of the isoTNS. (a) For Results~\ref{res:iso_renyi} and \ref{res:iso_norm}, we consider \( A \) and \( B \) that touch the orthogonality hypersurface and stretch across \( h \) consecutive vertical sites. (b) We will provide an additional result for the \( 2 \)-norm expression \( N (A : B) \) for arbitrary (but fixed) \( A \) and \( B \) that do not need to touch the orthogonality hypersurface.}
	\label{fig:iso_setup}
\end{figure}

In this section, we state and discuss in more detail the results for random isoTNS summarized in Sec.~\ref{sec:iso_summary}. In Sec.~\ref{sec:iso_transfer}, we develop the two-dimensional analogue to the transfer matrices introduced in Sec.~\ref{sec:mps_transfer}, the tool behind our proofs. In Sec.~\ref{sec:iso_renyi}, we compute the average of \( I_2 (A : B) \), and in Sec.~\ref{sec:iso_norm}, we investigate the decays of \( N (A : B) \) and \( T (A : B) \).

We show the exponential decay of the average for each considered measure of correlation and subsystems \( A \) and \( B \) as sketched in Fig.~\ref{fig:iso_setup}~(a) as a function of the horizontal distance \( r \) between \( A \) and \( B \). In particular, we prove that \( I_2 (A : B) \) and \( N (A : B) \) have a common average correlation length that is independent of the sizes of \( A \) and \( B \). Moreover, thanks to the more amenable properties of \( N (A : B) \), we are able to show that average correlations decay exponentially also for subsystems \( A \) and \( B \) that do not need to touch the orthogonality hypersurface [see Fig.~\ref{fig:iso_setup}~(b)]. As in one dimension, we prove our statements in the limit of \( c \to \infty \).

\subsection{Transfer matrices} \label{sec:iso_transfer}

As in one dimension, computing the average of each measure of correlation will involve computing multiple terms of the form \begin{align} \label{eq:iso_caffe}
	\tr \left ( P \mathbb{E} \ketbra{\psi}{\psi}^{\otimes k} \right ),
\end{align} where \( P \) has a similar tensor product structure as Eq.~\eqref{eq:mps_p}, adapted to the two-dimensional setting considered here. The number of required replicas \( k \) again depends on the considered measure of correlation.

We will thus need a two-dimensional analogue to the transfer matrices introduced in Sec.~\ref{sec:mps_transfer}. In contrast to the one-dimensional case, the size of the resulting transfer matrices will also depend on the geometry of the considered subsystems, making their analysis much more challenging. However, the procedure of defining the tensors is similar to that in one dimension. We provide an overview here and refer to Appendix~\ref{app:iso_transfer} for more details.

We define \( V^{(i, j)} = U^{(i, j)} \otimes \overline{U^{(i, j)}} \), where \( U^{(i, j)} \in U \left ( d D^2 \right ) \) is the unitary matrix depicted in Fig.~\ref{fig:sequential_generation}~(b2). By computing the \( k \)-fold twirl [see Eq.~\eqref{eq:twirl_graphical}], we obtain the building block \begin{align}
	\raisebox{-26pt}{\includegraphics{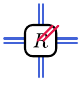}} & = \int \mathrm{d} U^{(i, j)} \, \raisebox{-26pt}{\includegraphics{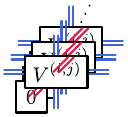}} \\
	& = \raisebox{-37pt}{\includegraphics{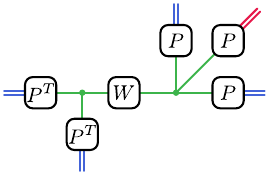}}.
\end{align}

As in one dimension, it is convenient to define a building block for which the contraction of bond (blue) legs is implicit. This results in a tensor with only permutation-valued (green) legs. As we show in Appendix~\ref{app:iso_transfer}, the resulting bulk tensor is given via \begin{align} \label{eq:iso_s}
	\raisebox{-24pt}{\includegraphics{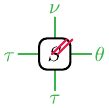}} & = \raisebox{-24pt}{\includegraphics{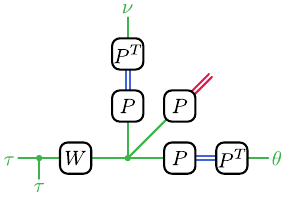}}.
\end{align} For the sake of brevity, the expressions for the boundary tensors are stated in Appendix~\ref{app:iso_transfer}.

Computing expressions of the form of Eq.~\eqref{eq:iso_caffe} corresponds to contracting each tensor \( S \) with \( P_\rho^{(d)} \). The entries of the resulting bulk tensor \( T_\rho \in \mathbb{R}^{k! \times k! \times k! \times k!} \) are given by \begin{align}
	& \vcenter{\hbox{\includegraphics{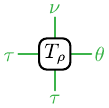}}} \nonumber \\
	& \qquad = \raisebox{-24pt}{\includegraphics{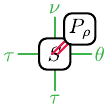}} = \raisebox{-24pt}{\includegraphics{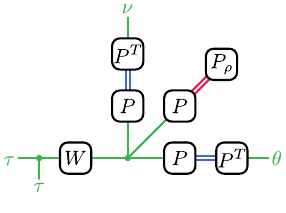}} \\
	& \qquad = \sum_{\sigma \in S_k} \Wg \left ( \sigma \tau^{- 1}, d D^2 \right ) d^{\# \left ( \sigma \rho \right )} D^{\# \left ( \sigma \theta^{- 1} \right )} D^{\# \left ( \sigma \nu^{- 1} \right )}. \label{eq:iso_fame}
\end{align}

In our computations, the site in the top-right corner belongs to neither \( A \) nor \( B \). It is thus acted upon by the trivial permutation \( e \in S_k \). As we show in Appendix~\ref{app:iso_transfer}, the corresponding tensor is given by \( T_e = \ketbra{F_k}{F_k} \).

In Appendix~\ref{app:iso_transfer}, we furthermore show that a property similar to Eq.~\eqref{eq:mps_sequential_generation} also holds for random isoTNS. That is, tensors corresponding to the trivial permutation \( e \in S_k \) on the boundary simplify. For example, \begin{align} \label{eq:iso_sequential_generation}
	\raisebox{-32pt}{\includegraphics{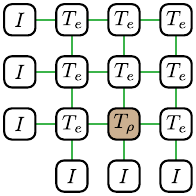}} = \raisebox{-32pt}{\includegraphics{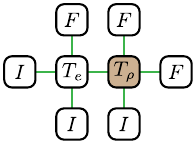}}.
\end{align} As for the one-dimensional case, the proper normalization of \( \mathbb{E} \ketbra{\psi}{\psi}^{\otimes k} \) follows directly from this property.

Instead of thinking in terms of contractions of two-dimensional tensor networks, it will later prove beneficial to think again in terms of multiplications of matrices. To that end, for any height \( h \) of subsystems \( A \) and \( B \) [see. Fig.~\ref{fig:iso_setup}~(a)], we define \begin{align} \label{eq:iso_to_mps}
    \vcenter{\hbox{\includegraphics{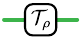}}} = \vcenter{\hbox{\includegraphics{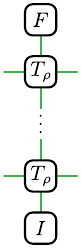}}}
\end{align} as well as \begin{align} \label{eq:iso_vectors}
    \vcenter{\hbox{\includegraphics{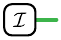}}} = \vcenter{\hbox{\includegraphics{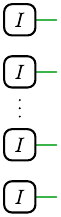}}} \qquad \text{and} \qquad \vcenter{\hbox{\includegraphics{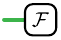}}} = \vcenter{\hbox{\includegraphics{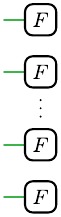}}}.
\end{align}

For subsystems \( A \) and \( B \) as sketched in Fig.~\ref{fig:iso_setup}~(a), we can now write Eq.~\eqref{eq:iso_caffe} in terms of transfer matrices: \begin{align} \label{eq:iso_biscottino}
    \tr \left ( P \mathbb{E} \ketbra{\psi}{\psi}^{\otimes k} \right ) = \fv{\mathcal{I}_k}{\mathcal{T}_e^c \mathcal{T}_\alpha^a \mathcal{T}_e^r \mathcal{T}_\beta^b}{\mathcal{F}_k}
\end{align}

We provide an additional Mathematica package~\cite{GitHub} that defines \( T_\rho \) with \( \rho \in S_k \) for \( k \in \{ 1, \dots, 20 \} \) according to Eq.~\eqref{eq:iso_fame}. Once again, the package relies on the one provided by the authors of Ref.~\cite{Fukuda2019} for evaluating the Weingarten function.

\subsection{Estimating the decay of correlations} \label{sec:iso_correlations}

Equation~\eqref{eq:iso_biscottino} implies that, for subsystems \( A \) and \( B \) as sketched in Fig.~\ref{fig:iso_setup}~(a), the decay of the average of each measure of correlation will again reduce to a statement in terms of transfer matrices. In particular, the decay will be determined by the spectrum of \( \mathcal{T}_e \) with \( e \in S_k \). Notice that, in addition to a dependence on \( k \), the form and properties of \( \mathcal{T}_e \) now depend also on the height \( h \) of subsystems \( A \) and \( B \) [see. Fig.~\ref{fig:iso_setup}~(a)]. However, as we prove in the following sections, its two leading eigenvalues are independent of \( h \) for at least \( k = 2 \) and \( k = 4 \). Crucially, this will allow us to make statements about the decay of the averages of \( I_2 (A : B) \) and \( N (A : B) \) for arbitrary \( h \). 

Note that we could, in principle, investigate vertical separation instead of horizontal separation because \begin{align}
    \vcenter{\hbox{\includegraphics{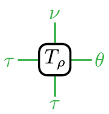}}} = \vcenter{\hbox{\includegraphics{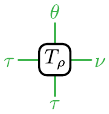}}}.
\end{align} The underlying exchange of indices does not affect the spectrum of the relevant identity transfer matrix and thus neither the average correlation length. This reflects the fact that the sequential generation procedure is symmetric in the horizontal and vertical spatial directions.

\subsection{R\'{e}nyi-\texorpdfstring{\( 2 \)}{2} mutual information} \label{sec:iso_renyi}

In this section, we compute the average of the R\'{e}nyi-\( 2 \) mutual information \( I_2 (A : B) \) for random isoTNS that are generated as sketched in Fig.~\ref{fig:sequential_generation}~(b1). Subsystems \( A \) and \( B \) are defined in Fig.~\ref{fig:iso_setup}~(a).

As in one dimension, we will make the assumption that \( \mathbb{E} \log (X) = \log (\mathbb{E} X) \) (see Sec.~\ref{sec:mps_renyi}). Step~\ref{step:iso_renyi_1} of the proof of Result~\ref{res:iso_renyi} (see Appendix~\ref{app:iso_renyi}) is thus almost identical to Step~\ref{step:mps_renyi_1} of the proof of Result~\ref{res:mps_renyi}.

Also Step~\ref{step:iso_renyi_2} is largely analogous. To see this, let us take \( \mathbb{E} \tr \left (\varrho_A^2 \right ) \) as an example. With \( A \) and \( B \) as defined in Fig.~\ref{fig:iso_setup}~(a) and \( x = (1 2) \), \begin{align}
	\mathbb{E} \tr \left (\varrho_A^2 \right ) & = \vcenter{\hbox{\includegraphics{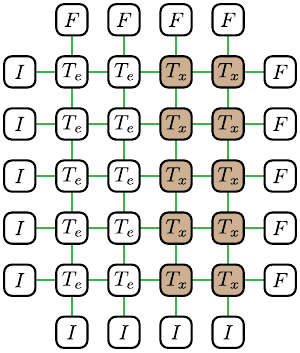}}} \label{eq:iso_renyi_1} \\
	& = \vcenter{\hbox{\includegraphics{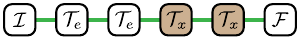}}} \label{eq:iso_renyi_2} \\
	& = \fv{\mathcal{I}_2}{\mathcal{T}_e^c \mathcal{T}_x^a}{\mathcal{F}_2}.
\end{align} The resulting expression for \( \mathbb{E} I_2 (A : B) \), \begin{align}
	\mathbb{E} I_2 (A : B) & = \log \left ( \fv{\mathcal{I}_2}{\mathcal{T}_e^c \mathcal{T}_x^a \mathcal{T}_e^r \mathcal{T}_x^b}{\mathcal{F}_2} \right ) \nonumber \\
	& \hphantom{{} = {}} - \log \bigl ( \fv{\mathcal{I}_2}{\mathcal{T}_e^c \mathcal{T}_x^a}{\mathcal{F}_2} \bigr ) \nonumber \\
	& \hphantom{{} = {}} - \log \left ( \fv{\mathcal{I}_2}{\mathcal{T}_e^{c + a + r} \mathcal{T}_x^b}{\mathcal{F}_2} \right ),
\end{align} resembles Eq.~\eqref{eq:mps_renyi_1} closely.

The remaining technical challenge is the analysis of the spectrum of \( \mathcal{T}_e \) with \( e \in S_2 \). Let \( \lambda_1 > \lambda_2 > \cdots \geq 0 \) denote its distinct eigenvalues. We show in Appendix~\ref{app:iso_spectrum_2} that \begin{align}
	\lambda_1 = 1 \qquad \text{and} \qquad \lambda_2 = \frac{d D^3 - d D}{d^2 D^4 - 1}
\end{align} and that both eigenvalues are nondegenerate. In the proof, that holds for any \( h \), we map the contraction of tensors defining \( \mathcal{T}_e \) with \( e \in S_2 \) to a multiplication of matrices. Using the Weingarten calculus, we show that \( \mathcal{T}_e \) is upper block triangular, a property that simplifies the analysis of its spectrum. The main difficulty is then to prove that the specified \( \lambda_2  \) is indeed the subleading eigenvalue for all \( h \). We do this by exploiting substochasticity.

Following the same reasoning as before, we find that the average of \( I_2 (A : B) \) decays exponentially as specified in Definition~\ref{def:definition}. We prove this result in Appendix~\ref{app:iso_renyi}.

\begin{restatable}{result}{isorenyi} \label{res:iso_renyi}
	The average of \( I_2 (A : B) \) with respect to the random isoTNS ensemble and subsystems \( A \) and \( B \) as sketched in Fig.~\ref{fig:iso_setup}~(a) decays exponentially as specified in Definition~\ref{def:definition} with the average correlation length \( \xtwo \) defined in Eq.~\eqref{eq:iso_length}.
\end{restatable}

\subsection{Trace distance and \texorpdfstring{\( 2 \)}{2}-norm} \label{sec:iso_norm}

In this section, we show the exponential decay of the average of \( N (A : B) \) for random isoTNS. As for the one-dimensional case, we do that to eventually make conclusions about the behavior of the trace distance \( T (A : B) \).

While it is not trivial to compute the average of \( N (A : B) \) with respect to the random isoTNS ensemble and subsystems \( A \) and \( B \) as sketched in Fig.~\ref{fig:iso_setup}~(a), the computation follows along the lines of what we have laid out in Sec.~\ref{sec:mps_norm}. In particular, we find that the decay of the average of \( N (A : B) \) is determined by the spectrum of the transfer matrix \( \mathcal{T}_e \) with \( e \in S_4 \).

Let \( \lambda_1 > \lambda_2 > \cdots \geq 0 \) denote the distinct eigenvalues of \( \mathcal{T}_e \) with \( e \in S_4 \). As we show in Appendix~\ref{app:iso_spectrum_4}, for any \( h \), \begin{align}
    \lambda_1 = 1 \qquad \text{and} \qquad \lambda_2 = \frac{d D^3 - d D}{d^2 D^4 - 1}.
\end{align} As in one dimension, the former is nondegenerate, while the degeneracy of the latter is given by the number of transpositions in \( S_4 \), \begin{align}
    w_2 = \binom{4}{2} = 6.
\end{align} While the analysis of the spectrum is, in principle, similar to that of the spectrum of \( \mathcal{T}_e \) with \( e \in S_2 \), it is considerably more technical. This is largely due to the fact that the matrices whose multiplication defines \( \mathcal{T}_e \) with \( e \in S_4 \) are significantly more complex. Still, we find also \( \mathcal{T}_e \) with \( e \in S_4 \) to be upper block triangular, allowing us to show that the specified \( \lambda_2 \) is indeed the subleading eigenvalue.

Given \( \lambda_2 \) of \( \mathcal{T}_e \) with \( e \in S_4 \) and the arguments developed in Sec.~\ref{sec:mps_norm}, we can state the first result of this section, which we prove in Appendix~\ref{app:iso_norm}.

\begin{restatable}{result}{isonorm} \label{res:iso_norm}
	The average of \( N (A : B) \) with respect to the random isoTNS ensemble and subsystems \( A \) and \( B \) as sketched in Fig.~\ref{fig:iso_setup}~(a) decays exponentially as specified in Definition~\ref{def:definition} with the average correlation length \( \xtwo \) defined in Eq.~\eqref{eq:iso_length}.
\end{restatable}

We now turn to the case of correlations in isoTNS with arbitrary (but fixed) subsystems \( A \) and \( B \) [see Fig.~\ref{fig:iso_setup}~(b)]. The decay of the average of \( N (A : B) \) for arbitrary \( A \) and \( B \) can be bounded by employing the fact that the Schatten \( 2 \)-norm satisfies~\cite{Rastegin2012} \begin{align}
    \left \| \tr_B \left ( X_{A B} \right ) \right \|_2 \leq \sqrt{\dim (B)} \left \| X_{A B} \right \|_2,
\end{align} where \( X_{A B} \) is any bipartite operator and \( \dim (B) \) is the dimension of the Hilbert space that is traced out. This means that \begin{align}
    N (A : B) \leq d^{\left | A_C \right | + \left | B_C \right |} N (A' : B'),
\end{align} where \( A \) is now an arbitrary subsystem, \( A' \) is its (minimal) enclosing rectangle that touches the hypersurface, and \( A_C = A' - A \). \( B' \) and \( B_C \) are defined analogously for \( B \).

Using the statement above, we can bound the decay of the average of \( N (A : B) \) for arbitrary \( A \) and \( B \) as a straightforward corollary of Result~\ref{res:iso_norm}. We stress that we consider regime in which the distance \( r \) between \( A \) and \( B \) grows.

\begin{corollary} \label{cor:iso_norm}
    For arbitrary subsystems \( A \) and \( B \), the average of \( N (A : B) \) with respect to the random isoTNS ensemble decays as
    \begin{align}
        N (A : B) =  O \left [ \exp \left ( - \frac{r}{\xtwo} \right ) \right ],
    \end{align} where the average correlation length \( \xtwo \) is defined in Eq.~\eqref{eq:iso_length}.
\end{corollary}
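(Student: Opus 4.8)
The plan is to reduce the general case to Result~\ref{res:iso_norm} by enclosing the two arbitrary subsystems in rectangles that touch the orthogonality hypersurface and then tracking the overhead this introduces. The starting point is the deterministic bound stated just above the corollary, $N(A:B) \leq d^{|A_C| + |B_C|}\, N(A':B')$, which follows from iterating the contraction property of the Schatten-$2$ norm under partial trace~\cite{Rastegin2012}: one writes $\varrho_{AB} - \varrho_A \otimes \varrho_B = \tr_{A_C \cup B_C}(\varrho_{A'B'} - \varrho_{A'} \otimes \varrho_{B'})$ and discards the sites of $A_C$ and of $B_C$ one at a time, each discarded site contributing a factor $d$ after squaring. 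This bound holds for every realization of the ensemble.

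First I would fix the geometry. Let $A'$ be a minimal axis-aligned rectangle that contains $A$ and touches the orthogonality hypersurface, and define $B'$ analogously for $B$; then enlarge whichever of the two rectangles is shorter so that $A'$ and $B'$ share a common vertical height $h$. By construction $A' \supseteq A$ and $B' \supseteq B$, and $A'$, $B'$ are exactly of the form admitted in Fig.~\ref{fig:iso_setup}~(a). Because $A$ and $B$ have fixed shapes, the complement sizes $|A_C| = |A'| - |A|$ and $|B_C| = |B'| - |B|$, as well as the common height $h$, are constants independent of the horizontal separation $r$. Moreover, the right boundary of $A'$ coincides with the right boundary of the bounding box of $A$, and symmetrically for $B'$, so the horizontal distance $r'$ between $A'$ and $B'$ satisfies $r' = r - O(1)$, with the $O(1)$ term again depending only on the fixed shapes.

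Next I would take expectations in the deterministic bound, $\mathbb{E}\, N(A:B) \leq d^{|A_C| + |B_C|}\, \mathbb{E}\, N(A':B')$, and invoke Result~\ref{res:iso_norm}, which applies for any height $h$. It gives $\mathbb{E}\, N(A':B') = K \exp(-r'/\xtwo) + O[\exp(-r'/\chi)]$ for some $\chi < \xtwo$, hence $\mathbb{E}\, N(A':B') = O[\exp(-r'/\xtwo)]$. Substituting $r' = r - O(1)$ and absorbing the bounded prefactor $d^{|A_C| + |B_C|}$ into the implied constant yields $\mathbb{E}\, N(A:B) = O[\exp(-r/\xtwo)]$, which is the assertion.

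I do not anticipate a serious obstacle, since the whole argument is inherited from Result~\ref{res:iso_norm} and requires no new spectral analysis of transfer matrices. The only point deserving care is the geometric bookkeeping of the first two steps: checking that $A'$ and $B'$ can always be chosen to touch the same hypersurface with a common height, and that both the multiplicative overhead $d^{|A_C| + |B_C|}$ and the distance shift $r - r'$ are independent of $r$. Once the lattice geometry of Fig.~\ref{fig:iso_setup} is drawn out, these are immediate.
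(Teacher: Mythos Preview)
Your proposal is correct and follows essentially the same approach the paper takes: apply the deterministic bound $N(A:B) \leq d^{|A_C|+|B_C|} N(A':B')$, take expectations, and invoke Result~\ref{res:iso_norm}. The paper dismisses this as ``straightforward'' and omits the details; you supply them, including the observation that $A'$ and $B'$ must be given a common height $h$ so that Result~\ref{res:iso_norm} applies, a point the paper leaves implicit. One minor over-caution: since the enclosing rectangles only extend $A$ and $B$ vertically, the horizontal separation satisfies $r'=r$ exactly, not merely $r'=r-O(1)$.
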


Finally, we state a concentration result for \( N (A : B) \), which, in combination with Eq.~\eqref{eq:iso_norm_1}, also allows us to draw conclusions about the typical behavior of \( T (A : B) \). As before, we consider arbitrary (but fixed) subsystems \( A \) and \( B \) [see Fig.~\ref{fig:iso_setup}~(b)].

\begin{corollary} \label{cor:iso_norm_concentration}
	For arbitrary subsystems \( A \) and \( B \), sufficiently large \( r \), and all \( 0 < \varepsilon < 1 \), the random isoTNS ensemble satisfies \begin{align}
		\Pr \left \{ N (A : B) \geq K \exp \left [ - \frac{(1 - \varepsilon) r}{\xtwo} \right ] \right \} \leq \exp \left ( - \frac{\varepsilon r}{\xtwo} \right ),
	\end{align} where \( K \) is constant with respect to \( r \) and the average correlation length \( \xtwo \) is defined in Eq.~\eqref{eq:iso_length}.
\end{corollary}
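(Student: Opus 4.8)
The plan is to follow the proof of Corollary~\ref{cor:mps_norm_concentration} essentially verbatim, since the only facts used there are that $N(A:B)$ is a non-negative random variable and that its average decays exponentially in $r$, both of which are available here as well. First I would invoke Corollary~\ref{cor:iso_norm}, which states that for arbitrary subsystems $A$ and $B$ the average satisfies $\mathbb{E} N(A:B) = O[\exp(-r/\xtwo)]$; hence there is a constant $K$, independent of $r$, with $\mathbb{E} N(A:B) \leq K \exp(-r/\xtwo)$ for all sufficiently large $r$.

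Next I would apply Markov's inequality. Since $N(A:B) = \| \varrho_{AB} - \varrho_A \otimes \varrho_B \|_2^2 \geq 0$, for any $\eta > 0$ one has $\Pr[N(A:B) \geq \eta \, \mathbb{E} N(A:B)] \leq 1/\eta$. For $r$ large enough that the average bound holds, $\eta \, \mathbb{E} N(A:B) \leq \eta K \exp(-r/\xtwo)$, so the event $\{ N(A:B) \geq \eta K \exp(-r/\xtwo) \}$ is contained in $\{ N(A:B) \geq \eta \, \mathbb{E} N(A:B) \}$ and therefore has probability at most $1/\eta$. Choosing $\eta = \exp(\varepsilon r/\xtwo)$ converts the threshold into $K \exp[-(1-\varepsilon) r/\xtwo]$ and the probability bound into $\exp(-\varepsilon r/\xtwo)$, which is exactly the claimed inequality; the restriction $0 < \varepsilon < 1$ guarantees $\eta > 1$ and that the threshold still decays with $r$.

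There is essentially no obstacle left at this stage: the substantive work — the spectral analysis of $\mathcal{T}_e$ with $e \in S_4$ for arbitrary height $h$, and the dimension-counting argument reducing arbitrary $A, B$ to their enclosing rectangles on the orthogonality hypersurface — is already packaged into Result~\ref{res:iso_norm} and Corollary~\ref{cor:iso_norm}. The only point deserving a line of care is keeping the ``sufficiently large $r$'' qualifier consistent between the bound on $\mathbb{E} N(A:B)$ and the containment of events used in the Markov step.
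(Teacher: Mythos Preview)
Your proposal is correct and matches the paper's approach exactly: the paper states that the proofs of Corollaries~\ref{cor:iso_norm_concentration} and \ref{cor:iso_trace_concentration} are identical to their one-dimensional counterparts, and you have reproduced the argument of Corollary~\ref{cor:mps_norm_concentration} with the appropriate substitution of Corollary~\ref{cor:iso_norm} for the average bound and $\xtwo$ for $\xone$.
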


\begin{corollary} \label{cor:iso_trace_concentration}
    For arbitrary subsystems \( A \) and \( B \), sufficiently large \( r \), and all \( 0 < \varepsilon < 1 \), the random isoTNS ensemble satisfies \begin{align}
		\Pr \left \{ T (A : B) \geq K \exp \left [ - \frac{(1 - \varepsilon) r}{2 \xtwo} \right ] \right \} \leq \exp \left ( - \frac{\varepsilon r}{\xtwo} \right ),
	\end{align} where \( K \) is constant with respect to \( r \) and the average correlation length \( \xtwo \) is defined in Eq.~\eqref{eq:iso_length}.
\end{corollary}

The proofs and discussions of these results are identical to their one-dimensional counterparts.

The tools we have developed in this and the previous sections should, in principle, allow us to make statements also about the decay of the average of the von Neumann mutual information \( I (A : B) \) with respect to the random isoTNS ensemble. As in one dimension, we would need to investigate the spectrum of \( \mathcal{T}_e \) with \( e \in S_k \) for all \( k \geq 2 \). However, as the analysis of the spectrum of \( \mathcal{T}_e \) is already quite technical for \( e \in S_4 \) with our methods, we refrain from tackling the spectrum for \( k \geq 5 \) here.

\section{CONCLUSION} \label{sec:conclusion}

We have investigated the average behavior of correlations between two distant subsystems \( A \) and \( B \) for ensembles of random MPS and isoTNS. As measures of correlation, we have considered the R\'{e}nyi-\( \alpha \) mutual information, a measure arising from the Hilbert-Schmidt norm, the trace distance, and the von Neumann mutual information. We have shown that the average of each considered measure exhibits an exponential decay. Our results can equivalently be seen as describing states resulting from quantum circuits with a sequential architecture and Haar random gates.

By leveraging the Weingarten calculus, we have developed a mathematical framework that allows to infer the average correlation length from the subleading eigenvalue of an appropriately defined transfer matrix. We have computed the averages of the R\'{e}nyi-\( \alpha \) mutual information and the measure arising from the Hilbert-Schmidt norm to show the emergence of an average correlation length that depends only on the underlying spatial dimension but not the considered measure. In particular, the average correlation length for random MPS increases weakly with the bond dimension \( D \) and converges rapidly (as \( D \) grows) to \( 1 / \log (d) \), where \( d \) is the physical dimension. In contrast, the average correlation length for random isoTNS decreases with \( D \). The highest average correlation length for random isoTNS is achieved with the lowest nontrivial bond dimension (\( D = 2 \)).

Using elementary concentration results, we have furthermore deduced the typical behavior of the measure arising from the Hilbert-Schmidt norm, which has in turn allowed us to make similar statements about the trace distance. For MPS, we have been able to give strong indications that the universal correlation length applies also to the von Neumann mutual information, and also any R\'{e}nyi-\( \alpha \) mutual information for integer values of \( \alpha \geq 1 \). It would be interesting to prove this behavior rigorously and also investigate its validity for isoTNS. Another possible future direction would be to study average correlations in more general random PEPS as well as other types of quantum circuit architectures. To extend our results from the class of isoTNS to generic PEPS, a protocol to sequentially generate the latter would be needed.

It could also be interesting to investigate the implications of our results on state-preparation schemes and variational algorithms for MPS and particularly isoTNS. Regarding the former, it would be interesting to see if typical sequentially generated states, being short-range correlated, can be prepared by shallower circuits. On the numerical side, we expect our findings to be relevant for the performance of variational algorithms that approximate long-range correlated states with MPS and isoTNS. Our results imply that (even more for isoTNS than for MPS) polynomially decaying states are very special examples of the two variational families.

\section*{ACKNOWLEDGMENTS}

We thank Ignacio Cirac and Philippe Faist for fruitful discussions. F.B. and G.S. acknowledge financial support from the Alexander von Humboldt Foundation.

\bibliography{references}

\appendix

\onecolumngrid

\section{\texorpdfstring{\( k \)}{k}-FOLD TWIRL} \label{app:twirl}

In this Appendix, we present some additional details on the \( k \)-fold twirl. In particular, we go from Eq.~\eqref{eq:twirl_before} to Eq.~\eqref{eq:twirl_after}. To do this, we will need a result of Ref.~\cite{Collins2003}, which appears as Corollary~2.4 in Ref.~\cite{Collins2006}. We state it without proof as Lemma~\ref{lem:collins_2}.

\begin{lemma} \label{lem:collins_2}
	Let \( k \) be a positive integer, and \( \left ( i_1, \dots, i_k \right ) \), \( \left ( j_1, \dots, j_k \right ) \), \( \left ( \ell_1, \dots, \ell_k \right ) \), and \( \left ( m_1, \dots, m_k \right ) \) be \( k \)-tuples of positive integers. Then, \begin{align}
		\int \mathrm{d} U \, U_{i_1 j_i} \cdots U_{i_k j_k} \overline{U_{m_1 \ell_1}} \cdots \overline{U_{m_k \ell_k}} = \sum_{\sigma, \tau \in S_k} \Wg \left ( \sigma^{- 1} \tau, q \right ) \delta_{i_1 m_{\sigma (1)}} \cdots \delta_{i_k m_{\sigma (k)}} \delta_{j_1 \ell_{\tau (1)}} \cdots \delta_{j_k l_{\tau (k)}},
	\end{align} where the integration is with respect to the Haar measure on the unitary group \( U (q) \).
\end{lemma}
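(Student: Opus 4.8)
The plan is to recognise the left-hand side as a single matrix element of the $k$-fold twirl operator $T:=\int\mathrm d U\,(U\otimes\overline U)^{\otimes k}$ and to compute $T$ via Schur--Weyl duality together with the elementary linear algebra of a spanning (in general overcomplete) set. First I would abbreviate the multi-indices $\bm i=(i_1,\dots,i_k)$ and likewise $\bm j,\bm m,\bm\ell$, and observe that the integrand equals $\fv{\bm i}{U^{\otimes k}}{\bm j}\,\overline{\fv{\bm m}{U^{\otimes k}}{\bm\ell}}$, which is the matrix element of $(U\otimes\overline U)^{\otimes k}$ between $\bra{\bm i,\bm m}$ and $\ket{\bm j,\bm\ell}$ up to a fixed, $U$-independent reordering of tensor factors; hence the whole integral is $\fv{\bm i,\bm m}{T}{\bm j,\bm\ell}$. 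Since $U\mapsto(U\otimes\overline U)^{\otimes k}$ is a unitary representation of $U(q)$, left-invariance of the Haar measure gives $T^2=T$ and invariance under $U\mapsto U^{-1}$ gives $T^\dagger=T$, so $T$ is the orthogonal projector onto the subspace $W$ of vectors fixed by $(V\otimes\overline V)^{\otimes k}$ for every $V$.

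Next I would identify $W$. Under the operator--vector correspondence, the action of $(V\otimes\overline V)^{\otimes k}$ corresponds to the conjugation $X\mapsto V^{\otimes k}X\big(V^{\otimes k}\big)^\dagger$ on $\mathrm{End}\big((\mathbb C^q)^{\otimes k}\big)$, so $W$ is the image under $\fec$ of the commutant of $\{V^{\otimes k}:V\in U(q)\}$; by Schur--Weyl duality this commutant is $\spn\{P_\sigma^{(q)}:\sigma\in S_k\}$. Thus $W=\spn\{\ket{v_\sigma}\}$ with $\ket{v_\sigma}:=\fec\big(P_\sigma^{(q)}\big)$, and the Gram matrix of this spanning set is $\braket{v_\sigma}{v_\tau}=\tr\big(P_{\sigma^{-1}}^{(q)}P_\tau^{(q)}\big)=q^{\#(\sigma^{-1}\tau)}$, which coincides with the matrix $G$ of Eq.~\eqref{eq:gram} (the exponent being invariant under inversion and conjugacy).

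I would then invoke the frame formula for an orthogonal projector: if $\{\ket{v_\sigma}\}$ spans a subspace with Gram matrix $G$, the orthogonal projector onto it is $\sum_{\sigma,\tau}(G^{+})_{\sigma\tau}\ketbra{v_\sigma}{v_\tau}$ with $G^{+}$ the Moore--Penrose pseudoinverse --- immediate from $T=AG^{+}A^\dagger$ where $A\ket{e_\sigma}=\ket{v_\sigma}$ and $A^\dagger A=G$ --- and when the $\ket{v_\sigma}$ are linearly independent (for $\{P_\sigma^{(q)}\}$ exactly the case $k\le q$) this is $G^{-1}$, whose entries are $\Wg(\sigma\tau^{-1},q)$ by definition. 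Substituting $\ket{v_\sigma}=\fec(P_\sigma^{(q)})$, taking the matrix element $\fv{\bm i,\bm m}{\cdot}{\bm j,\bm\ell}$, and using $\braket{\bm i,\bm m}{v_\sigma}=\prod_a\delta_{i_a\,m_{\sigma^{-1}(a)}}$ and $\braket{v_\tau}{\bm j,\bm\ell}=\prod_a\delta_{j_a\,\ell_{\tau^{-1}(a)}}$, the double sum becomes the right-hand side after the relabelling $\sigma\mapsto\sigma^{-1}$, $\tau\mapsto\tau^{-1}$ (which turns $\Wg(\sigma\tau^{-1},q)$ into $\Wg(\sigma^{-1}\tau,q)$ and fixes the orientation of the $\delta$'s); equivalently one may note that $\Wg$ is a class function and $S_k$-conjugacy classes are inversion-closed, so this is consistent with the convention of Eq.~\eqref{eq:twirl_after}.

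The conceptual skeleton --- projector $=$ group average, range $=$ commutant, commutant $=$ span of permutations --- is standard, and I expect the main obstacle to lie in two places. The minor one is the index bookkeeping just described: which block each permutation acts on, and whether a given convention produces $\sigma$ or $\sigma^{-1}$, must be tracked carefully so that the argument of $\Wg$ and the pairing of the Kronecker deltas come out exactly as stated. The more substantive one is the degenerate regime $k>q$, where $G$ is singular and $\{P_\sigma^{(q)}\}$ linearly dependent: Step~3 then only produces $\sum_{\sigma,\tau}(G^{+})_{\sigma\tau}\ketbra{v_\sigma}{v_\tau}$, and one must argue separately that this matches $\Wg$ as it is \emph{defined} for $k>q$, namely by analytic continuation in $q$ (cf.\ footnote~1) --- most cleanly by observing that both sides of the asserted identity are rational functions of $q$ that agree for every integer $q\ge k$, hence agree identically. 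An alternative proof avoids Schur--Weyl entirely: use left/right Haar-invariance to show that the integral, read as a tensor, is invariant under $U\otimes U$ on the $(\bm i,\bm m)$ indices and $V\otimes V$ on the $(\bm j,\bm\ell)$ indices, deduce it has the form $\sum_{\sigma,\tau}c_{\sigma\tau}\prod_a\delta_{i_a m_{\sigma(a)}}\prod_a\delta_{j_a\ell_{\tau(a)}}$, and pin down $c=G^{-1}$ by contracting with the $\delta$-products; this merely replaces the Schur--Weyl input with an invariant-theory one.
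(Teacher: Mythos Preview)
The paper does not prove this lemma at all: it explicitly states it ``without proof'' as a result of Ref.~\cite{Collins2003}, appearing as Corollary~2.4 in Ref.~\cite{Collins2006}, and then \emph{uses} it to derive Eq.~\eqref{eq:twirl_after} from Eq.~\eqref{eq:twirl_before}. Your proposal therefore goes beyond what the paper does.

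Your argument is correct and is one of the standard routes to the Weingarten formula: Haar average $\Rightarrow$ orthogonal projector onto the invariant subspace, Schur--Weyl to identify that subspace with $\spn\{P_\sigma^{(q)}\}$, and the frame/pseudoinverse identity $T=\sum_{\sigma,\tau}(G^{+})_{\sigma\tau}\ketbra{v_\sigma}{v_\tau}$ to extract the coefficients. The bookkeeping with $\sigma\mapsto\sigma^{-1}$ and the class-function property of $\Wg$ is handled correctly, and your treatment of the degenerate case $k>q$ via rational-function continuation is exactly the mechanism alluded to in the paper's footnote. The alternative invariant-theory route you sketch at the end is essentially the original Collins--\'Sniady argument. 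In short: the paper takes the lemma as a black box, whereas you supply a self-contained proof; nothing to correct.
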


By Lemma~\ref{lem:collins_2}, \begin{align}
	\left ( \mathcal{T}_U^{(k)} (X) \right )_{i_1 \cdots i_k m_1 \cdots m_k} & = \sum_{\substack{j_1, \dots, j_k \\ \ell_1, \dots, \ell_k}} \int \mathrm{d} U \, U_{i_1 j_i} \cdots U_{i_k j_k} X_{j_1 \cdots j_k \ell_1 \cdots \ell_k} \overline{U_{m_1 \ell_1}} \cdots \overline{U_{m_k \ell_k}} \\
	& = \sum_{\substack{j_1, \dots, j_k \\ \ell_1, \dots, \ell_k}} \sum_{\sigma, \tau \in S_k} \Wg \left ( \sigma^{- 1} \tau, q \right ) \delta_{i_1 m_{\sigma (1)}} \cdots \delta_{i_k m_{\sigma (k)}} X_{j_1 \cdots j_k \ell_1 \cdots \ell_k} \delta_{j_1 \ell_{\tau (1)}} \cdots \delta_{j_k l_{\tau (k)}} \\
	& = \sum_{j_1, \dots, j_k} \sum_{\sigma, \tau \in S_k} \Wg \left ( \sigma^{- 1} \tau, q \right ) \delta_{i_1 m_{\sigma (1)}} \cdots \delta_{i_k m_{\sigma (k)}} X_{j_1 \cdots j_k j_{\tau^{- 1} (1)} \cdots j_{\tau^{- 1} (k)}} \\
	& = \sum_{\sigma, \tau \in S_k} \Wg \left ( \sigma^{- 1} \tau, q \right ) \left ( P_{\sigma^{- 1}}^{(q)} \right )_{i_1 \cdots i_k m_1 \cdots m_k} \tr \left ( X P_\tau^{(q)} \right ),
\end{align} where, in the final line, we have used that \begin{align}
	\left ( P_{\sigma^{- 1}}^{(q)} \right )_{i_1 \cdots i_k m_1 \cdots m_k} & = \fv{i_1 \cdots i_k}{P_{\sigma^{- 1}}^{(q)}}{m_1 \cdots m_k} \\
	& = \braket{i_1 \cdots i_k}{m_{\sigma (1)} \cdots m_{\sigma (k)}} \\
	& = \delta_{i_1 m_{\sigma (1)}} \cdots \delta_{i_k m_{\sigma (k)}}
\end{align} and that \begin{align}
	\tr \left ( X P_\tau^{(q)} \right ) & = \sum_{j_1, \dots, j_k} \fv{j_1 \cdots j_k}{X P_\tau^{(q)}}{j_1 \cdots j_k} \\
	& = \sum_{j_1, \dots, j_k} \fv{j_1 \cdots j_k}{X}{j_{\tau^{- 1} (1)} \cdots j_{\tau^{- 1} (k)}} \\
	& = X_{j_1 \cdots j_k j_{\tau^{- 1} (1)} \cdots j_{\tau^{- 1} (k)}}.
\end{align}

Thus, \begin{align}
	\mathcal{T}_U^{(k)} (X) & = \sum_{\sigma, \tau \in S_k} \Wg \left ( \sigma^{- 1} \tau, q \right ) P_{\sigma^{- 1}}^{(q)} \tr \left ( X P_\tau^{(q)} \right ) \\
	& = \sum_{\sigma, \tau \in S_k} \Wg \left ( \sigma \tau^{- 1}, q \right ) P_\sigma^{(q)} \tr \left ( X P_{\tau^{- 1}}^{(q)} \right ) \\
	& = \sum_{\sigma, \tau \in S_k} \Wg \left ( \sigma \tau^{- 1}, q \right ) P_\sigma^{(q)} \tr \left [ X \left ( P_\tau^{(q)} \right )^T \right ],
\end{align} which coincides with Eq.~\eqref{eq:twirl_after}.

In addition, let us confirm that \( \mathcal{T}_U^{(k)} \left ( P_\rho^{(q)} \right ) = P_\rho^{(q)} \)~\cite{Roberts2017, Brandao2021}. Indeed, \begin{align}
	\mathcal{T}_U^{(k)} \left ( P_\rho^{(q)} \right ) & = \sum_{\sigma, \tau \in S_k} \Wg \left ( \sigma \tau^{- 1}, q \right ) P_\sigma^{(q)} \tr \left [ P_\rho^{(q)} \left ( P_\tau^{(q)} \right )^T \right ] \\
	& = \sum_{\sigma, \tau \in S_k} \Wg \left ( \sigma \tau^{- 1}, q \right ) P_\sigma^{(q)} q^{\# \left (\rho \tau^{- 1} \right )} \\
	& = \sum_{\sigma \in S_k} \delta_{\rho \sigma} P_\sigma^{(q)} \\
	& = P_\rho^{(q)},
\end{align} where, in the third line, we have used the definition of the Weingarten function.

\section{PROOFS OF PROPOSITIONS~\ref{prop:nonnegative} AND \ref{prop:diagonalizable}} \label{app:x_and_y}

\nonnegative*

\diagonalizable*

To prove Propositions~\ref{prop:nonnegative} and \ref{prop:diagonalizable}, we will define matrices \( X \in \mathbb{R}^{k! \times k!} \) and \( Y \in \mathbb{R}^{k! \times k!} \) so that \( C_k = W X Y \). We will then discuss some properties of those three matrices. The proofs themselves will boil down to similarity.

We define the diagonal matrix \( X \in \mathbb{R}^{k! \times k!} \) via \begin{align}
    X_{\sigma \sigma} = d^{\# (\sigma)}.
\end{align} It is evident that \( X \) is positive definite. 
 
We define the Gram matrix \( Y \in \mathbb{R}^{k! \times k!} \) via \begin{align}
    Y_{\sigma \theta} = \tr \left [ P_\sigma^{(D)} \left ( P_\theta^{(D)} \right )^T \right ] = D^{\# \left ( \sigma \theta^{- 1} \right )}.
\end{align} \( Y \) is positive semidefinite because it is a Gram matrix.

The Weingarten matrix \( \Wg \left ( \sigma \tau^{- 1}, q \right ) = \left ( G^{- 1} \right )_{\sigma \tau} \) is positive definite because the Gram matrix \( G \) [see Eq.~\eqref{eq:gram}] is positive definite~\footnote{The Gram matrix \( G \) is positive definite only if \( k \leq d D \) because the set \( \left \{ P_\sigma^{(d D)} \right \} \) is linearly independent only if \( k \leq d D \)~\cite{Collins2021}. As noted in Footnote~\cite{Note1}, the Weingarten matrix \( W = G^{- 1} \) thus exists only if \( k \leq d D \).}. 

We will need the fact that \( Y W \) is positive semidefinite. \( Y W \) has nonnegative eigenvalues because it is a product of a positive semidefinite (\( Y \)) and a positive definite matrix (\( W \)) (see Corollary~7.6.2 of Ref.~\cite{Horn2012}). Furthermore, \( Y W \) is symmetric: \begin{align}
	\fv{\tau}{Y W}{\theta} & = \sum_{\sigma \in S_k} \Wg \left ( \sigma \tau^{- 1}, d D \right ) D^{\# \left ( \sigma \theta^{- 1} \right )} \\
	& = \sum_{\pi \in S_k} \Wg \left ( \pi \theta \tau^{- 1}, d D \right ) D^{\# (\pi)} \\
	& = \sum_{\pi \in S_k} \Wg \left ( \tau \theta^{- 1} \pi^{- 1}, d D \right ) D^{\# (\pi)} \\
	& = \sum_{\pi \in S_k} \Wg \left ( \theta^{- 1} \pi^{- 1} \tau, d D \right ) D^{\# (\pi)} \\
	& = \sum_{\varphi \in S_k} \Wg \left ( \theta^{- 1} \varphi, d D \right ) D^{\# \left ( \tau \varphi^{- 1} \right )} \\
	& = \sum_{\varphi \in S_k} \Wg \left ( \varphi \theta^{- 1}, d D \right ) D^{\# \left ( \varphi \tau^{- 1} \right )} \\
	& = \fv{\theta}{Y W}{\tau}
\end{align} In the third line, we have used that \( \Wg (\alpha, q) = \Wg \left ( \alpha^{- 1}, q \right ) \), and, in the fourth line, we have used that \( \Wg (\alpha, q) = \Wg \left ( \beta \alpha \beta^{- 1}, q \right ) \). Both identities are a result of the Weingarten function being sensitive only to the conjugacy class of a given permutation.

\begin{proof}[Proof of Proposition~\ref{prop:nonnegative}]
    \( C_k = W X Y \) is similar to \( X Y W \). A product of a positive definite (\( X \)) and a positive semidefinite matrix (\( Y W \)), \( X Y W \) has nonnegative eigenvalues. The statement follows by similarity.
\end{proof}

\begin{proof}[Proof of Proposition~\ref{prop:diagonalizable}]
    \( C_k = W X Y \) is similar to \( X Y W \), which is similar to \( X^{1 / 2} Y W X^{1 / 2} \). Because \( Y W \) is symmetric, so is \( X^{1 / 2} Y W X^{1 / 2} \), which makes the latter diagonalizable. The statement follows.
\end{proof}

\section{PROOF OF PROPOSITION~\ref{prop:lambda_1}} \label{app:lambda_1}

\lambdaone*

\begin{proof}
    For what follows, it is convenient to define the transfer matrix \( \Sigma_e \) with \( e \in S_k \) via \begin{align} \label{eq:keks}
    	\raisebox{-7pt}{\includegraphics{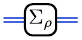}} = \raisebox{-7pt}{\includegraphics{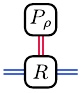}} = \raisebox{-7pt}{\includegraphics{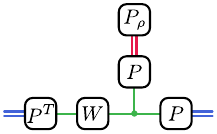}}.
    \end{align}
    
    Our strategy will be to prove the claimed spectral property for \( \Sigma_{e} \). This is enough because, for any two operators \( X \) and \( Y \) for which \( X Y \) and \( Y X \) are well defined, the sets of eigenvalues of \( X Y \) and \( Y X \) are the same (up to zeros and the multiplicity of the nonzero eigenvalues). By Eqs.~\eqref{eq:gelato} and \eqref{eq:keks}, \( T_\rho \) and \( \Sigma_\rho \) are related in exactly this way.
    
    As \( \Sigma_e \) arises from the contraction of the quantum channel underlying \( R \) [see Eq.~\eqref{eq:mps_r}] with the identity permutation, it can be understood as a generalization of the \( k \)-fold twirling operator. In particular, it involves an ''environment'' \( E \) of dimension \(\left ( \mathbb C^d \right )^{\otimes k} \) that is eventually traced out. As a superoperator (that is, without using the operator-vector correspondence), it reads [see Eq.~\eqref{eq:twirl_before}] \begin{align} \label{eq:kuchen}
    	\Sigma_e (\cdot) = \int \mathrm{d} U \, \tr_E \left \{ U^{\otimes k} \left [ \bigotimes_{l = 1}^k  \ketbra{0}{0}_{E_l} \otimes (\cdot)_{S_l} \right ] \left ( U^\dagger \right )^{\otimes k} \right \},
    \end{align} where the integration is with respect to the Haar measure on the unitary group \( U (d D) \), \( E = \bigotimes_{l = 1}^k E_l \) corresponds to \( \left ( \mathbb{C}^d \right )^{\otimes k} \), and \( S = \bigotimes_{l = 1}^k S_l \) corresponds to \( \left ( \mathbb{C}^D \right )^{\otimes k} \). It is apparent that Eq.~\eqref{eq:kuchen} represents a convex combination of quantum channels (note the Stinespring dilation form), and thus the resulting operator is also a valid quantum channel. This implies that \( 1 \) is an eigenvalue of \( \Sigma_e \) and that there is no eigenvalue of greater modulus~\cite{Wolf2012}.
    
    We now prove that no other eigenvalue of the same modulus exists. To that end, we will show that \( \Sigma_e \) is a primitive channel~\cite{Wolf2012}, which implies said property. A quantum channel is primitive if and only if the spanning space formed by products of its Kraus operators, \begin{align} \label{eq:kraus_span} 
    	\mathcal{K}_m = \spn \left ( \left \{ \prod_{k = 1}^m K_{i_k} \right \} \right ),
    \end{align} is equal to the full matrix algebra for some integer \( m \), that is, \( \mathcal{K}_m = \mathcal M_{D^k} (\mathbb{C}) \)~\cite{Sanz2010}. We show that this condition is satisfied for \( \Sigma_e \) if \( d \geq 2 \).
    
    Indeed, the Haar integral in Eq.~\eqref{eq:kuchen}, together with the partial trace, can be understood as a (redundant) Kraus decomposition. Precisely, we can take \begin{align} \label{eq:kraus_trace}
        \left \{ K_i \right \}_i = \left \{ \tr_E \left [ \left ( \bigotimes_{l = 1}^k \ketbra{+}{\psi_l}_{E_l} \otimes I_{S_l} \right ) U^{\otimes k} \right ] \right \}_{U, \psi},
    \end{align} where \( U \in U (d D) \), \( \ket{\psi_l} \in \{ \ket{0}, \dots, \ket{d - 1} \} \) is the computational basis of the \( l \)th replica of the environment, and \( \ket{+} = \sum_{j = 0}^{d - 1} \ket{j} / \sqrt{d} \). The latter is a choice (instead of \( \ket{0} \)) made for later convenience. It remains to show that there exists an integer \( m = m (k, d, D) \) such that \( \mathcal{K}_m = \mathcal{M}_{D^k} (\mathbb{C}) \) if \( d \geq 2 \). First of all, note that the above fails for \( d = 1 \) (that is, if the environment \( E \) is trivial). This is because \( \spn \left ( \left \{ U^{\otimes k} \right \}_U \right ) \) coincides with the symmetric subspace over the \( k \) subsystems~\cite{Watrous2018}. However, \( d = 2 \) is already enough to span the full matrix algebra.
    
    An explicit construction to show this fact amounts to taking \( U \) to be a controlled unitary gate, where the control system is \( E \). In particular, consider \begin{align}
        \ket{\psi_l} = \ket{\delta_{l r}} \qquad \text{and} \qquad U = \ketbra{0}{0}_E \otimes I_S + \sum_{j = 1}^{d - 1} \ketbra{j}{j}_{E} \otimes V_S,
    \end{align} where \( r \in \{ 1, \dots, k \} \), and \( V_S \) is unitary. This results in Kraus operators [see Eq.~\eqref{eq:kraus_trace}] of the form \begin{align}
        I_{S_1} \otimes \cdots \otimes I_{S_{r - 1}} \otimes V \otimes I_{S_{r + 1}} \cdots \otimes I_{S_k}.
    \end{align} Taking (finite) products, as Eq.~\eqref{eq:kraus_span} dictates, is enough to build a basis of the vector space \( \mathcal{M}_{D^k} (\mathbb{C}) \). Note that this construction requires two control levels (that is, \( d \geq 2 \)).
    
    This concludes the proof.
\end{proof}

\section{FURTHER PROPERTIES OF THE \texorpdfstring{TRANSFER MATRIX \( T_e \)}{IDENTITY TRANSFER MATRIX}} \label{app:conjecture}

In this Appendix, we state and prove statements about the structure of the transfer matrix \( T_e \) with \( e \in S_k \) that will motivate Conjecture~\ref{con:lambda_2}. We prove the statements in Appendices~\ref{app:conjugacy_class}, \ref{app:block_triangular}, and \ref{app:basis_change}.

Moving forward, we denote \( T_e \) with \( e \in S_k \) by \( C_k \). Each entry \begin{align} \label{eq:stucafe}
    \fv{\tau}{C_k}{\theta} = \sum_{\sigma \in S_k} \Wg \left ( \sigma \tau^{- 1}, d D \right ) d^{\# (\sigma)} D^{\# \left ( \sigma \theta^{- 1} \right )}
\end{align} of \( C_k \in \mathbb{R}^{k! \times k!} \) is a sum of \( k! \) terms. While this may sound daunting, \( C_k \) exhibits a structure that reduces its complexity.

As formalized by Proposition~\ref{prop:conjugacy_class}, the entries \( \fv{\tau}{C_k}{\theta} \) of \( C_k \) exhibit a dependence on the conjugacy class of their indices. In particular, if we know the entries of the column given by \( \theta \in S_k \), we also know the entries of the columns given by permutations in the same conjugacy class as \( \theta \).

\begin{restatable}{proposition}{conjugacyclass} \label{prop:conjugacy_class}
	For any \( \pi \in S_k \), \begin{align}
		\fv{\tau}{C_k}{\theta} = \fv{\pi \tau \pi^{- 1}}{C_k}{\pi \theta \pi^{- 1}}.
	\end{align}
\end{restatable}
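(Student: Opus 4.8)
The plan is to work directly from the closed form for the entries of $C_k$ in Eq.~\eqref{eq:stucafe} and to absorb the conjugation by $\pi$ into a relabeling of the summation index. Writing out the right-hand side,
\begin{align} \label{eq:conj_rhs}
	\fv{\pi \tau \pi^{- 1}}{C_k}{\pi \theta \pi^{- 1}} = \sum_{\sigma \in S_k} \Wg \left ( \sigma \pi \tau^{- 1} \pi^{- 1}, d D \right ) d^{\# (\sigma)} D^{\# \left ( \sigma \pi \theta^{- 1} \pi^{- 1} \right )},
\end{align}
I would substitute $\sigma = \pi \sigma' \pi^{- 1}$; since conjugation by $\pi$ is a bijection of $S_k$, the sum over $\sigma$ becomes a sum over $\sigma' \in S_k$. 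Note that $\sigma \pi \tau^{- 1} \pi^{- 1} = \pi \sigma' \tau^{- 1} \pi^{- 1}$ and $\sigma \pi \theta^{- 1} \pi^{- 1} = \pi \sigma' \theta^{- 1} \pi^{- 1}$, the inner $\pi^{- 1} \pi$ cancelling.

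It then remains to simplify the three factors using two elementary facts. First, the number of cycles $\# (\cdot)$ is a class function, so $\# (\sigma) = \# (\pi \sigma' \pi^{- 1}) = \# (\sigma')$ and $\# (\pi \sigma' \theta^{- 1} \pi^{- 1}) = \# (\sigma' \theta^{- 1})$. Second, the Weingarten function depends only on the conjugacy class of its argument (as recalled below Eq.~\eqref{eq:gram}), so $\Wg (\pi \sigma' \tau^{- 1} \pi^{- 1}, d D) = \Wg (\sigma' \tau^{- 1}, d D)$. Substituting these into Eq.~\eqref{eq:conj_rhs} turns the right-hand side into $\sum_{\sigma' \in S_k} \Wg (\sigma' \tau^{- 1}, d D) \, d^{\# (\sigma')} D^{\# (\sigma' \theta^{- 1})}$, which is exactly $\fv{\tau}{C_k}{\theta}$ by Eq.~\eqref{eq:stucafe}.

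There is no serious obstacle here; the only point requiring a little care is the bookkeeping of the $\pi$-conjugations across the three factors and the cancellation noted above. A more structural phrasing of the same argument, which I would mention as a remark, is that the automorphism $X \mapsto P_\pi^{(q)} X \left ( P_\pi^{(q)} \right )^{- 1}$ permutes the basis $\{ P_\sigma^{(q)} \}$ via $\sigma \mapsto \pi \sigma \pi^{- 1}$ and commutes with the twirl $\mathcal{T}_U^{(k)}$ by invariance of the Haar measure under $U \mapsto P_\pi^{(q)} U \left ( P_\pi^{(q)} \right )^{- 1}$; this yields the claim without touching the explicit formula, but given that Eq.~\eqref{eq:stucafe} is already available, the change-of-variable computation is the most economical route.
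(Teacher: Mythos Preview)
Your proof is correct and essentially identical to the paper's: both substitute a conjugated summation variable (you set $\sigma = \pi\sigma'\pi^{-1}$ starting from the right-hand side, the paper sets $\sigma = \pi^{-1}\varphi\pi$ starting from the left) and then invoke that $\Wg$ and $\#$ are class functions. The structural remark about the twirl commuting with conjugation by $P_\pi^{(q)}$ is a nice addition not present in the paper.
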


Certain entries of \( C_k \) vanish, while others are given by the entries of \( C_{k - 1} \). Proposition~\ref{prop:block_triangular} captures these two statements.

\begin{restatable}{proposition}{blocktriangular} \label{prop:block_triangular}
	For all \( \theta \in S_k \) with \( \theta (k) = k \), \begin{align}
		\fv{\tau}{C_k}{\theta} = \delta_{k, \tau (k)} \fv{\tau^\downarrow}{C_{k - 1}}{\theta^\downarrow},
	\end{align} where \( \rho^\downarrow \in S_{k - 1} \) is the restriction of \( \rho \in S_k \) with \( \rho (k) = k \) to the permutation on \( \{ 1, \dots, k - 1 \} \).
\end{restatable}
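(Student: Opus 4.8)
The plan is to reduce the statement to a single clean identity for the Haar twirl and then exploit the fact that a ``trivial last replica'' decouples. Throughout, set $q=dD$, use the canonical isomorphism $\left(\mathbb{C}^{dD}\right)^{\otimes k}\cong\left(\mathbb{C}^{d}\right)^{\otimes k}\otimes\left(\mathbb{C}^{D}\right)^{\otimes k}$ under which $P_\sigma^{(dD)}=P_\sigma^{(d)}\otimes P_\sigma^{(D)}$, and write $I_d^{\otimes k}\otimes P_\theta^{(D)}$ for the operator on $\left(\mathbb{C}^{dD}\right)^{\otimes k}$ that is the identity on the $d$-factors and permutes the $D$-factors by $\theta$. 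First I would feed $X=I_d^{\otimes k}\otimes P_\theta^{(D)}$ into the twirl formula Eq.~\eqref{eq:twirl_after}. The traces there factor and evaluate, by Eq.~\eqref{eq:permutation_contraction} and the invariance of $\#$ under inversion, to $\tr\bigl[(I_d^{\otimes k}\otimes P_\theta^{(D)})(P_\pi^{(dD)})^T\bigr]=\tr\bigl[(P_\pi^{(d)})^T\bigr]\,\tr\bigl[P_\theta^{(D)}(P_\pi^{(D)})^T\bigr]=d^{\#(\pi)}D^{\#(\pi\theta^{-1})}$, so that, recognizing Eq.~\eqref{eq:stucafe}, Eq.~\eqref{eq:twirl_after} collapses to the generating identity
\begin{align}
    \mathcal{T}_U^{(k)}\!\left(I_d^{\otimes k}\otimes P_\theta^{(D)}\right)=\sum_{\tau\in S_k}\fv{\tau}{C_k}{\theta}\,P_\tau^{(dD)}.
\end{align}
(The $I_d^{\otimes k}$ here — rather than, say, $\ketbra{0}{0}^{\otimes k}$ — reflects that in $T_e$ the physical output leg is contracted against the trivial permutation $P_e^{(d)}=I$.) This identity holds for every $\theta\in S_k$, and, using the extended Weingarten function, for every $k$; it is the bridge between the combinatorial definition of $C_k$ and a unitary integral.

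I would then specialize to $\theta\in S_k$ with $\theta(k)=k$. Then $P_\theta^{(D)}=P_{\theta^\downarrow}^{(D)}\otimes I_D$, so, singling out the $k$-th replica, $I_d^{\otimes k}\otimes P_\theta^{(D)}=\bigl(I_d^{\otimes(k-1)}\otimes P_{\theta^\downarrow}^{(D)}\bigr)\otimes I_{dD}$ as an operator on $\left(\mathbb{C}^{dD}\right)^{\otimes(k-1)}\otimes\mathbb{C}^{dD}$. The key structural input is the marginalization property of the twirl: since $U^{\otimes k}=U^{\otimes(k-1)}\otimes U$ and $UI_{dD}U^\dagger=I_{dD}$, one has $\mathcal{T}_U^{(k)}(Y\otimes I_{dD})=\mathcal{T}_U^{(k-1)}(Y)\otimes I_{dD}$ for every operator $Y$ on the first $k-1$ replicas. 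Applying this, then the generating identity at level $k-1$, and finally the observation that $P_{\tau'}^{(dD)}\otimes I_{dD}=P_\tau^{(dD)}$ whenever $\tau\in S_k$ fixes $k$ with $\tau^\downarrow=\tau'$, one obtains
\begin{align}
    \mathcal{T}_U^{(k)}\!\left(I_d^{\otimes k}\otimes P_\theta^{(D)}\right)=\sum_{\tau'\in S_{k-1}}\fv{\tau'}{C_{k-1}}{\theta^\downarrow}\,P_{\tau'}^{(dD)}\otimes I_{dD}=\sum_{\substack{\tau\in S_k\\ \tau(k)=k}}\fv{\tau^\downarrow}{C_{k-1}}{\theta^\downarrow}\,P_\tau^{(dD)}.
\end{align}

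Equating the two expressions for $\mathcal{T}_U^{(k)}\!\left(I_d^{\otimes k}\otimes P_\theta^{(D)}\right)$ gives $\sum_{\tau\in S_k}\bigl(\fv{\tau}{C_k}{\theta}-\delta_{k,\tau(k)}\fv{\tau^\downarrow}{C_{k-1}}{\theta^\downarrow}\bigr)P_\tau^{(dD)}=0$, which is exactly the claim once coefficients may be read off. I expect this last comparison to be the main (though technical rather than deep) obstacle, in the regime $k>dD$ where the operators $\{P_\tau^{(dD)}:\tau\in S_k\}$ become linearly dependent. For $k\le dD$ they are linearly independent and the proof is finished. For $k>dD$ I would argue by rational continuation: for any fixed $D$, each coefficient $\fv{\tau}{C_k}{\theta}-\delta_{k,\tau(k)}\fv{\tau^\downarrow}{C_{k-1}}{\theta^\downarrow}$ is a rational function of $d$ (the Weingarten function being rational in its dimension parameter), and it has just been shown to vanish at every integer $d\ge k$; a rational function with infinitely many zeros vanishes identically, so the equality holds for all $d$, and since $D$ was arbitrary, for all $d$ and $D$.

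Finally I would record two remarks. The hypothesis $\theta(k)=k$ is essential — for a general $\theta$ the $k$-th replica is tied to the others by $P_\theta^{(D)}$ and does not factor out. And, combined with an ordering of $S_k$ compatible with conjugacy classes (cf.\ Proposition~\ref{prop:conjugacy_class}), the proposition presents $C_k$ as block upper triangular with a copy of $C_{k-1}$ in the diagonal block indexed by the permutations fixing $k$; in particular every eigenvalue of $C_{k-1}$ is an eigenvalue of $C_k$, which is what is ultimately needed toward Conjecture~\ref{con:lambda_2}.
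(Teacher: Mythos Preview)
Your proof is correct and takes a genuinely different route from the paper's. The paper argues purely combinatorially: it decomposes the sum over $\sigma\in S_k$ into cosets $\{(ik)\varphi:\varphi(k)=k\}$ (their Lemma~\ref{lem:grouping}), uses cycle-counting identities to factor out a $dD$ from the $\varphi$-summand (their Lemma~\ref{lem:theta}), and then invokes a recurrence for the Weingarten function due to Collins (their Lemma~\ref{lem:collins_1}, namely $\sum_{i=1}^{k-1}\Wg((ik)\pi,q)+q\,\Wg(\pi,q)=\delta_{k,\pi(k)}\Wg(\pi^\downarrow,q)$) to collapse the inner sum. Your approach instead recognises $\langle\tau|C_k|\theta\rangle$ as the expansion coefficients of $\mathcal{T}_U^{(k)}(I_d^{\otimes k}\otimes P_\theta^{(D)})$ in the permutation basis, and then the trivial marginalisation $\mathcal{T}_U^{(k)}(Y\otimes I_{dD})=\mathcal{T}_U^{(k-1)}(Y)\otimes I_{dD}$ does all the work; the Collins recurrence is never invoked explicitly (though it is, of course, what your marginalisation identity becomes once one expands both sides in Weingarten sums). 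What your approach buys is transparency---the reduction to $C_{k-1}$ is seen to be nothing more than decoupling a trivial replica---at the cost of the rational-continuation step for $k>dD$, where the $P_\tau^{(dD)}$ fail to be independent. The paper's combinatorial route avoids that continuation because the Collins recurrence holds for the extended Weingarten function directly, so the argument goes through uniformly in $k$; on the other hand it requires tracking several cycle-count identities that your argument sidesteps entirely.
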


To understand the strength of Propositions~\ref{prop:conjugacy_class} and \ref{prop:block_triangular}, let us have a look at \( C_2 \) and \( C_3 \), the two most simple transfer matrices. With bases \( S_2 = \{ e, (1 2) \} \) and \( S_3 = \{ e, (1 2), (1 3), (2 3), (1 2 3), (1 3 2) \} \), one finds that \begin{align} \label{eq:conjecture_1}
	C_2 = \begin{pmatrix}
		1 & \alpha \\ 0 & \beta
	\end{pmatrix} \qquad \text{and} \qquad C_3 = \begin{pmatrix}
		1 & \alpha & \alpha & \alpha & \gamma & \gamma \\ 0 & \beta & 0 & 0 & \delta & \delta \\ 0 & 0 & \beta & 0 & \delta & \delta \\ 0 & 0 & 0 & \beta & \delta & \delta \\ 0 & 0 & 0 & 0 & \varepsilon & \zeta \\ 0 & 0 & 0 & 0 & \zeta & \varepsilon
	\end{pmatrix},
\end{align} where each Greek letter corresponds to some function of \( d \) and \( D \) whose exact form is not relevant here. The entries in the first four columns of \( C_3 \) are fully determined by those of \( C_2 \). The entries in the last two columns of \( C_3 \) do not arise from those of \( C_2 \), but the sixth column is a permutation of the fifth.

Note that we are deliberately choosing a basis \( S_k = \left \{ s_1, \dots, s_{k!} \right \} \) that makes the special structure of \( C_k \) more apparent. In particular, we sort permutations so that those with \( i \) fixed points come before those with \( i - 1 \) fixed points. We group permutations that have common fixed points and then those that are in the the same conjugacy class. Given this basis, Propositions~\ref{prop:conjugacy_class} and \ref{prop:block_triangular} imply that \( C_k \) is block triangular with \( k \) diagonal blocks.

We denote by \( C_k^{(1)} \) the diagonal block with \( \tau = \theta = e \in S_k \) and by \( C_k^{(i)} \) with \( 2 \leq i \leq k \) the diagonal block corresponding to \( \tau, \theta \in S_k \) with \( k - i \) fixed points. The spectrum of \( C_k \) is then given by \begin{align}
	\lambda \left ( C_k \right ) = \lambda \left ( C_k^{(1)} \right ) \cup \lambda \left ( C_k^{(2)} \right ) \cup \cdots \cup \lambda \left ( C_k^{(k)} \right ).
\end{align} It is apparent that \( C_k^{(1)} \) has a single, nondegenerate eigenvalue \begin{align}
	\mu_1 = \fv{e}{C_k}{e} = 1.
\end{align} \( C_k^{(2)} \) has a single, degenerate eigenvalue \begin{align}
	\mu_2 = \fv{(1 2)}{C_k}{(1 2)} = \frac{d D^2 - d}{d^2 D^2 - 1},
\end{align} which corresponds to the expression of \( \beta \) in Eq.~\eqref{eq:conjecture_1}. The degeneracy of \( \mu_2 \) is given by the size of the block, which is in turn given by the number of transpositions in \( S_k \), \begin{align}
	v_2 = \binom{k}{2}.
\end{align}

By Proposition~\ref{prop:lambda_1}, \( 1 \) is the leading eigenvalue of \( C_k \) and nondegenerate. We conjecture that \( \mu_2 \) is the subleading eigenvalue of \( C_k \) and that it has degeneracy \( v_2 \). The statement of Conjecture~\ref{con:lambda_2} holds for \( k \in \{ 2, 3, 4 \} \), and we have numerical evidence suggesting so for \( k \in \{ 5, 6, 7 \} \)~\cite{GitHub}.

\begin{repconjecture}{con:lambda_2}
    Let \( \lambda_1 > \lambda_2 > \cdots \geq 0 \) denote the distinct eigenvalues of \( C_k \). Then, \( \lambda_2 = \mu_2 \) with degeneracy \( w_2 = v_2 \) for any \( k \geq 2 \).
\end{repconjecture}

As formalized by Proposition~\ref{prop:basis_change}, the statements above hold for any transfer matrix \( T_\rho \) with \( \rho \in S_k \) because \( T_\rho \) is similar to \( C_k \).

\begin{restatable}{proposition}{basischange} \label{prop:basis_change}
	For any \( \rho \in S_k \), \begin{align}
		T_\rho = Q_\rho^T C_k Q_\rho \qquad \text{with} \qquad Q_\rho = \sum_{\pi \in S_k} \ketbra{\rho \pi}{\pi}.
	\end{align}
\end{restatable}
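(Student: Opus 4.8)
The plan is a direct entrywise verification: I would evaluate both sides of $T_\rho = Q_\rho^T C_k Q_\rho$ in the basis $\{\ket{\pi}\}_{\pi \in S_k}$ and match them using the explicit formula~\eqref{eq:mps_fame} for the entries of $T_\rho$ — of which~\eqref{eq:stucafe} is the special case $\rho = e$ defining $C_k$ — together with two elementary facts already recalled in Sec.~\ref{sec:k-twirl}: the Weingarten function $\Wg(\cdot, q)$ is a class function, and the cycle count satisfies $\#(\alpha\beta) = \#(\beta\alpha)$ and $\#(\rho\alpha\rho^{-1}) = \#(\alpha)$ for all $\alpha, \beta, \rho \in S_k$. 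The first observation is that $Q_\rho = \sum_{\pi \in S_k} \ketbra{\rho\pi}{\pi}$ is nothing but the permutation matrix of the left regular action of $\rho$ on the index set $S_k$; it is orthogonal, with $Q_\rho^T = Q_{\rho^{-1}} = \sum_{\pi} \ketbra{\pi}{\rho\pi}$, so that for any $k! \times k!$ matrix $M$ one has $\fv{\tau}{Q_\rho^T M Q_\rho}{\theta} = \fv{\rho\tau}{M}{\rho\theta}$.

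Taking $M = C_k$ and inserting~\eqref{eq:stucafe} gives
\begin{align}
    \fv{\tau}{Q_\rho^T C_k Q_\rho}{\theta} = \sum_{\sigma \in S_k} \Wg(\sigma\tau^{-1}\rho^{-1}, dD)\, d^{\#(\sigma)} D^{\#(\sigma\theta^{-1}\rho^{-1})}.
\end{align}
The crux is the change of summation variable $\sigma \mapsto \rho\sigma$, a bijection of $S_k$, under which the summand becomes $\Wg(\rho\sigma\tau^{-1}\rho^{-1}, dD)\, d^{\#(\rho\sigma)} D^{\#(\rho\sigma\theta^{-1}\rho^{-1})}$. Here $\rho\sigma\tau^{-1}\rho^{-1}$ is conjugate to $\sigma\tau^{-1}$, so the first factor equals $\Wg(\sigma\tau^{-1}, dD)$; next, $\#(\rho\sigma) = \#(\sigma\rho)$, so the second equals $d^{\#(\sigma\rho)}$; and $\rho\sigma\theta^{-1}\rho^{-1}$ is conjugate to $\sigma\theta^{-1}$, so the third equals $D^{\#(\sigma\theta^{-1})}$. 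The sum therefore collapses to $\sum_{\sigma} \Wg(\sigma\tau^{-1}, dD)\, d^{\#(\sigma\rho)} D^{\#(\sigma\theta^{-1})}$, which is precisely $\fv{\tau}{T_\rho}{\theta}$ by~\eqref{eq:mps_fame}. Since $\tau, \theta$ were arbitrary, the matrix identity follows.

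I do not anticipate a real obstacle; the argument is a short calculation and the only care needed is bookkeeping. Two points deserve attention: first, that $Q_\rho$ implements the \emph{left} (not right) regular representation — this is exactly what makes both $\tau^{-1}$ and $\theta^{-1}$ pick up the same trailing $\rho^{-1}$, which is what allows a single substitution to clean up all three factors; and second, that the substitution be chosen correctly — a plain shift such as $\sigma \mapsto \sigma\rho$, or an outright conjugation $\sigma \mapsto \rho\sigma\rho^{-1}$, fails to simultaneously align the $\Wg$, $d$, and $D$ factors, whereas the left translation $\sigma \mapsto \rho\sigma$ does. As a consistency check one may verify the trivial case $\rho = e$ and the small cases $k = 2, 3$ against~\eqref{eq:conjecture_1}. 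Finally, it is worth emphasizing that this similarity is precisely what lets Propositions~\ref{prop:conjugacy_class} and~\ref{prop:block_triangular}, and hence the whole structural analysis leading to Conjecture~\ref{con:lambda_2}, carry over verbatim from $C_k$ to every transfer matrix $T_\rho$.
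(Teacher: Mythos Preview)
Your proposal is correct and follows essentially the same approach as the paper's proof: both arguments reduce to the entrywise identity \(\fv{\tau}{T_\rho}{\theta} = \fv{\rho\tau}{C_k}{\rho\theta}\), established via the left-translation substitution \(\sigma \mapsto \rho\sigma\) together with the class invariance of \(\Wg\) and \(\#\). The only cosmetic difference is direction---the paper starts from \(\fv{\tau}{T_\rho}{\theta}\) and arrives at \(\fv{\rho\tau}{C_k}{\rho\theta}\), whereas you start from the latter---but the ingredients and logic are identical.
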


\section{PROOF OF PROPOSITION~\ref{prop:conjugacy_class}} \label{app:conjugacy_class}

\conjugacyclass*

\begin{proof}
	It holds that \begin{align}
		\fv{\tau}{C_k}{\theta} & = \sum_{\sigma \in S_k} \Wg \left ( \sigma \tau^{- 1}, d D \right ) d^{\# (\sigma)} D^{\# \left ( \sigma \theta^{- 1} \right )} \\
		& = \sum_{\varphi \in S_k} \Wg \left ( \pi^{- 1} \varphi \pi \tau^{- 1}, d D \right ) d^{\# \left ( \pi^{- 1} \varphi \pi \right )} D^{\# \left ( \pi^{- 1} \varphi \pi \theta^{- 1} \right )} \\
		& = \sum_{\varphi \in S_k} \Wg \left ( \varphi \pi \tau^{- 1} \pi^{- 1}, d D \right ) d^{\# \left ( \pi^{- 1} \varphi \pi \right )} D^{\# \left ( \varphi \pi \theta^{- 1} \pi^{- 1} \right )} \\
		& = \sum_{\varphi \in S_k} \Wg \left [ \varphi \left ( \pi \tau \pi^{- 1} \right )^{- 1}, d D \right ] d^{\# (\varphi)} D^{\# \left [ \varphi \left ( \pi \theta \pi^{- 1} \right )^{- 1} \right ]} \\
		& = \fv{\pi \tau \pi^{- 1}}{C_k}{\pi \theta \pi^{- 1}},
	\end{align} where, in the second line, we have used that the conjugation map is an isomorphism. In the third line, we have used that \( \Wg (\alpha, q) = \Wg \left ( \beta \alpha \beta^{- 1}, q \right ) \) and that \( \# (\alpha) = \# \left ( \beta \alpha \beta^{- 1} \right ) \). Both identities are a result of the two functions being sensitive only to the conjugacy class of a given permutation.
\end{proof}

\section{PROOF OF PROPOSITION~\ref{prop:block_triangular}} \label{app:block_triangular}

\blocktriangular*

To prove Proposition~\ref{prop:block_triangular}, we will need a number of ingredients. The main one will be Proposition~2.2 of Ref.~\cite{Collins2017}, which we state without proof as Lemma~\ref{lem:collins_1}.

\begin{lemma} \label{lem:collins_1}
	For any \( \rho \in S_k \), \begin{align}
		\sum_{i = 1}^{k - 1} \Wg \bigl ( (i k) \pi, q \bigr ) + q \Wg (\pi, q) = \delta_{k, \pi (k)} \Wg \left ( \pi^\downarrow, q \right ),
	\end{align} where \( (i k) \) denotes the transposition of elements \( i \) and \( k \).
\end{lemma}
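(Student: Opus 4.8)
\emph{Overview and tools.} The plan is to prove the identity at the level of operators on \( V^{\otimes k} \) with \( V = \mathbb{C}^q \), using the partial trace over the last tensor factor to descend from \( S_k \) to \( S_{k - 1} \). I work in the regime \( q \geq k \), where the \( \{ P_\sigma^{(q)} \}_{\sigma \in S_k} \) are linearly independent and \( W = G^{- 1} \) exists; since both sides are fixed rational functions of \( q \), the general case (and the analytic extension to \( k > q \)) then follows by rationality. Denote by \( \tr_k \) the partial trace over the \( k \)-th factor, by \( \pi^\downarrow \in S_{k - 1} \) the restriction of a permutation fixing \( k \), and by \( \eta^\uparrow \in S_k \) the embedding of \( \eta \in S_{k - 1} \) fixing \( k \). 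I will use three elementary facts. First, the partial trace of a permutation operator stays in the commutant: \( \tr_k ( P_\tau^{(q)} ) = q \, P_{\tau^\downarrow}^{(q)} \) if \( \tau (k) = k \), and \( \tr_k ( P_\tau^{(q)} ) = P_{\widehat{\tau}}^{(q)} \) otherwise, where \( \widehat{\tau} \in S_{k - 1} \) deletes \( k \) from its cycle (if \( \tau (a) = k \) and \( \tau (k) = b \), then \( \widehat{\tau} (a) = b \)). Second, the adjoint of \( \tr_k \) under the trace pairing is tensoring with \( I \) on site \( k \), so \( \tr [ \tr_k (X) ( P_\eta^{(q)} )^T ] = \tr [ X ( P_{\eta^\uparrow}^{(q)} )^T ] \). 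Third, rewriting Eq.~\eqref{eq:twirl_after} via \( \Wg ( \sigma \tau^{- 1}, q ) = \Wg ( \tau \sigma^{- 1}, q ) \), the dual vectors \( D_\sigma^{(k)} := \sum_{\tau \in S_k} \Wg ( \sigma \tau^{- 1}, q ) P_\tau^{(q)} \) satisfy \( \tr [ D_\sigma^{(k)} ( P_\rho^{(q)} )^T ] = \delta_{\sigma \rho} \) (this is just \( W G = I \), using Eq.~\eqref{eq:permutation_contraction}).

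\emph{Key claim.} The crux is the operator identity
\begin{align} \label{eq:plan_claim}
    \tr_k \! \left ( D_\pi^{(k)} \right ) = \delta_{k, \pi (k)} \, D_{\pi^\downarrow}^{(k - 1)}.
\end{align}
Indeed, by the first fact \( \tr_k ( D_\pi^{(k)} ) \) is a combination of the \( P_\eta^{(q)} \) with \( \eta \in S_{k - 1} \), hence lies in \( \spn \{ D_\eta^{(k - 1)} \} \). Since \( \{ D_\eta^{(k - 1)} \} \) and \( \{ ( P_\zeta^{(q)} )^T \} \) are dual bases of that space, the coefficient of \( \tr_k ( D_\pi^{(k)} ) \) on \( D_\eta^{(k - 1)} \) equals \( \tr [ \tr_k ( D_\pi^{(k)} ) ( P_\eta^{(q)} )^T ] \), which by the second fact is \( \tr [ D_\pi^{(k)} ( P_{\eta^\uparrow}^{(q)} )^T ] = \delta_{\pi, \eta^\uparrow} \) by the third. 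As \( \eta^\uparrow \) always fixes \( k \), only \( \eta = \pi^\downarrow \) (and only when \( \pi (k) = k \)) survives, giving Eq.~\eqref{eq:plan_claim}.

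\emph{Extracting the recursion.} Expanding the left side of Eq.~\eqref{eq:plan_claim} termwise with the first fact gives
\begin{align}
    \tr_k \! \left ( D_\pi^{(k)} \right ) = \sum_{\tau (k) = k} \Wg ( \pi \tau^{- 1}, q ) \, q \, P_{\tau^\downarrow}^{(q)} + \sum_{\tau (k) \neq k} \Wg ( \pi \tau^{- 1}, q ) \, P_{\widehat{\tau}}^{(q)}.
\end{align}
Reading off the coefficient of \( P_e^{(q)} \): the first sum forces \( \tau = e \), contributing \( q \, \Wg ( \pi, q ) \); in the second, \( \widehat{\tau} = e \) has exactly the \( k - 1 \) solutions \( \tau = (j k) \), \( j = 1, \dots, k - 1 \), contributing \( \sum_{j = 1}^{k - 1} \Wg ( \pi (j k), q ) \). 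The coefficient of \( P_e^{(q)} \) on the right of Eq.~\eqref{eq:plan_claim} is \( \delta_{k, \pi (k)} \Wg ( \pi^\downarrow, q ) \). Matching (legitimate as the \( P_\eta^{(q)} \) are independent for \( q \geq k \)) gives \( q \, \Wg ( \pi, q ) + \sum_{j = 1}^{k - 1} \Wg ( \pi (j k), q ) = \delta_{k, \pi (k)} \Wg ( \pi^\downarrow, q ) \). Finally, replacing \( \pi \) by \( \pi^{- 1} \) and using \( \Wg ( \alpha, q ) = \Wg ( \alpha^{- 1}, q ) \) (which preserves \( \delta_{k, \pi (k)} \) and sends \( \Wg ( \pi (j k) ) \) to \( \Wg ( (j k) \pi ) \) and \( \Wg ( ( \pi^{- 1} )^\downarrow ) \) to \( \Wg ( \pi^\downarrow ) \)) recovers the stated form.

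\emph{Main obstacle.} The conceptual step is Eq.~\eqref{eq:plan_claim}, the covariance of the Weingarten-dual basis under the partial trace, which makes the descent \( S_k \to S_{k - 1} \) transparent; granting the three facts, its proof is a short duality argument. The real bookkeeping lies in the explicit partial-trace expansion and in verifying that \( \widehat{\tau} = e \) is solved precisely by the \( k - 1 \) transpositions \( (j k) \) with the correct multiplicity, where sign- and index-tracking errors are easy to make. A secondary point is the passage from \( q \geq k \) to arbitrary \( q \): each side is a fixed rational function of \( q \), so an identity valid on an infinite set of \( q \) holds identically, covering the extension of the Weingarten function to \( k > q \).
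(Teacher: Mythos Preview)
Your proof is correct. Note, however, that the paper does not actually prove this lemma: it is quoted as Proposition~2.2 of Ref.~\cite{Collins2017} and explicitly ``state[d] without proof.'' So there is no in-paper argument to compare against; you have supplied a self-contained proof where the paper defers to the literature.

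Your approach---showing that the Weingarten-dual operators \( D_\pi^{(k)} \) descend under \( \tr_k \) to \( \delta_{k,\pi(k)} D_{\pi^\downarrow}^{(k-1)} \) and then reading off the coefficient of \( P_e^{(q)} \)---is clean and conceptually transparent. One minor simplification: in your last step you replace \( \pi \) by \( \pi^{-1} \) to turn \( \Wg(\pi(jk),q) \) into \( \Wg((jk)\pi,q) \), but this is unnecessary. Since \( \alpha\beta \) and \( \beta\alpha \) are always conjugate and \( \Wg \) is a class function, \( \Wg(\pi(jk),q) = \Wg((jk)\pi,q) \) holds directly, and the identity you derive is already the one stated.
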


To use Lemma~\ref{lem:collins_1}, we must do two things. First, we must split the sum in Eq.~\eqref{eq:stucafe} into certain sums of \( k \) terms. We employ Lemma~\ref{lem:grouping} to achieve this.

\begin{lemma} \label{lem:grouping}
	For any \( \alpha \in S_k \), there exists a \( \beta \in S_k \) with \( \beta (k) = k \) such that either \( \alpha = \beta \) or \( \alpha = (i k) \beta \) with \( i \in \{ 1, \dots, k - 1 \} \).
\end{lemma}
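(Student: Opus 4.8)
The plan is to prove the lemma by an immediate case distinction on the image \( \alpha (k) \), and then to record the mild bookkeeping consequence that this classification organizes \( S_k \) into \( (k - 1)! \) blocks of size \( k \) --- which is exactly the combinatorial input needed when regrouping the Weingarten sum in the proof of Proposition~\ref{prop:block_triangular}.

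First I would argue as follows. If \( \alpha (k) = k \), then \( \alpha \) itself fixes \( k \), so we take \( \beta = \alpha \) and the first alternative \( \alpha = \beta \) holds. Otherwise \( \alpha (k) = i \) for a unique \( i \in \{ 1, \dots, k - 1 \} \); set \( \beta = (i k) \alpha \). Since the transposition \( (i k) \) sends \( i \) to \( k \), we get \( \beta (k) = (i k) \bigl ( \alpha (k) \bigr ) = (i k) (i) = k \), so \( \beta \in S_k \) fixes \( k \), and because \( (i k) \) is an involution we recover \( \alpha = (i k) \beta \), the second alternative. This already proves the lemma; the two cases are mutually exclusive because \( \alpha (k) = k \) and \( \alpha (k) = i < k \) cannot both hold.

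Second, I would observe that the partner \( \beta \) --- together with the knowledge of which case occurs and, in the second case, the value of \( i \) --- is uniquely determined by \( \alpha \): indeed \( \beta \) must fix \( k \), and comparing images at \( k \) forces \( \beta = \alpha \) in the first case and \( \beta = \bigl ( \alpha (k) \, k \bigr ) \alpha \) in the second. Consequently \( S_k = \bigsqcup_{\beta (k) = k} \bigl ( \{ \beta \} \cup \{ (i k) \beta : 1 \leq i \leq k - 1 \} \bigr ) \) is a disjoint union of \( (k - 1)! \) blocks of \( k \) elements each, consistent since \( k \cdot (k - 1)! = k! \). This is precisely the grouping that, applied to the summation variable in Eq.~\eqref{eq:stucafe}, produces combinations of the form \( \sum_{i = 1}^{k - 1} \Wg \bigl ( (i k) \pi, q \bigr ) + q \, \Wg (\pi, q) \) to which Lemma~\ref{lem:collins_1} applies. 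There is no real obstacle here --- the statement is elementary --- the only thing to watch is the bookkeeping, namely that the dichotomy is exhaustive and non-overlapping and that the induced partition is tracked faithfully when it is fed into the proof of Proposition~\ref{prop:block_triangular}.
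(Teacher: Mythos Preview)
Your proof is correct and follows essentially the same route as the paper: both case on whether \(\alpha(k)=k\) and, when it does not, take \(i=\alpha(k)\) and \(\beta=(ik)\alpha\). The paper phrases the second case via the cycle decomposition of \(\alpha\) (writing \(\alpha=(k\,i\,a_3\,a_4\cdots)(b_1\,b_2\cdots)\cdots=(ik)(i\,a_3\,a_4\cdots)(b_1\,b_2\cdots)\cdots\)), while you compute \(\beta(k)\) directly; your additional remarks on uniqueness and the resulting partition of \(S_k\) into \((k-1)!\) blocks of size \(k\) go beyond what the paper records but are correct and are exactly the bookkeeping used in the proof of Proposition~\ref{prop:block_triangular}.
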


\begin{proof}
	If \( \alpha (k) = k \), then \( \alpha = \beta \). Otherwise, \( k \) is in exactly one of the disjoint cycles of \( \alpha \). Without loss of generality, say \( \alpha = (k i a_3 a_4 \cdots) (b_1 b_2 \cdots) \cdots \). Then, \( \alpha = (i k) (i a_3 a_4 \cdots) (b_1 b_2 \cdots) \cdots \equiv (i k) \beta \).
\end{proof}

Second, we must ensure that summands in Eq.~\eqref{eq:stucafe} with \( \Wg (\pi, d D) \) have a factor \( d D \) while those with \( \Wg \bigl ( (i k) \pi, d D \bigr ) \) do not. We achieve this with Lemma~\ref{lem:theta} whose proof employs Lemma~4 of Ref.~\cite{Akers2022}, which we state without proof as Lemma~\ref{lem:french}. The lemma also appears as Lemma~1 in Ref.~\cite{Biane1997}.

\begin{lemma} \label{lem:french}
	Let \( \alpha \in S_k \) be a transposition, and \( \beta \in S_k \) such that \( \# (\beta) = u \). If the elements exchanged by \( \alpha \) are not in the same cycle of \( \beta \), then \( \# (\alpha \beta) = u - 1 \).
\end{lemma}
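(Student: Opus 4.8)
The plan is to prove the statement by a direct computation in cycle notation, establishing the well-known fact that left-multiplying a permutation by a transposition merges two cycles into one precisely when the two transposed elements lie in distinct cycles. First I would fix the convention \( (\alpha \beta)(x) = \alpha(\beta(x)) \) and write \( \alpha = (a\, b) \) with \( a \neq b \). Since \( a \) and \( b \) belong to different cycles of \( \beta \), I can write those two cycles explicitly as \( (a\, x_1\, x_2\, \cdots\, x_p) \) and \( (b\, y_1\, y_2\, \cdots\, y_q) \), where \( p \ge 0 \) and \( q \ge 0 \) (the cases \( p = 0 \) or \( q = 0 \) corresponding to \( a \) or \( b \) being a fixed point of \( \beta \)), and leave the remaining cycles of \( \beta \) unnamed.

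Next I would evaluate \( \alpha \beta \) pointwise. For any element \( z \) with \( \beta(z) \notin \{a, b\} \) one has \( (\alpha\beta)(z) = \alpha(\beta(z)) = \beta(z) \), so \( \alpha\beta \) agrees with \( \beta \) away from the \( \beta \)-preimages of \( a \) and \( b \); in particular every cycle of \( \beta \) other than the two named ones is left untouched. The only two elements whose images change are \( x_p \) (the \( \beta \)-preimage of \( a \)) and \( y_q \) (the \( \beta \)-preimage of \( b \)): indeed \( (\alpha\beta)(x_p) = \alpha(a) = b \) and \( (\alpha\beta)(y_q) = \alpha(b) = a \). Tracing the orbit of \( \alpha\beta \) starting from \( a \) therefore yields \( a \mapsto x_1 \mapsto \cdots \mapsto x_p \mapsto b \mapsto y_1 \mapsto \cdots \mapsto y_q \mapsto a \), i.e. the single cycle \( (a\, x_1\, \cdots\, x_p\, b\, y_1\, \cdots\, y_q) \) of length \( p + q + 2 \). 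Thus the two cycles of \( \beta \), of total length \( (p+1) + (q+1) \), are replaced by a single cycle while all other cycles are preserved, so \( \#(\alpha\beta) = u - 1 \).

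I do not expect any genuine obstacle here: the argument is a finite, elementary verification, and the only points requiring mild care are fixing the composition convention (the symmetric argument works for \( (\alpha\beta)(x) = \beta(\alpha(x)) \), and in any case \( \#(\pi) = \#(\pi^{-1}) \) together with \( (\alpha\beta)^{-1} = \beta^{-1}\alpha \)) and including the degenerate cases where \( a \) or \( b \) is a fixed point of \( \beta \), which the orbit-tracing above already covers. This is precisely why the authors quote the result from Refs.~\cite{Akers2022, Biane1997} rather than reproving it; I would likewise either keep the write-up to the few lines sketched above or simply cite it as the standard statement that left-multiplying by a transposition changes the number of cycles by \( \pm 1 \), with the sign \( -1 \) exactly when the transposed points lie in different cycles of \( \beta \).
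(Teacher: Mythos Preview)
Your argument is correct: the direct orbit-tracing in cycle notation is the standard elementary proof, and you have handled the degenerate fixed-point cases and the composition convention properly. The paper itself does not prove this lemma at all---it is stated without proof and attributed to Refs.~\cite{Akers2022, Biane1997}---so there is no alternative approach to compare against; your write-up simply supplies the omitted verification.
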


\begin{lemma} \label{lem:theta}
	Let \( \varphi, \theta \in S_k \) with \( \varphi (k) = k \) and \( \theta (k) = k \). Then, \begin{enumerate}
		\item \( \# \left ( \varphi \theta^{- 1} \right ) = \# \left ( (i k) \varphi \theta^{- 1} \right ) + 1 \) for all \( i \in \{ 1, \dots, k - 1 \} \).
		\item \( \# \left ( \varphi \theta^{- 1} \right ) - \# (\varphi) = \# \left ( (i k) \varphi \theta^{- 1} \right ) - \# \bigl ( (i k) \varphi \bigr ) \) for all \( i \in \{ 1, \dots, k - 1 \} \).
	\end{enumerate}
\end{lemma}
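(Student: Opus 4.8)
The plan is to derive both identities from the elementary combinatorial fact that left-multiplying a permutation by a transposition $(ab)$ changes its number of cycles by exactly $\pm 1$, the value being $-1$ precisely when $a$ and $b$ lie in different cycles. This is exactly the content of Lemma~\ref{lem:french} in the ``different cycles'' direction, which is all I will need.

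First I would record the key structural observation: since $\varphi(k) = k$ and $\theta(k) = k$ (equivalently $\theta^{-1}(k) = k$), the point $k$ is a fixed point of both $\varphi$ and $\varphi\theta^{-1}$, so in either cycle decomposition $\{k\}$ appears as a singleton cycle. Because $i \in \{1, \dots, k-1\}$ forces $i \neq k$, the transposed elements $i$ and $k$ necessarily lie in distinct cycles of $\varphi\theta^{-1}$, and likewise of $\varphi$.

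For part~1, I would apply Lemma~\ref{lem:french} with the transposition $\alpha = (i k)$ and $\beta = \varphi\theta^{-1}$: as $i$ and $k$ sit in different cycles of $\beta$, the lemma gives $\#\bigl( (i k)\varphi\theta^{-1} \bigr) = \#(\varphi\theta^{-1}) - 1$, which rearranges to the claimed equality. For part~2, I would first apply the same lemma with $\beta = \varphi$ to obtain $\#\bigl( (i k)\varphi \bigr) = \#(\varphi) - 1$; subtracting this from the identity of part~1 yields $\#(\varphi\theta^{-1}) - \#(\varphi) = \#\bigl( (i k)\varphi\theta^{-1} \bigr) - \#\bigl( (i k)\varphi \bigr)$, as desired.

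There is no substantial obstacle in this argument. The only point requiring care is that Lemma~\ref{lem:french} supplies only the implication ``transposed elements in different cycles $\Rightarrow$ cycle count drops by one,'' so the proof must be arranged so that in each invocation the element $k$ plays the role of a singleton cycle disjoint from $i$ — which is exactly what the hypotheses $\varphi(k) = \theta(k) = k$ guarantee.
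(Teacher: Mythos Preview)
Your proof is correct and follows essentially the same approach as the paper: observe that $k$ is a fixed point of both $\varphi\theta^{-1}$ and $\varphi$ so that $i$ and $k$ lie in distinct cycles, then apply Lemma~\ref{lem:french} to each and subtract. The paper's argument is line-for-line the same.
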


\begin{proof}[Proof of 1]
	Let \( \varphi, \theta \in S_k \) with \( \varphi (k) = k \) and \( \theta (k) = k \). Then, \( \left ( \varphi \theta^{- 1} \right ) (k) = k \). That is, \( k \) is in a cycle by itself and thus not in the same cycle of \( \varphi \theta^{- 1} \) as \( i \). The statement follows with Lemma~\ref{lem:french}.
\end{proof}

\begin{proof}[Proof of 2]
	Let \( \varphi, \theta \in S_k \) with \( \varphi (k) = k \) and \( \theta (k) = k \). Then, \( i \) and \( k \) are not in the same cycle of neither \( \varphi \theta^{- 1} \) nor \( \varphi \). By Lemma~\ref{lem:french}, \( \# \left ( \varphi \theta^{- 1} \right ) = \# \left ( (i k) \varphi \theta^{- 1} \right ) + 1 \) and \( \# (\varphi) = \# \bigl ( (i k) \varphi \bigr ) + 1 \). Thus, \( \# \left ( \varphi \theta^{- 1} \right ) - \# (\varphi) = \# \left ( (i k) \varphi \theta^{- 1} \right ) + 1 - \# \bigl ( (i k) \varphi \bigr ) - 1 = \left ( (i k) \varphi \theta^{- 1} \right ) - \# \bigl ( (i k) \varphi \bigr ) \).
\end{proof}

With that, we can prove Proposition~\ref{prop:block_triangular}.

\begin{proof}[Proof of Proposition~\ref{prop:block_triangular}]
	By Lemma~\ref{lem:grouping}, \begin{align}
		\fv{\tau}{C_k}{\theta} & = \sum_{\sigma \in S_k} \Wg \left ( \sigma \tau^{- 1}, d D \right ) d^{\# (\sigma)} D^{\# \left ( \sigma \theta^{- 1} \right )} \\
		& = \sum_{\substack{\varphi \in S_k \\ \varphi (k) = k}} \left \{ \sum_{i = 1}^{k - 1} \Wg \left ( (i k) \varphi \tau^{- 1}, d D \right ) d^{\# \bigl ( (i k) \varphi \bigr )} D^{\# \left ( (i k) \varphi \theta^{- 1} \right )} + \Wg \left ( \varphi \tau^{- 1}, d D \right ) d^{\# (\varphi)} D^{\# \left ( \varphi \theta^{- 1} \right )} \right \}.
	\end{align} 	
	
	By Lemma~\ref{lem:theta}, for \( \theta \in S_k \) with \( \theta (k) = k \), \begin{align}
		\fv{\tau}{C_k}{\theta} & = \sum_{\substack{\varphi \in S_k \\ \varphi (k) = k}} \frac{1}{d^{\# \left ( \varphi \theta^{- 1} \right ) - \# (\varphi)}} \left \{ \sum_{i = 1}^{k - 1} \Wg \left ( (i k) \varphi \tau^{- 1}, d D \right ) (d D)^{\# \left ( (i k) \varphi \theta^{- 1} \right )} + \Wg \left ( \varphi \tau^{- 1}, d D \right ) (d D)^{\# \left ( \varphi \theta^{- 1} \right )} \right \} \\
		& = \sum_{\substack{\varphi \in S_k \\ \varphi (k) = k}} \frac{(d D)^{\# \left ( \varphi \theta^{- 1} \right ) - 1}}{d^{\# \left ( \varphi \theta^{- 1} \right ) - \# (\varphi)}} \left \{ \sum_{i = 1}^{k - 1} \Wg \left ( (i k) \varphi \tau^{- 1}, d D \right ) + d D \Wg \left ( \varphi \tau^{- 1}, d D \right ) \right \} \\
		& = \sum_{\substack{\varphi \in S_k \\ \varphi (k) = k}} d^{\# (\varphi) - 1} D^{\# \left ( \varphi \theta^{- 1} \right ) - 1} \left \{ \sum_{i = 1}^{k - 1} \Wg \left ( (i k) \varphi \tau^{- 1}, d D \right ) + d D \Wg \left ( \varphi \tau^{- 1}, d D \right ) \right \} \\
		& = \sum_{\substack{\varphi \in S_k \\ \varphi (k) = k}} d^{\# \left ( \varphi^\downarrow \right )} D^{\# \left [ \left ( \varphi \theta^{- 1} \right ]^\downarrow \right )} \left \{ \sum_{i = 1}^{k - 1} \Wg \left ( (i k) \varphi \tau^{- 1}, d D \right ) + d D \Wg \left ( \varphi \tau^{- 1}, d D \right ) \right \},
	\end{align} where, in the final line, we have used that \( \# \left ( \alpha^\downarrow \right ) = \# (\alpha) - 1 \).
	
	By Lemma~\ref{lem:collins_1}, \begin{align}
		\fv{\tau}{C_k}{\theta} & = \sum_{\substack{\varphi \in S_k \\ \varphi (k) = k}} \delta_{k, \left ( \varphi \tau^{- 1} \right ) (k)} \Wg \left [ \left ( \varphi \tau^{- 1} \right )^\downarrow, d D \right ] d^{\# \left ( \varphi^\downarrow \right )} D^{\# \left [ \left ( \varphi \theta^{- 1} \right )^\downarrow \right ]} \\
		& = \delta_{k, \tau (k)} \sum_{\substack{\varphi \in S_k \\ \varphi (k) = k}} \Wg \left [ \left ( \varphi \tau^{- 1} \right )^\downarrow, d D \right ] d^{\# \left ( \varphi^\downarrow \right )} D^{\# \left [ \left ( \varphi \theta^{- 1} \right )^\downarrow \right ]} \\
		& = \delta_{k, \tau (k)} \fv{\tau^\downarrow}{C_{k - 1}}{\theta^\downarrow},
	\end{align} where, in the second line, we have used that \( \left ( \varphi \tau^{- 1} \right ) (k) = k \) if and only if \( \tau (k) = k \) because \( \varphi (k) = k \).
	
	This concludes the proof.
\end{proof}

\section{PROOF OF PROPOSITION~\ref{prop:basis_change}} \label{app:basis_change}

\basischange*

\begin{proof}
	It holds that \begin{align}
		\fv{\tau}{T_\rho}{\theta} & = \sum_{\sigma \in S_k} \Wg \left ( \sigma \tau^{- 1}, d D \right ) d^{\# (\sigma \rho)} D^{\# \left ( \sigma \theta^{- 1} \right )} \\
		& = \sum_{\sigma \in S_k} \Wg \left ( \rho \sigma \tau^{- 1} \rho^{- 1}, d D \right ) d^{\# (\rho \sigma)} D^{\# \left ( \rho \sigma \theta^{- 1} \rho^{- 1} \right )} \\
		& = \sum_{\sigma \in S_k} \Wg \left [ \rho \sigma (\rho \tau)^{- 1}, d D \right ] d^{\# (\rho \sigma)} D^{\# \left [ \rho \sigma (\rho \theta)^{- 1} \right ]} \\
		& = \sum_{\pi \in S_k} \Wg \left [ \pi (\rho \tau)^{- 1}, d D \right ] d^{\# (\pi)} D^{\# \left [ \pi (\rho \theta)^{- 1} \right ]} \\
		& = \fv{\rho \tau}{C_k}{\rho \theta},
	\end{align} where, in the second line, we have used that \( \Wg (\alpha, q) = \Wg \left ( \beta \alpha \beta^{- 1}, q \right ) \) and that \( \# (\alpha) = \# \left ( \beta \alpha \beta^{- 1} \right ) \). Both identities are a result of the two functions being sensitive only to the conjugacy class of a given permutation. In the fourth line, we have used that the left-multiplication map is an isomorphism.
\end{proof}

\section{PROOF OF RESULT~\ref{res:mps_norm}} \label{app:mps_norm}

\mpsnorm*

\begin{proof}
    We split the proof into four steps, following the structure of the proof of Result~\ref{res:mps_renyi}.

    \begin{step} \label{step:mps_norm_1}
        We rewrite \( \mathbb{E} N (A : B) \) in terms of expressions of the form of Eq.~\eqref{eq:mps_caffe}. With the Hilbert-Schmidt inner product, \begin{align}
            \mathbb{E} N (A : B) = \mathbb{E} \tr \left ( \varrho_{A B}^2 \right ) + \mathbb{E} \tr \left ( \varrho_A^2 \right ) \tr \left ( \varrho_B^2 \right ) - 2 \mathbb{E} \tr \left [ \varrho_{A B} \left ( \varrho_A \otimes \varrho_B \right ) \right ].
        \end{align} It is then easy to confirm that \begin{align} \label{eq:mps_norm_5}
    		\mathbb{E} N (A : B) & = \tr \left \{ \left [ \left ( P_e^{(d)} \right )^{\otimes c} \otimes \left ( P_{(1 2)}^{(d)} \right )^{\otimes a} \otimes \left ( P_e^{(d)} \right )^{\otimes r} \otimes \left ( P_{(1 2)}^{(d)} \right )^{\otimes b} \otimes \left ( P_e^{(d)} \right )^{\otimes (f - 1)} \otimes P_e^{(d D)} \right ] \mathbb{E} \ketbra{\psi}{\psi}^{\otimes 4} \right \} \nonumber \\
    		& \phantom{{} = {}} + \tr \left \{ \left [ \left ( P_e^{(d)} \right )^{\otimes c} \otimes \left ( P_{(1 2)}^{(d)} \right )^{\otimes a} \otimes \left ( P_e^{(d)} \right )^{\otimes r} \otimes \left ( P_{(3 4)}^{(d)} \right )^{\otimes b} \otimes \left ( P_e^{(d)} \right )^{\otimes (f - 1)} \otimes P_e^{(d D)} \right ] \mathbb{E} \ketbra{\psi}{\psi}^{\otimes 4} \right \} \nonumber \\
    		& \phantom{{} = {}} - 2 \tr \left \{ \left [ \left ( P_e^{(d)} \right )^{\otimes c} \otimes \left ( P_{(1 2)}^{(d)} \right )^{\otimes a} \otimes \left ( P_e^{(d)} \right )^{\otimes r} \otimes \left ( P_{(1 3)}^{(d)} \right )^{\otimes b} \otimes \left ( P_e^{(d)} \right )^{\otimes (f - 1)} \otimes P_e^{(d D)} \right ] \mathbb{E} \ketbra{\psi}{\psi}^{\otimes 4} \right \}.
    	\end{align}
    \end{step}
    
    \begin{step} \label{step:mps_norm_2}
        We express \( \mathbb{E} N (A : B) \) in terms of the transfer matrices defined in Sec.~\ref{sec:mps_transfer}. Given the previous step, it is easy to confirm that \begin{align}
    		\mathbb{E} N (A : B) & = \fv{I_4}{T_e^c T_{(1 2)}^a T_e^r T_{(1 2)}^b}{F_4} + \fv{I_4}{T_e^c T_{(3 4)}^a T_e^r T_{(1 2)}^b}{F_4} - 2 \fv{I_4}{T_e^c T_{(1 2)}^a T_e^r T_{(1 3)}^b}{F_4} \\
    		& = \fv{I_4}{T_e^c T_{(1 2)}^a T_e^r \left ( T_{(1 2)}^b + T_{(3 4)}^b - 2 T_{(1 3)}^b \right )}{F_4} \\
        	& \equiv \fv{I_4}{T_e^c \mathcal{A} T_e^r \mathcal{B}}{F_4},
    	\end{align} where we have defined \begin{align}
    		\mathcal{A} = T_{(1 2)}^a \qquad \text{and} \qquad \mathcal{B} = T_{(1 2)}^b + T_{(3 4)}^b - 2 T_{(1 3)}^b.
    	\end{align}
    \end{step}
    
    \begin{step} \label{step:mps_norm_3}
        We expand \( \mathbb{E} N (A : B) \) in terms of the spectrum of \( T_e \) with \( e \in S_4 \) [see Eq.~\eqref{eq:expansion}]. Let \( \lambda_1 > \lambda_2 > \cdots \geq 0 \) denote the distinct eigenvalues of \( T_e \). It holds~\cite{GitHub} that \begin{align}
        	\lambda_1 = 1 \qquad \text{and} \qquad \lambda_2 = \frac{d D^2 - d}{d^2 D^2 - 1}.
        \end{align} The former is nondegenerate, while the degeneracy of the latter is given by the number of transposition in \( S_4 \)~\cite{GitHub}, \begin{align}
        	w_2 = \binom{4}{2} = 6.
        \end{align} Expanding \( T_e^c \) and taking the limit of \( c \to \infty \) yields \begin{align}
    	    \mathbb{E} N (A : B) & = \fv{L_1}{\mathcal{A} T_e^r \mathcal{B}}{F_4},
    	\end{align} where we have used that \( \braket{I_4}{R_1} = 1 \). After expanding \( T_e^r \) and using that \( \ket{F_4} = \ket{R_1} \), we have \begin{align}
    	    \mathbb{E} N (A : B) & = \fv{L_1}{\mathcal{A}}{R_1} \fv{L_1}{\mathcal{B}}{R_1} + \lambda_2^r \sum_{\mu = 1}^{w_2} \fv{L_1}{\mathcal{A}}{R_2^{(\mu)}} \fv{L_2^{(\mu)}}{\mathcal{B}}{R_1} + O \left ( \lambda_3^{r} \right ) \\
    	    & = \lambda_2^r \sum_{\mu = 1}^{w_2} \fv{L_1}{\mathcal{A}}{R_2^{(\mu)}} \fv{L_2^{(\mu)}}{\mathcal{B}}{R_1} + O \left ( \lambda_3^{r} \right ),
    	\end{align} where, in the second line, we have used that \begin{align} \label{eq:mps_temp}
    	    \fv{L_1}{\mathcal{B}}{R_1} = 0,
    	\end{align} which we prove in the following.
    	
    	In particular, we prove that \( \fv{L_1}{T_t^b}{R_1} \) does not depend on the two elements the transposition \( t \in S_4 \) acts upon. We need some understanding of \( T_t^b \) as well as the eigenvectors \( \bra{L_1} \) and \( \ket{R_1} \) of \( T_e \). For the former, we make use of the fact that \( T_\rho^b \) with \( \rho \in S_4 \) is similar to to \( T_e^b \) with \( e \in S_4 \), \begin{align}
    		T_\rho^b & = \sum_{\pi, \varphi \in S_4} \ketbra{\pi}{\rho \pi} C_4^b \ketbra{\rho \varphi}{\varphi}.
   	    \end{align} With \begin{align}
        	\alpha = \frac{d^2 D - D}{d^2 D^2 - 1}, \qquad \beta = \frac{d D^2 - d}{d^2 D^2 - 1}, \qquad f (u) = \sum_{i = 0}^{u - 1} \alpha \beta^i, \qquad \text{and} \qquad g (u) = \beta^u,
        \end{align} we have \begin{align}
            \Bigl ( T_e \ket{s_i} \Bigr )_{1 \leq i \leq 7} = \begin{pmatrix}
        		1 & \alpha & \alpha & \alpha & \alpha & \alpha & \alpha \\
        		0 & \beta & 0 & 0 & 0 & 0 & 0 \\
        		0 & 0 & \beta & 0 & 0 & 0 & 0 \\
        		0 & 0 & 0 & \beta & 0 & 0 & 0 \\
        		0 & 0 & 0 & 0 & \beta & 0 & 0 \\
        		0 & 0 & 0 & 0 & 0 & \beta & 0 \\
        		0 & 0 & 0 & 0 & 0 & 0 & \beta \\
        		0 & 0 & 0 & 0 & 0 & 0 & 0 \\
        		0 & 0 & 0 & 0 & 0 & 0 & 0 \\
        		0 & 0 & 0 & 0 & 0 & 0 & 0 \\
        		0 & 0 & 0 & 0 & 0 & 0 & 0 \\
        		0 & 0 & 0 & 0 & 0 & 0 & 0 \\
        		0 & 0 & 0 & 0 & 0 & 0 & 0 \\
        		0 & 0 & 0 & 0 & 0 & 0 & 0 \\
        		0 & 0 & 0 & 0 & 0 & 0 & 0 \\
        		0 & 0 & 0 & 0 & 0 & 0 & 0 \\
        		0 & 0 & 0 & 0 & 0 & 0 & 0 \\
        		0 & 0 & 0 & 0 & 0 & 0 & 0 \\
        		0 & 0 & 0 & 0 & 0 & 0 & 0 \\
        		0 & 0 & 0 & 0 & 0 & 0 & 0 \\
        		0 & 0 & 0 & 0 & 0 & 0 & 0 \\
        		0 & 0 & 0 & 0 & 0 & 0 & 0 \\
        		0 & 0 & 0 & 0 & 0 & 0 & 0 \\
        		0 & 0 & 0 & 0 & 0 & 0 & 0
        	\end{pmatrix} \qquad \text{and} \qquad \Bigl ( T_e^b \ket{s_i} \Bigr )_{1 \leq i \leq 7} = \begin{pmatrix}
        		1 & f (b) & f (b) & f (b) & f (b) & f (b) & f (b) \\
        		0 & g (b) & 0 & 0 & 0 & 0 & 0 \\
        		0 & 0 & g (b) & 0 & 0 & 0 & 0 \\
        		0 & 0 & 0 & g (b) & 0 & 0 & 0 \\
        		0 & 0 & 0 & 0 & g (b) & 0 & 0 \\
        		0 & 0 & 0 & 0 & 0 & g (b) & 0 \\
        		0 & 0 & 0 & 0 & 0 & 0 & g (b) \\
        		0 & 0 & 0 & 0 & 0 & 0 & 0 \\
        		0 & 0 & 0 & 0 & 0 & 0 & 0 \\
        		0 & 0 & 0 & 0 & 0 & 0 & 0 \\
        		0 & 0 & 0 & 0 & 0 & 0 & 0 \\
        		0 & 0 & 0 & 0 & 0 & 0 & 0 \\
        		0 & 0 & 0 & 0 & 0 & 0 & 0 \\
        		0 & 0 & 0 & 0 & 0 & 0 & 0 \\
        		0 & 0 & 0 & 0 & 0 & 0 & 0 \\
        		0 & 0 & 0 & 0 & 0 & 0 & 0 \\
        		0 & 0 & 0 & 0 & 0 & 0 & 0 \\
        		0 & 0 & 0 & 0 & 0 & 0 & 0 \\
        		0 & 0 & 0 & 0 & 0 & 0 & 0 \\
        		0 & 0 & 0 & 0 & 0 & 0 & 0 \\
        		0 & 0 & 0 & 0 & 0 & 0 & 0 \\
        		0 & 0 & 0 & 0 & 0 & 0 & 0 \\
        		0 & 0 & 0 & 0 & 0 & 0 & 0 \\
        		0 & 0 & 0 & 0 & 0 & 0 & 0
        	\end{pmatrix}.
        \end{align} For getting some understanding of \( \bra{L_1} \), we make use of the fact that \( \braket{L_i^{(\mu)}}{R_j^{(\nu)}} = \delta_{i j} \delta_{\mu \nu} \). It is easy to confirm that \begin{align}
            \ket{R_1} = \ket{s_1} \qquad \text{and} \qquad \ket{R_2^{(\mu)}} = \frac{\alpha}{\beta - 1} \ket{s_1} + \ket{s_{\mu + 1}},
        \end{align} which implies that \begin{align}
    		\Bigl ( \braket{L_1}{s_i} \Bigr )_{1 \leq i \leq 7} = \left ( 1 \quad \frac{\alpha}{\beta - 1} \quad \frac{\alpha}{\beta - 1} \quad \frac{\alpha}{\beta - 1} \quad \frac{\alpha}{\beta - 1} \quad \frac{\alpha}{\beta - 1} \quad \frac{\alpha}{\beta - 1} \right ).
    	\end{align} For any transposition \( t \in S_4 \), it thus holds that \begin{align}
    		\fv{L_1}{T_t^b}{R_1} & = \sum_{\pi, \varphi \in S_4} \braket{L_1}{\pi} \fv{t \pi}{C_4^b}{t \varphi} \braket{\varphi}{R_1} \\
    		& = \sum_{\pi, \varphi \in S_4} \braket{L_1}{\pi} \fv{t \pi}{C_4^b}{t \varphi} \delta_{s_1 \varphi} \\
    		& = \sum_{\pi \in S_4} \braket{L_1}{\pi} \fv{t \pi}{C_4^b}{t} \\
    		& = \sum_{\pi \in S_4} f (b) \braket{L_1}{\pi} \braket{t \pi}{s_1} + \sum_{\pi \in S_4} g (b) \braket{L_1}{\pi} \braket{t \pi}{t} \\
    		& = \sum_{\pi \in S_4} f (b) \braket{L_1}{\pi} \delta_{t \pi} + \sum_{\pi \in S_4} g (b) \braket{L_1}{\pi} \delta_{s_1 \pi} \\
    		& = f (b) \braket{L_1}{t} + g (b) \braket{L_1}{s_1} \\
    		& = \frac{\alpha}{\beta - 1} f (b) + g (b),
    	\end{align} which is independent of the two elements the transposition \( t \in S_4 \) acts upon. Thus, \begin{align}
    	    \fv{L_1}{\mathcal{B}}{R_1} = \fv{L_1}{\left ( T_{(1 2)}^b + T_{(3 4)}^b - 2 T_{(1 3)}^b \right )}{R_1} = 0.
    	\end{align}
    \end{step}

    \begin{step} \label{step:mps_norm_4}
        Finally, we can write \( \mathbb{E} N (A : B) \) in the form of Definition~\ref{def:definition}. That is, \begin{align}
            \mathbb{E} N (A : B) \equiv K \exp \left ( - \frac{r}{\xi} \right ) + O \left [ \exp \left ( - \frac{r}{\chi} \right ) \right ],
        \end{align} where \begin{align}
    		K = \sum_{\mu = 1}^{w_2} \fv{L_1}{\mathcal{A}}{R_2^{(\mu)}} \fv{L_2^{(\mu)}}{\mathcal{B}}{R_1}
    	\end{align} and \begin{align}
    		\xi = - \frac{1}{\log (\lambda_2)} = - \left [ \log \left ( \frac{d D^2 - d}{d^2 D^2 - 1} \right ) \right ]^{- 1} = \xone > \chi.
        \end{align}
    \end{step}
    
    \noindent This concludes the proof.
\end{proof}

\section{PROOF OF RESULT~\ref{res:mps_neumann} AND COROLLARY~\ref{cor:mps_alpha}} \label{app:mps_neumann}

\mpsneumann*

\mpsalpha*

In this Appendix, we prove Result~\ref{res:mps_neumann} and Corollary~\ref{cor:mps_alpha}. The proof of the latter will follow directly from the proof of the former.

\begin{proof}[Proof of Result~\ref{res:mps_neumann}]
    We split the proof into four steps, following the structure of the proof of Result~\ref{res:mps_renyi}. Because \( I (A : B) \) and \( I_2 (A : B) \) are related, the steps are overall very similar. As is usual in the context of the replica trick \cite{Castellani2005}, we will interchange the order of some limits without rigorous justification.
    
    \begin{step} \label{step:mps_neumann_1}
        We rewrite \( \mathbb{E} I (A : B) \) in terms of expressions of the form of Eq.~\eqref{eq:mps_caffe}. To that end, we make use of Eqs.~\eqref{eq:mps_neumann_1} and \eqref{eq:mps_neumann_2}. With those, \begin{align}
        	\mathbb{E} I (A : B) = \lim_{\alpha \to 1} \lim_{v \to 0} \frac{1}{v \alpha - v} \left \{ \log \bigl [ \mathbb{E} \tr \left ( \varrho_{A B}^\alpha \right )^v \bigr ] - \log \bigl [ \mathbb{E} \tr \left ( \varrho_A^\alpha \right )^v \bigr ] - \log \bigl [ \mathbb{E} \tr \left ( \varrho_B^\alpha \right )^v \bigr ] \right \}.
        \end{align} \( \mathbb{E} \tr \left ( \varrho_A^\alpha \right )^v \), \( \mathbb{E} \tr \left ( \varrho_A^\alpha \right )^v \), and \( \mathbb{E} \tr \left ( \varrho_{A B}^\alpha \right )^v \) can be written in the desired form. Let us define \( x \in S_\alpha \) to be the cyclic permutation so that \( x (i) = i + 1 \) modulo \( \alpha \) and \begin{align}
        	x_w = \bigl ( \alpha (w - 1) + 1, \alpha (w - 1) + 2, \dots, \alpha w \bigr ) \in S_{v \alpha}
        \end{align} with \( w \in \{ 1, \dots, v \} \). Then, for example, \begin{align}
            \mathbb{E} \tr \left ( \varrho_{A B}^\alpha \right )^v & = \tr \left \{ \left [ \left ( P_e^{(d)} \right )^{\otimes c} \otimes \left ( P_x^{(d)} \right )^{\otimes a} \otimes \left ( P_e^{(d)} \right )^{\otimes r} \otimes \left ( P_x^{(d)} \right )^{\otimes b} \otimes \left ( P_e^{(d)} \right )^{\otimes (f - 1)} \otimes P_e^{(d D)} \right ] \mathbb{E} \ketbra{\psi}{\psi}^{\otimes \alpha} \right \}^v \\
        	& = \tr \left \{ \left [ \left ( P_e^{(d)} \right )^{\otimes c} \otimes \left ( P_{x_1 \cdots x_v}^{(d)} \right )^{\otimes a} \otimes \left ( P_e^{(d)} \right )^{\otimes r} \otimes \left ( P_{x_1 \cdots x_v}^{(d)} \right )^{\otimes b} \otimes \left ( P_e^{(d)} \right )^{\otimes (f - 1)} \otimes P_e^{(d D)} \right ] \mathbb{E} \ketbra{\psi}{\psi}^{\otimes v \alpha} \right \}.
        \end{align}
    \end{step}
    
    \begin{step} \label{step:mps_neumann_2}
        We express \( \mathbb{E} I (A : B) \) in terms of the transfer matrices defined in Sec.~\ref{sec:mps_transfer}. Given the previous step, it is easy to confirm that \begin{align}
        	\mathbb{E} I (A : B) & = \lim_{\alpha \to 1} \lim_{v \to 0} \frac{1}{v \alpha - v} \bigl [ \log \left ( \fv{I_{v \alpha}}{T_e^c T_{x_1 \cdots x_v}^a T_e^r T_{x_1 \cdots x_v}^b}{F_{v \alpha}} \right ) - \log \left ( \fv{I_{v \alpha}}{T_e^c T_{x_1 \cdots x_v}^a}{F_{v \alpha}} \right ) \nonumber \\
        	& \hphantom{{} = \lim_{\alpha \to 1} \lim_{v \to 0} \frac{1}{v \alpha - v} \bigl [} - \log \left ( \fv{I_{v \alpha}}{T_e^{c + a + r} T_{x_1 \cdots x_v}^b}{F_{v \alpha}} \right ) \bigr ] \\
        	& \equiv \lim_{\alpha \to 1} \lim_{v \to 0} \frac{1}{v \alpha - v} \left [ \log \left ( \fv{I_{v \alpha}}{T_e^c \mathcal{A} T_e^r \mathcal{B}}{F_{v \alpha}} \right ) - \log \left ( \fv{I_{v \alpha}}{T_e^c \mathcal{A}}{F_{v \alpha}} \right ) - \log \left ( \fv{I_{v \alpha}}{T_e^{c + a + r} \mathcal{B}}{F_{v \alpha}} \right ) \right ], \label{eq:mps_neumann_3}
        \end{align} where we have defined \begin{align}
        	\mathcal{A} = T_{x_1 \cdots x_v}^a \qquad \text{and} \qquad \mathcal{B} = T_{x_1 \cdots x_v}^b.
        \end{align}
    \end{step}
    
    \begin{step} \label{step:mps_neumann_3}
        We expand \( \mathbb{E} I (A : B) \) in terms of the spectrum of \( T_e \) with \( e \in S_{v \alpha} \) [see Eq.~\eqref{eq:expansion}]. At this point, we assume Conjecture~\ref{con:lambda_2} to hold. That is, for any \( v \alpha \geq 2 \), we assume that \begin{align}
            \lambda_2 = \frac{d D^2 - d}{d^2 D^2 - 1}.
        \end{align} with degeneracy \begin{align}
        	w_2 = \binom{k}{2}.
        \end{align} Expanding \( T_e^c \) and taking the limit of \( c \to \infty \) yields \begin{align}
    	    \mathbb{E} I (A : B) & = \lim_{\alpha \to 1} \lim_{v \to 0} \frac{1}{v \alpha - v} \left [ \log \left ( \fv{L_1}{\mathcal{A} T_e^r \mathcal{B}}{F_{v \alpha}} \right ) - \log \left ( \fv{L_1}{\mathcal{A}}{F_{v \alpha}} \right ) - \log \left ( \fv{L_1}{\mathcal{B}}{F_{v \alpha}} \right ) \right ],
    	\end{align} where we have used that \( \braket{I_{v \alpha}}{R_1} = 1 \). After expanding \( T_e^r \) and using that \( \ket{F_{v \alpha}} = \ket{R_1} \), we have \begin{align}
    	    \mathbb{E} I (A : B) & = \lim_{\alpha \to 1} \lim_{v \to 0} \frac{1}{v \alpha - v} \Biggl \{ \log \left [ \fv{L_1}{\mathcal{A}}{R_1} \fv{L_1}{\mathcal{B}}{R_1} + \lambda_2^r \sum_{\mu = 1}^{w_2} \fv{L_1}{\mathcal{A}}{R_2^{(\mu)}} \fv{L_2^{(\mu)}}{\mathcal{B}}{R_1} + O \left ( \lambda_3^r \right ) \right ] \nonumber \\
    	    & \hphantom{{} = \lim_{\alpha \to 1} \lim_{v \to 0} \frac{1}{v \alpha - v} \Biggl \{} - \log \left ( \fv{L_1}{\mathcal{A}}{R_1} \right ) - \log \left ( \fv{L_1}{\mathcal{B}}{R_1} \right ) \Biggr \}.
    	\end{align}
    \end{step}
    
    \begin{step} \label{step:mps_neumann_4}
        Finally, we can write \( \mathbb{E} I (A : B) \) in the form of Definition~\ref{def:definition}. With \( \Lambda = \max \left ( \left \{ \lambda_2^{2 r}, \lambda_3^r \right \} \right ) \), \begin{align}
        	\mathbb{E} I (A : B) & = \lim_{\alpha \to 1} \lim_{v \to 0} \frac{1}{v \alpha - v} \log \left [ 1 + \lambda_2^r \sum_{\mu = 1}^{w_2} \frac{\fv{L_1}{\mathcal{A}}{R_2^{(\mu)}} \fv{L_2^{(\mu)}}{\mathcal{B}}{R_1}}{\fv{L_1}{\mathcal{A}}{R_1} \fv{L_1}{\mathcal{B}}{R_1}} + O \left ( \lambda_3^r \right ) \right ] \\
        	& = \lim_{\alpha \to 1} \lim_{v \to 0} \frac{1}{v \alpha - v} \left [ \lambda_2^r \sum_{\mu = 1}^{w_2} \frac{\fv{L_1}{\mathcal{A}}{R_2^{(\mu)}} \fv{L_2^{(\mu)}}{\mathcal{B}}{R_1}}{\fv{L_1}{\mathcal{A}}{R_1} \fv{L_1}{\mathcal{B}}{R_1}} + O (\Lambda) \right ] \\
        	& \equiv \lim_{\alpha \to 1} \lim_{v \to 0} \frac{1}{v \alpha - v} \left \{ \widetilde{K} (v \alpha) \exp \left ( - \frac{r}{\xi} \right ) + O \left [ \exp \left ( - \frac{r}{\chi} \right ) \right ] \right \},
        \end{align} where \begin{align}
        	\widetilde{K} (v \alpha) = \sum_{\mu = 1}^{w_2} \frac{\fv{L_1}{\mathcal{A}}{R_2^{(\mu)}} \fv{L_2^{(\mu)}}{\mathcal{B}}{R_1}}{\fv{L_1}{\mathcal{A}}{R_1} \fv{L_1}{\mathcal{B}}{R_1}}
        \end{align} and \begin{align}
        	\xi = - \frac{1}{\log (\lambda_2)} = - \left [ \log \left ( \frac{d D^2 - d}{d^2 D^2 - 1} \right ) \right ]^{- 1} = \xone > \chi.
        \end{align} As \( \xi \) is independent of \( v \alpha \), it cannot be affected by the replica limits. \( \widetilde{K} (v \alpha) \) will converge to some \( K \) that is guaranteed to be independent of \( r \).
    \end{step}
    
    \noindent This concludes the proof.
\end{proof}

\begin{proof}[Proof of Corollary~\ref{cor:mps_alpha}]
    Using Eq.~\eqref{eq:mps_neumann_2}, it holds that \begin{align}
        \mathbb{E} I_\alpha (A : B) = \lim_{v \to 0} \frac{1}{v \alpha - v} \left \{ \log \bigl [ \mathbb{E} \tr \left ( \varrho_{A B}^\alpha \right )^v \bigr ] - \log \bigl [ \mathbb{E} \tr \left ( \varrho_A^\alpha \right )^v \bigr ] - \log \bigl [ \mathbb{E} \tr \left ( \varrho_B^\alpha \right )^v \bigr ] \right \}.
    \end{align} Thus, the proof is identical to that of Result~\ref{res:mps_neumann} without the limit of \( \alpha \to 1 \). The statement follows.
\end{proof}

\section{RESULT~\ref{res:mps_neumann} WITH \texorpdfstring{\( c = 0 \)}{ZERO INITIAL SITES}} \label{app:mps_neumann_c}

In this Appendix, we prove a version of Result~\ref{res:mps_neumann} with \( c = 0 \). Steps~1 and 2 of this proof are identical to Steps~\ref{step:mps_neumann_1} and \ref{step:mps_neumann_2} of the proof of Result~\ref{res:mps_neumann} with \( c = 0 \). That is, at the end of Step~2, we have \begin{align}
    \mathbb{E} I (A : B) & = \lim_{\alpha \to 1} \lim_{v \to 0} \frac{1}{v \alpha - v} \left [ \log \left ( \fv{I_{v \alpha}}{\mathcal{A} C_{v \alpha}^{r} \mathcal{B}}{F_{v \alpha}} \right ) - \log \left ( \fv{I_{v \alpha}}{\mathcal{A}}{F_{v \alpha}} \right ) - \log \left ( \fv{I_{v \alpha}}{C_{v \alpha}^{a + r} \mathcal{B}}{F_{v \alpha}} \right ) \right ].
\end{align} We start the proof at Step~3.

\setcounter{step}{2}

\begin{step}
    We expand \( \mathbb{E} I (A : B) \) in terms of the spectrum of \( T_e \) with \( e \in S_{v \alpha} \) [see Eq.~\eqref{eq:expansion}]. We assume Conjecture~\ref{con:lambda_2} to hold. Expanding \( T_e^r \) and using that \( \braket{I_{v \alpha}}{R_1} = 1 \) yields \begin{align}
	    \mathbb{E} I (A : B) & = \lim_{\alpha \to 1} \lim_{v \to 0} \frac{1}{v \alpha - v} \Biggl \{ \log \left [ \fv{I_{v \alpha}}{\mathcal{A}}{R_1} \fv{L_1}{\mathcal{B}}{F_{v \alpha}} + \lambda_2^r \sum_{\mu = 1}^{w_2} \fv{I_{v \alpha}}{\mathcal{A}}{R_2^{(\mu)}} \fv{L_2^{(\mu)}}{\mathcal{B}}{F_{v \alpha}} + O \left ( \lambda_3^r \right ) \right ] \nonumber \\
    	& \hphantom{{} = \lim_{\alpha \to 1} \lim_{v \to 0} \frac{1}{v \alpha - v} \Biggl \{} - \log \left ( \fv{I_{v \alpha}}{\mathcal{A}}{F_{v \alpha}} \right ) \nonumber \\
    	& \hphantom{{} = \lim_{\alpha \to 1} \lim_{v \to 0} \frac{1}{v \alpha - v} \Biggl \{} \left . {} - \log \left [ \fv{L_1}{\mathcal{B}}{F_{v \alpha}} + \lambda_2^{a + r} \sum_{\mu = 1}^{w_2} \braket{I_{v \alpha}}{R_2^{(\mu)}} \fv{L_2^{(\mu)}}{\mathcal{B}}{F_{v \alpha}} + O \left ( \lambda_3^r \right ) \right ] \right \}.
	\end{align}
\end{step}

\begin{step}
    We write \( \mathbb{E} I (A : B) \) in the form of Definition~\ref{def:definition}. With \( \Lambda = \max \left ( \left \{ \lambda_2^{2 r}, \lambda_3^r \right \} \right ) \), \begin{align}
    	\mathbb{E} I (A : B) & = \lim_{\alpha \to 1} \lim_{v \to 0} \frac{1}{v \alpha - v} \Biggl \{ \log \left [ 1 + \lambda_2^r \sum_{\mu = 1}^{w_2} \frac{\fv{I_{v \alpha}}{\mathcal{A}}{R_2^{(\mu)}} \fv{L_2^{(\mu)}}{\mathcal{B}}{F_{v \alpha}}}{\fv{I_{v \alpha}}{\mathcal{A}}{R_1} \fv{L_1}{\mathcal{B}}{F_{v \alpha}}} + O \left ( \lambda_3^r \right ) \right ] \nonumber \\
    	& \hphantom{{} = \lim_{\alpha \to 1} \lim_{v \to 0} \frac{1}{v \alpha - v} \Biggl \{} - \log \left [ 1 + \lambda_2^{a + r} \sum_{\mu = 1}^{w_2} \braket{I_{v \alpha}}{R_2^{(\mu)}} \frac{\fv{L_2^{(\mu)}}{\mathcal{B}}{F_{v \alpha}}}{\fv{L_1}{\mathcal{B}}{F_{v \alpha}}} + O \left ( \lambda_3^r \right ) \right ] \Biggr \} \\
    	& = \lim_{\alpha \to 1} \lim_{v \to 0} \frac{1}{v \alpha - v} \left [ \lambda_2^r \sum_{\mu = 1}^{w_2} \frac{\fv{I_{v \alpha}}{\mathcal{A}}{R_2^{(\mu)}} \fv{L_2^{(\mu)}}{\mathcal{B}}{F_{v \alpha}}}{\fv{I_{v \alpha}}{\mathcal{A}}{R_1} \fv{L_1}{\mathcal{B}}{F_{v \alpha}}} + \lambda_2^{a + r} \sum_{\mu = 1}^{w_2} \braket{I_{v \alpha}}{R_2^{(\mu)}} \frac{\fv{L_2^{(\mu)}}{\mathcal{B}}{F_{v \alpha}}}{\fv{L_1}{\mathcal{B}}{F_{v \alpha}}} + O (\Lambda) \right ] \\
    	& = \lim_{\alpha \to 1} \lim_{v \to 0} \frac{1}{v \alpha - v} \left \{ \lambda_2^r \sum_{\mu = 1}^{w_2} \left [ \frac{\fv{I_{v \alpha}}{\mathcal{A}}{R_2^{(\mu)}} \fv{L_2^{(\mu)}}{\mathcal{B}}{F_{v \alpha}}}{\fv{I_{v \alpha}}{\mathcal{A}}{R_1} \fv{L_1}{\mathcal{B}}{F_{v \alpha}}} + \lambda_2^a \braket{I_{v \alpha}}{R_2^{(\mu)}} \frac{\fv{L_2^{(\mu)}}{\mathcal{B}}{F_{v \alpha}}}{\fv{L_1}{\mathcal{B}}{F_{v \alpha}}} \right ] + O (\Lambda) \right \} \\
    	& \equiv \lim_{\alpha \to 1} \lim_{v \to 0} \frac{1}{v \alpha - v} \left \{ \widetilde{K'} (v \alpha) \exp \left ( - \frac{r}{\xi} \right ) + O \left [ \exp \left ( - \frac{r}{\chi} \right ) \right ] \right \},
    \end{align} where \begin{align}
    	\widetilde{K'} (v \alpha) = \sum_{\mu = 1}^{w_2} \left [ \frac{\fv{I_{v \alpha}}{\mathcal{A}}{R_2^{(\mu)}} \fv{L_2^{(\mu)}}{\mathcal{B}}{F_{v \alpha}}}{\fv{I_{v \alpha}}{\mathcal{A}}{R_1} \fv{L_1}{\mathcal{B}}{F_{v \alpha}}} + \lambda_2^a \braket{I_{v \alpha}}{R_2^{(\mu)}} \frac{\fv{L_2^{(\mu)}}{\mathcal{B}}{F_{v \alpha}}}{\fv{L_1}{\mathcal{B}}{F_{v \alpha}}} \right ]
    \end{align} and \begin{align}
    	\xi = - \frac{1}{\log (\lambda_2)} = - \left [ \log \left ( \frac{d D^2 - d}{d^2 D^2 - 1} \right ) \right ]^{- 1} \xone > \chi.
    \end{align} Again, \( \widetilde{K'} (v \alpha) \) will converge to some \( K' \) that is guaranteed to be independent of \( r \). While \( K' \) is different from \( K \) in general, the correlation length \( \xi \) is independent of \( c \).
\end{step}

\section{NUMERICAL ANALYSIS} \label{app:numerical_neumann}

In this Appendix, we briefly review our numerical analysis of the von Neumann mutual information \( I (A : B) \) in Sec.~\ref{sec:mps_neumann}. We fix \( d \) and \( D \), and we set \( a = b = 1 \) and \( r = 5 \). (i) We generate \( a + r + b + 1 \) Haar-random unitary matrices of \( U (d D) \) to define \( \ket{\psi} \). This definition makes the assumption that there are no sites before subsystem \( A \) (that is, \( c = 0 \)). We discuss in Appendix~\ref{app:mps_neumann_c} why this does not affect the average correlation length. By setting \( f = 1 \), we furthermore use the fact that the sites after subsystem \( B \) do not play a role as a result of the sequential generation [see Eq.~\eqref{eq:mps_sequential_generation}]. (ii) We compute \( I (A : B) \) with respect to \( \ket{\psi} \). (iii) We repeat steps (i) and (ii) \( 10\,000 \) times to compute the average of \( I (A : B) \). (iv) We repeat steps (i) through (iii) for \( r \in \{ 7, 9, 11, 13, 15 \} \), plot the averages of \( I (A : B) \) against \( r \), and fit the data to extract the average correlation length. (v) To obtain Fig.~\ref{fig:numerical_neumann}, we repeat steps (i) through (iv) for different \( d \) and \( D \).

\section{TRANSFER MATRICES IN TWO DIMENSIONS} \label{app:iso_transfer}

This Appendix expands on Sec.~\ref{sec:iso_transfer}. We will state the definitions of the boundary tensors and prove Eq.~\eqref{eq:iso_sequential_generation}.

From the main text, recall that we define \( V^{(i, j)} = U^{(i, j)} \otimes \overline{U^{(i, j)}} \). By computing the \( k \)-fold twirl, we obtain the building block \begin{align}
	\raisebox{-37pt}{\includegraphics{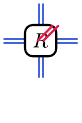}} = \int \mathrm{d} U^{(i, j)} \, \raisebox{-37pt}{\includegraphics{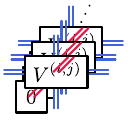}} = \raisebox{-37pt}{\includegraphics{iso_building_block_rhs}}.
\end{align}

With that, we have \begin{align}
	\mathbb{E} \ketbra{\psi}{\psi}^{\otimes k} = \raisebox{-83pt}{\includegraphics{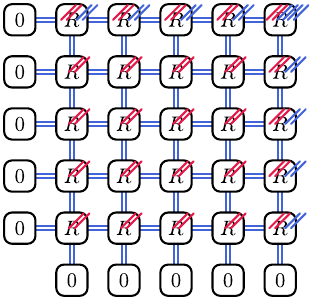}} = \raisebox{-83pt}{\includegraphics{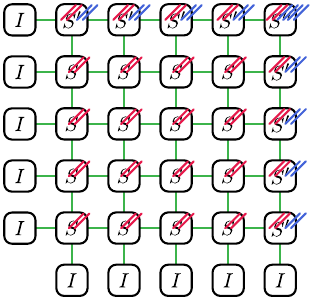}},
\end{align} where we have cut permutation-valued (green) legs instead of bond (blues) ones in the second step. The tensor \( S \) is stated in Eq.~\eqref{eq:iso_s}. \( S' \) and \( S'' \) reflect the different boundary conditions.

After contracting \( S' \) with \( P_\rho^{(d)} \), we obtain \begin{align}
	\raisebox{-24pt}{\includegraphics{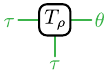}} & = \raisebox{-24pt}{\includegraphics{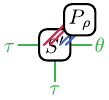}} = \raisebox{-24pt}{\includegraphics{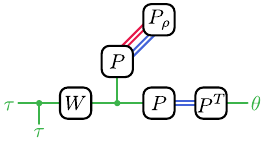}} \\
	& = \sum_{\sigma \in S_k} \Wg \left ( \sigma \tau^{- 1}, d D^2 \right ) (d D)^{\# \left ( \sigma \rho \right )} D^{\# \left ( \sigma \theta^{- 1} \right )}
\end{align} at the top boundary and \begin{align}
	\raisebox{-24pt}{\includegraphics{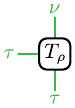}} & = \raisebox{-24pt}{\includegraphics{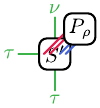}} = \raisebox{-24pt}{\includegraphics{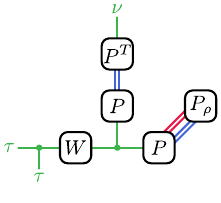}} \\
	& = \sum_{\sigma \in S_k} \Wg \left ( \sigma \tau^{- 1}, d D^2 \right ) (d D)^{\# \left ( \sigma \rho \right )} D^{\# \left ( \sigma \nu^{- 1} \right )}
\end{align} at the right boundary.

We will always contract \( S'' \) with \( P_e^{(d)} \). The tensor in the top-right corner thus plays the same role as the final vector \( \ket{F_k} = e_1 \in \mathbb{R}^{k!} \) does in one dimension. In fact, \( T_e = \ketbra{F_k}{F_k} \), \begin{align}
	\raisebox{-24pt}{\includegraphics{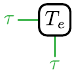}} & = \raisebox{-24pt}{\includegraphics{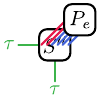}} = \raisebox{-24pt}{\includegraphics{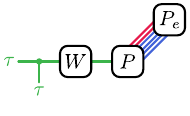}} \equiv \raisebox{-24pt}{\includegraphics{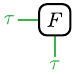}}\\
	& = \sum_{\sigma \in S_k} \Wg \left ( \sigma \tau^{- 1}, d D^2 \right ) \left (d D^2 \right )^{\# \left ( \sigma e \right )} = \delta_{e \tau},
\end{align} where we have defined \begin{align}
    \raisebox{-44pt}{\includegraphics{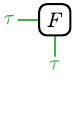}} = \raisebox{-44pt}{\includegraphics{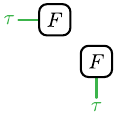}}.
\end{align}

If a tensor corresponding to \( e \in S_k \) is contracted with \( \ket{F_k} \), it factorizes. For tensors at the top boundary, we have \begin{align}
	\raisebox{-24pt}{\includegraphics{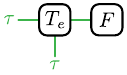}} & = \raisebox{-24pt}{\includegraphics{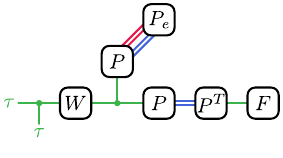}} = \raisebox{-24pt}{\includegraphics{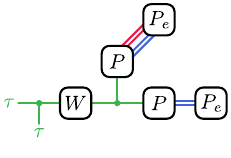}} = \raisebox{-24pt}{\includegraphics{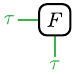}} \\
	& = \sum_{\sigma \in S_k} \Wg \left ( \sigma \tau^{- 1}, d D^2 \right ) \left ( d D^2 \right )^{\# \left ( \sigma e \right )} = \delta_{e \tau},
\end{align} for those at the right boundary, we have \begin{align}
	\raisebox{-24pt}{\includegraphics{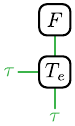}} & = \raisebox{-24pt}{\includegraphics{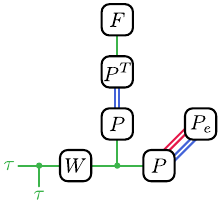}} = \raisebox{-24pt}{\includegraphics{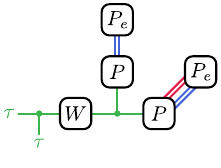}} = \raisebox{-24pt}{\includegraphics{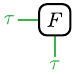}} \\
	& = \sum_{\sigma \in S_k} \Wg \left ( \sigma \tau^{- 1}, d D^2 \right ) \left ( d D^2 \right )^{\# \left ( \sigma e \right )} = \delta_{e \tau},
\end{align} and for those in the bulk, we have \begin{align}
	\raisebox{-24pt}{\includegraphics{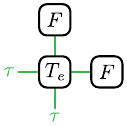}} & = \raisebox{-24pt}{\includegraphics{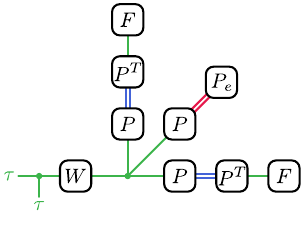}} = \raisebox{-24pt}{\includegraphics{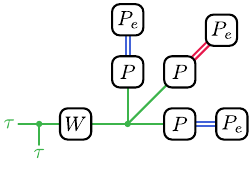}} = \raisebox{-24pt}{\includegraphics{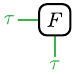}} \\
	& = \sum_{\sigma \in S_k} \Wg \left ( \sigma \tau^{- 1}, d D^2 \right ) \left ( d D^2 \right )^{\# \left ( \sigma e \right )} = \delta_{e \tau}.
\end{align}

The identities above lead to Eq.~\eqref{eq:iso_sequential_generation}: \begin{align}
    \raisebox{-32pt}{\includegraphics{iso_sequential_generation_lhs}} & = \raisebox{-32pt}{\includegraphics{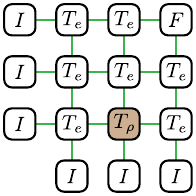}} = \raisebox{-32pt}{\includegraphics{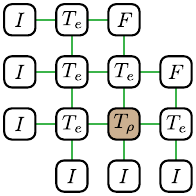}} = \raisebox{-32pt}{\includegraphics{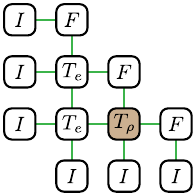}} \\
    & = \raisebox{-32pt}{\includegraphics{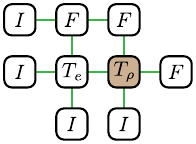}} = \raisebox{-32pt}{\includegraphics{iso_sequential_generation_rhs}}.
\end{align}

\section{SPECTRUM OF THE \texorpdfstring{TRANSFER MATRIX \( \mathcal{T}_e \) with \( e \in S_2 \)}{CURLY 2-COPY IDENTITY TRANSFER MATRIX}} \label{app:iso_spectrum_2}

In this Appendix, we state and prove two lemmas concerning the spectrum of \( \mathcal{T}_e \) with \( e \in S_2 \).

Let us start with some preliminaries. We define \begin{align}
	\ket{0} = \begin{pmatrix}
		1 \\ 0
	\end{pmatrix}, \qquad \ket{1} = \begin{pmatrix}
		0 \\ 1
	\end{pmatrix}, \qquad \text{and} \qquad \ket{+} = \begin{pmatrix}
		1 \\ 1
	\end{pmatrix},
\end{align} and map the contraction of tensors defining \( \mathcal{T}_e \) with \( e \in S_2 \) to a multiplication of matrices: \begin{align} \label{eq:fame}
	\vcenter{\hbox{\includegraphics{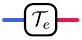}}} = \vcenter{\hbox{\includegraphics{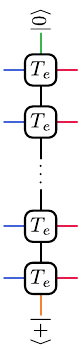}}} = \vcenter{\hbox{\includegraphics{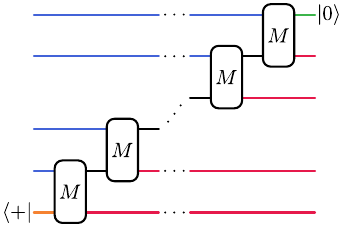}}},
\end{align} where \begin{align}
	\vcenter{\hbox{\includegraphics{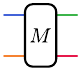}}} = \begin{pmatrix}
		1 & \alpha & \alpha & \gamma \\ 0 & 0 & 0 & 0 \\ 0 & 0 & 0 & 0 \\ 0 & \beta & \beta & \delta
	\end{pmatrix}
\end{align} with \begin{align}
    \alpha = \frac{d^2 D^3 - D}{d^2 D^4 - 1}, \qquad \beta = \frac{d D^3 - d D}{d^2 D^4 - 1}, \qquad \gamma = \frac{d^2 D^2 - D^2}{d^2 D^4 - 1}, \qquad \text{and} \qquad \delta = \frac{d D^2 - d}{d^2 D^4 - 1}.
\end{align}

Note that \begin{align}
    & \vcenter{\hbox{\includegraphics{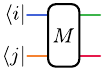}}} = 0 \qquad \text{if} \qquad i \neq j,
\end{align} and \begin{align}
	N = \vcenter{\hbox{\includegraphics{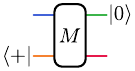}}} = \begin{pmatrix}
		1 & \alpha \\ 0 & \beta
	\end{pmatrix}
\end{align} is equal to \( T_e \) with \( e \in S_2 \) for \( d \to d D \). With that, it easy to check that \begin{align} \label{eq:iso_spectrum_2_5}
    \vcenter{\hbox{\includegraphics{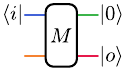}}} = \vcenter{\hbox{\includegraphics{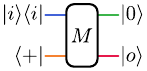}}} = \fv{o}{N}{i} \bra{i}.
\end{align}

We also introduce an analytical notation. We define \begin{align}
    M_j & = I^{\otimes (j - 1)} \otimes M \otimes I^{\otimes (h - j)}.
\end{align} \( \mathcal{T}_e \) with \( e \in S_2 \) is then given by \begin{align}
    \mathcal{T}_e = \left ( I^{\otimes h} \otimes \bra{0} \right ) \bigl ( M_h \cdots M_1 \bigr ) \left ( \ket{+} \otimes I ^{\otimes h} \right ).
\end{align} With \( i_1, \dots, i_h \in \{ 0, 1 \} \) and \( o_1, \dots, o_h \in \{ 0, 1 \} \), the entries of \( \mathcal{T}_e \) with \( e \in S_2 \) are given by \begin{align}
    \bigl ( \bra{o_1, \dots, o_h} \otimes \bra{0} \bigr ) \bigl ( M_h \cdots M_1 \bigr ) \bigl ( \ket{+} \otimes \ket{i_1, \dots, i_h} \bigr ).
\end{align}

Our first lemma states that \( \mathcal{T}_e \) with \( e \in S_2 \) is block triangular, where our definition of blocks arises from the indexing of rows and columns in base \( 2 \). In particular, with \( 2 \leq j \leq h \), the \( j \)th diagonal block of \( \mathcal{T}_e \), which we denote by \( \mathcal{T}_e^{(j)} \in \mathbb{R}^{2^{j - 1} \times 2^{j - 1}} \), has fixed indices \( i_h = \cdots i_{j + 1} = o_h = \cdots = o_{j + 1} = 0 \) and \( i_j = o_j = 1 \). Using Eq.~\eqref{eq:iso_spectrum_2_1}, it is given by \begin{align}
    \mathcal{T}_e^{(j)} = \raisebox{-44pt}{\includegraphics{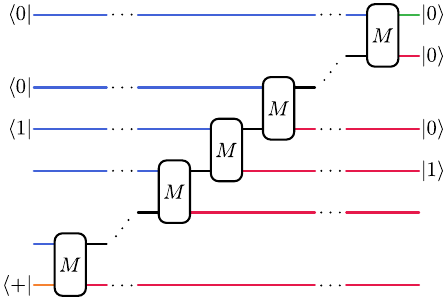}} = \raisebox{-44pt}{\includegraphics{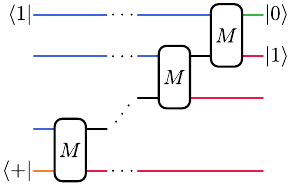}}.
\end{align} The first diagonal block, which we denote by \( \mathcal{T}_e^{(1)} \in \mathbb{R}^{2 \times 2} \), is given by \begin{align}
    \mathcal{T}_e^{(1)} = \raisebox{-15pt}{\includegraphics{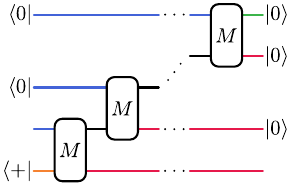}} = \raisebox{-15pt}{\includegraphics{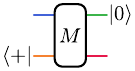}} = \begin{pmatrix} 1 & \alpha \\ 0 & \beta \end{pmatrix}.
\end{align}

In particular, we will prove that a block of \( \mathcal{T}_e \) is zero if its defining row digit \( o_j \) is higher than its defining column digit \( i_j \), which implies that \( \mathcal{T}_e \) is upper block triangular. As the proof relies exclusively on Eq.~\eqref{eq:iso_spectrum_2_5}, \( \mathcal{T}_e \) inherits its upper block triangularity from the upper triangularity of \( N \).

\begin{lemma} \label{lem:iso_spectrum_2_1}
    \( \mathcal{T}_e \) with \( e \in S_2 \) is upper block triangular because \begin{align}
        \left ( I^{\otimes (j - 1)} \otimes \bra{0} \otimes \bra{0}^{\otimes (h - j)} \otimes \bra{0} \right ) \bigl ( M_h \cdots M_1 \bigr ) \left ( \ket{+} \otimes I^{\otimes (j - 1)} \otimes \ket{1} \otimes \ket{0}^{\otimes (h - j)} \right ) = 0.
    \end{align}
\end{lemma}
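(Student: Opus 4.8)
The plan is to reduce the matrix element on the left-hand side to a scalar built out of the entries of $M$, and then to read off a forced vanishing from the upper-triangular structure of $N$ supplied by Eq.~\eqref{eq:iso_spectrum_2_5}.

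First I would expand the product $M_h \cdots M_1$ sandwiched between the two product states. The key simplification is that $M$ applied to any input produces a state supported only on $\ket{0}\ket{0}$ and $\ket{1}\ket{1}$, i.e.\ it forces its two output legs to carry a common value (this is exactly the statement that the corresponding two-index tensor vanishes off the diagonal). Because the boundary vectors are product states, every internal summation in the chain then collapses and the quantity factorizes along the vertical direction: the value relayed upward out of the $m$-th tensor equals its horizontal output digit $o_m$, so the whole expression becomes a product $\prod_{m=1}^{h} M[\,o_m o_m\,;\,i_m\, o_{m-1}\,]$, where $\ket{+}$ supplies the bottom value $o_0$ (summed over) and the final $\bra{0}$ pins $o_h=0$, all evaluated on the chosen boundary data $o_j=\dots=o_h=0$, $i_j=1$, $i_{j+1}=\dots=i_h=0$, with $o_1,\dots,o_{j-1},i_1,\dots,i_{j-1}$ free.

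Next I would invoke Eq.~\eqref{eq:iso_spectrum_2_5}. For the last $h-j$ tensors of the chain the horizontal input is pinned to $\ket{0}$ (by the tail $\ket{0}^{\otimes(h-j)}$ of the ket together with the $\bra{0}$ that defines $\mathcal{T}_e$), so each of these tensors degenerates to the $2\times2$ matrix $N$ acting on the relayed value. Thus the portion of the amplitude above site $j$ is a product of entries of $N$ that must transport the value emitted by the $j$-th tensor up to the value $0$ fixed at the top, while the $j$-th tensor itself receives the injected $\ket{1}$ on its horizontal input leg. Tracking this through, the $\bra{0}$'s on the output slots $j,\dots,h$ together with the injected $\ket{1}$ force one factor of the resulting product to be the vanishing lower-left entry of $N$ — the ``$0\to1$'' transition that upper-triangularity of $N$ forbids. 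Hence the scalar is identically zero, uniformly in the free indices, which is precisely the asserted identity; running the argument for every $j\in\{2,\dots,h\}$ then shows that all entries of $\mathcal{T}_e$ from a column of the $j$-th block to a row of any earlier block vanish, i.e.\ $\mathcal{T}_e$ is upper block triangular and its spectrum is the union of the spectra of its diagonal blocks.

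The routine parts here are the collapse of the internal sums and the bound-state identification of the tail with $N$'s. The hard part — and the place I expect to have to be careful — is the bookkeeping in the third step: one must follow the injected $\ket{1}$ through the collapsed chain and pin down exactly where the imposed $\bra{0}$'s manufacture the forbidden $N$-entry, which is sensitive to the ordering of the contractions $M_h\cdots M_1$ and to the leg conventions of $M$ and $N$. Getting this right is the whole content of the proof; once it is in place, the ``inherits upper triangularity from $N$'' slogan in the surrounding text is literally what has been shown.
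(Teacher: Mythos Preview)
Your proposal is correct and takes essentially the same approach as the paper: peel off the tail $M_h,\dots,M_{j+1}$ (where $i_m=o_m=0$) and then exhibit the vanishing at site $j$, attributing it to the upper-triangular structure of $N$ via Eq.~\eqref{eq:iso_spectrum_2_5}.

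The paper's execution is more direct than yours. Rather than first invoking the output-diagonality of $M$ to collapse the chain into a scalar product and only then hunting for the forbidden $N$-entry, the paper simply records the two contractions
\[
(\bra{0}\otimes\bra{0})\,M\,(I\otimes\ket{0}) = \ket{0}
\qquad\text{and}\qquad
(\bra{0}\otimes\bra{0})\,M\,(I\otimes\ket{1}) = 0,
\]
which are precisely the $o=0$ instances of Eq.~\eqref{eq:iso_spectrum_2_5}. The first peels one $M_m$ off the top (iterated $h-j$ times), the second kills the expression at $M_j$. Your diagonality observation is already baked into these identities, so the paper never needs the intermediate factorisation you set up. This is also why the bookkeeping hazard you flag---tracking which leg carries the injected $\ket{1}$ and matching it to $N_{10}$---does not appear in the paper's argument: the second identity \emph{is} the manifestation of $N_{10}=0$ at site $j$, read off directly rather than reconstructed from the factorised product.
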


\begin{proof}
    From \begin{align} \label{eq:iso_spectrum_2_1}
        \vcenter{\hbox{\includegraphics{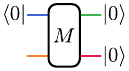}}} = \ket{0}
    \end{align} and \begin{align}
        \vcenter{\hbox{\includegraphics{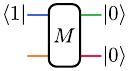}}} = 0,
    \end{align} it follows that \begin{align}
        \raisebox{-44pt}{\includegraphics{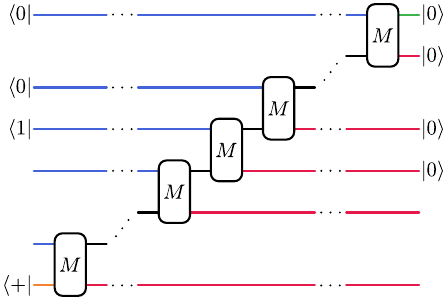}} = \raisebox{-44pt}{\includegraphics{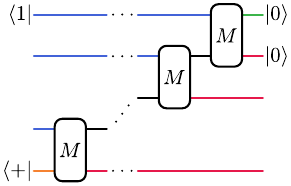}} = 0,
    \end{align} which concludes the proof.
\end{proof}

In our second lemma, we utilize the block triangularity of \( \mathcal{T}_e \) with \( e \in S_2 \) to make a direct statement about its two leading eigenvalues.

\begin{lemma} \label{lem:iso_spectrum_2_2}
    Let \( | \lambda_1 | > | \lambda_2 | > \cdots \geq 0 \) denote the distinct eigenvalues of \( \mathcal{T}_e \) with \( e \in S_2 \). Then, for any \( h \), \( \lambda_1 = 1 \) and \( \lambda_2 = \beta \). Furthermore, \( \lambda_1 \) and \( \lambda_2 \) are nondegenerate.
\end{lemma}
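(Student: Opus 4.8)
The plan is to exploit the block-triangular structure established in Lemma~\ref{lem:iso_spectrum_2_1}. Since $\mathcal{T}_e$ with $e \in S_2$ is upper block triangular with diagonal blocks $\mathcal{T}_e^{(1)}, \dots, \mathcal{T}_e^{(h)}$, its characteristic polynomial factors over the blocks, so $\lambda(\mathcal{T}_e) = \bigcup_{j=1}^{h} \lambda\bigl(\mathcal{T}_e^{(j)}\bigr)$ as a multiset, and it suffices to control each diagonal block separately. The first block is $\mathcal{T}_e^{(1)} = N$, whose eigenvalues are $1$ and $\beta$, each simple, and $\beta = \tfrac{dD^3 - dD}{d^2D^4 - 1} < 1$. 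Hence the whole lemma reduces to the claim that, for every $j \ge 2$, all eigenvalues of $\mathcal{T}_e^{(j)}$ have modulus strictly smaller than $\beta$: this makes $1$ the unique eigenvalue of largest modulus and $\beta$ the unique eigenvalue of second-largest modulus, while block-triangularity forces their algebraic multiplicities to be $1$ each.

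To treat the blocks with $j \ge 2$, I would first simplify them. In $\mathcal{T}_e^{(j)}$ the tiles $M_{j+1}, \dots, M_h$ carry the trivial permutation on all external legs, so the factorization identities of App.~\ref{app:iso_transfer} (the analogue of Eq.~\eqref{eq:iso_spectrum_2_5}) contract them away, leaving a chain contraction of $M_1, \dots, M_j$ in which only the last tile carries the non-trivial element of $S_2$, closed at the top by $\ket{+}$ and at the bottom by $\bra{0}$. Because the entries of $M$ are built from the nonnegative Weingarten data of $S_2$, the matrix $\mathcal{T}_e^{(j)}$ is entrywise nonnegative; hence, by Perron--Frobenius, its spectral radius is a genuine (real, nonnegative) eigenvalue that bounds the modulus of every eigenvalue of $\mathcal{T}_e^{(j)}$, and it is in turn bounded by any weighted maximal column sum $\max_i \tfrac{1}{w_i}\sum_k w_k \bigl(\mathcal{T}_e^{(j)}\bigr)_{ki}$ with positive weights $w$.

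The crux -- and the step I expect to be the main obstacle -- is to promote this to the strict bound $< \beta$, uniformly in $h$, by a substochasticity argument. The elementary input is that $N$, which governs how the accumulated bond index propagates through each tile, is column-substochastic: its two column sums are $1$ and $\alpha + \beta = \tfrac{D(d+1)}{dD^2 + 1}$, and $\alpha + \beta \le 1$ is equivalent to $(D-1)(1 - dD) \le 0$, which holds for all $d, D \ge 1$ and strictly whenever $D \ge 2$. In the reduced chain for $\mathcal{T}_e^{(j)}$ with $j \ge 2$, the non-trivial constraint at the last tile contributes a factor $\beta$ (the relevant entry of the $N$-block of $M$) while steering the index that feeds the remaining $j-1$ tiles away from the Perron (trivial) direction of $N$; together with at least one strictly substochastic step this should bound the spectral radius of $\mathcal{T}_e^{(j)}$ by $\beta$ times a factor strictly below $1$. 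Making this rigorous is the delicate part: one must choose the weights $w$ so that no weighted column sum of the reduced chain exceeds the strict bound, track carefully which column sums equal $1$ and which are strictly smaller, and verify that the $\ket{+}$ and $\bra{0}$ boundaries together with the non-trivial site genuinely keep the estimate from saturating at $\beta$. Once $\mathcal{T}_e^{(j)}$ has spectral radius $< \beta$ for every $j \ge 2$, the assertions about $\lambda_1$, $\lambda_2$, and their non-degeneracy follow immediately.
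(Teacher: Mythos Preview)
Your plan is essentially the paper's proof: block-triangularity reduces the problem to bounding the spectral radius of each $\mathcal{T}_e^{(j)}$ with $j\ge 2$ strictly below $\beta$, and this is achieved by pulling out an explicit factor $\beta$ and showing the remaining matrix is strictly column-substochastic. Two small remarks on execution: the paper uses the clean identity $(\bra{1}\otimes\bra{0})M(I\otimes\ket{0})=\beta\ket{1}$ to extract $\beta$ in one stroke, after which \emph{unweighted} column sums of the $4\times 4$ matrix $M$ (not just of $N$) already suffice---so you should also check $\gamma+\delta<1$, not only $\alpha+\beta<1$; and your justification ``nonnegative Weingarten data of $S_2$'' is not quite right (the Weingarten function at the transposition is negative), though the entries $\alpha,\beta,\gamma,\delta$ of $M$ are indeed all nonnegative.
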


\begin{proof}
    The spectrum of \( \mathcal{T}_e \) with \( e \in S_2 \) is given by the union of the spectra of its diagonal blocks. It is evident that the first block \( \mathcal{T}_e^{(1)} \) has eigenvalues \( 1 \) and \( \beta \). In the following, we show that any other diagonal block \( \mathcal{T}_e^{(j)} \) with \( 2 \leq j \leq h \) can be written as a product of \( \beta \) and a strictly substochastic matrix, implying that its eigenvalues are strictly smaller than \( \beta \). We structure the proof in steps.
    
    \begin{step}
        We show that any diagonal block \( \mathcal{T}_e^{(j)} \) with \( 2 \leq j \leq h \) can be written as \( \beta \) times a matrix. From \begin{align}
            \vcenter{\hbox{\includegraphics{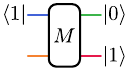}}} = \beta \ket{1},
        \end{align} it follows that \begin{align}
            \mathcal{T}_e^{(j)} = \raisebox{-33pt}{\includegraphics{iso_jth_block_mid}} = \beta \raisebox{-33pt}{\includegraphics{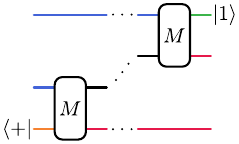}}.
        \end{align}
    \end{step}
    
    \begin{step}
        We argue that the matrix \begin{align} \label{eq:iso_spectrum_2_2}
            \raisebox{-33pt}{\includegraphics{iso_jth_block_rhs}}
        \end{align} is strictly column substochastic. It holds that \begin{align} \label{eq:iso_spectrum_2_4}
            \vcenter{\hbox{\includegraphics{iso_matrix}}} = \begin{pmatrix}
        		1 & \alpha & \alpha & \gamma \\ 0 & 0 & 0 & 0 \\ 0 & 0 & 0 & 0 \\ 0 & \beta & \beta & \delta
        	\end{pmatrix} 
        \end{align} is column substochastic. While the first column of \( M \) evidently sums to \( 1 \), the sums of the other columns are strictly bounded by \( 1 \). As a result, \( M_{j - 1} \cdots M_1 \) is column substochastic. The boundary condition \( \ket{1} \) does not affect this because it specifies a subset of columns of the matrix \( M_{j - 1} \cdots M_1 \). In fact, it imposes strict substochasticity because this subset does not include the only column summing to \( 1 \). Also the boundary condition \( \bra{+} \) does not affect the substochasticity. The boundary condition means that Eq.~\eqref{eq:iso_spectrum_2_2} is a sum of matrices. Each of these matrices comprises a disjoint subset of rows of the matrix \( M_{j - 1} \cdots M_1 \). Because \( M_{j - 1} \cdots M_1 \), the matrix comprising the whole set of rows, is column substochastic, so is the sum of the matrices comprising the disjoint subsets.
    \end{step}
    
    \begin{step}
        \( 1 \) and \( \beta \) are the only eigenvalues of \( \mathcal{T}_e^{(1)} \). They are nondegenerate. Because the matrix in Eq.~\eqref{eq:iso_spectrum_2_4} is strictly column substochastic, the eigenvalues of any diagonal block \( \mathcal{T}_e^{(j)} \) with \( 2 \leq j \leq h \) are strictly smaller than \( \beta \).
    \end{step}
    
    \noindent The statement follows.
\end{proof}

As a preparation for Appendix~\ref{app:iso_spectrum_4}, we provide the proofs of Lemmas~\ref{lem:iso_spectrum_2_1} and \ref{lem:iso_spectrum_2_2} in analytical notation.

\begin{proof}[Proof of Lemma~\ref{lem:iso_spectrum_2_1} in analytical notation]
    From \begin{align} \label{eq:iso_spectrum_2_3}
        \bigl ( \bra{0} \otimes \bra{0} \bigr ) M \bigl ( I \otimes \ket{0} \bigr ) = \ket{0}
    \end{align} and \begin{align}
        \bigl ( \bra{0} \otimes \bra{0} \bigr ) M \bigl ( I \otimes \ket{1} \bigr ) = 0,
    \end{align} it follows that \begin{align}
        & \left ( I^{\otimes (j - 1)} \otimes \bra{0} \otimes \bra{0}^{\otimes (h - j)} \otimes \bra{0} \right ) \bigl ( M_h \cdots M_1 \bigr ) \left ( \ket{+} \otimes I^{\otimes (j - 1)} \otimes \ket{1} \otimes \ket{0}^{\otimes (h - j)} \right ) \nonumber \\
        & \qquad = \left ( I^{\otimes (j - 1)} \otimes \bra{0} \otimes \bra{0} \right ) \bigl ( M_j \cdots M_1 \bigr ) \left ( \ket{+} \otimes I^{\otimes (j - 1)} \otimes \ket{1} \right ) \\
        & \qquad = 0,
    \end{align} which concludes the proof.
\end{proof}

\begin{proof}[Proof of Lemma~\ref{lem:iso_spectrum_2_2} in analytical notation]
    As in the version with graphical notation, we structure the proof in steps, without repeating the details.
    
    \setcounter{step}{0}
    
    \begin{step}
        From \begin{align}
            \bigl ( \bra{1} \otimes \bra{0} \bigr ) M \bigl ( I \otimes \ket{0} \bigr ) = \beta \ket{1},
        \end{align} it follows that \begin{align}
            \mathcal{T}_e^{(j)} & = \left ( I^{\otimes (j - 1)} \otimes \bra{1} \otimes \bra{0} \right ) \bigl ( M_j \cdots M_1 \bigr ) \left ( \ket{+} \otimes I^{\otimes (j - 1)} \otimes \ket{1} \right ) \\
            & = \beta \left ( I^{\otimes (j - 1)} \otimes \bra{1} \right ) \bigl ( M_{j - 1} \cdots M_1 \bigr ) \left ( \ket{+} \otimes I^{\otimes (j - 1)} \right ).
        \end{align}
    \end{step}
    
    \begin{step}
        It holds that \begin{align} \label{eq:iso_spectrum_2_6}
            \left ( I^{\otimes (j - 1)} \otimes \bra{1} \right ) \bigl ( M_{j - 1} \cdots M_1 \bigr ) \left ( \ket{+} \otimes I^{\otimes (j - 1)} \right )
        \end{align} is strictly column substochastic.
    \end{step}
    
    \begin{step}
        \( 1 \) and \( \beta \) are the only eigenvalues of \( \mathcal{T}_e^{(1)} \). They are nondegenerate. Because the matrix in Eq.~\eqref{eq:iso_spectrum_2_6} is strictly column substochastic, the eigenvalues of any diagonal block \( \mathcal{T}_e^{(j)} \) with \( 2 \leq j \leq h \) are strictly smaller than \( \beta \).
    \end{step}
    
    \noindent The statement follows.
\end{proof}

\section{SPECTRUM OF THE \texorpdfstring{TRANSFER MATRIX \( \mathcal{T}_e \) with \( e \in S_4 \)}{CURLY 4-COPY IDENTITY TRANSFER MATRIX}} \label{app:iso_spectrum_4}

In this Appendix, we state and prove two lemmas concerning the spectrum of \( \mathcal{T}_e \) with \( e \in S_4 \). Using the same notation as in Appendix~\ref{app:iso_spectrum_2}, we will draw on results from that Appendix.

\setcounter{MaxMatrixCols}{24}

\( \mathcal{T}_e \) with \( e \in S_4 \) is defined by Eq.~\eqref{eq:fame}, now with \( M \in \mathbb{R}^{576 \times 576} \). As in Appendix~\ref{app:iso_spectrum_2}, it holds that \begin{align} \label{eq:iso_spectrum_4_1}
    \bigl ( \bra{o} \otimes \bra{0} \bigr ) M \bigl ( I \otimes \ket{i} \bigr ) = \fv{o}{N}{i} \bra{i},
\end{align} where \begin{align} \label{eq:iso_spectrum_4_2}
    N = \bigl ( \bra{+} \otimes I \bigr ) M \bigl ( I \otimes \ket{0} \bigr ) = \begin{pmatrix}
		1 & \alpha & \alpha & \alpha & \alpha & \alpha & \alpha & \gamma & \gamma & \gamma & \gamma & \gamma & \gamma & \gamma & \gamma & \eta & \eta & \eta & \eta & \eta & \eta & \rho & \rho & \rho \\
		0 & \beta & 0 & 0 & 0 & 0 & 0 & \delta & \delta & \delta & \delta & 0 & 0 & 0 & 0 & \theta & \theta & \iota & \theta & \iota & \theta & \sigma & \tau & \tau \\
		0 & 0 & \beta & 0 & 0 & 0 & 0 & \delta & \delta & 0 & 0 & \delta & \delta & 0 & 0 & \iota & \theta & \theta & \theta & \theta & \iota & \tau & \sigma & \tau \\
		0 & 0 & 0 & \beta & 0 & 0 & 0 & 0 & 0 & \delta & \delta & \delta & \delta & 0 & 0 & \theta & \iota & \theta & \iota & \theta & \theta & \tau & \tau & \sigma \\
		0 & 0 & 0 & 0 & \beta & 0 & 0 & \delta & \delta & 0 & 0 & 0 & 0 & \delta & \delta & \theta & \iota & \theta & \iota & \theta & \theta & \tau & \tau & \sigma \\
		0 & 0 & 0 & 0 & 0 & \beta & 0 & 0 & 0 & \delta & \delta & 0 & 0 & \delta & \delta & \iota & \theta & \theta & \theta & \theta & \iota & \tau & \sigma & \tau \\
		0 & 0 & 0 & 0 & 0 & 0 & \beta & 0 & 0 & 0 & 0 & \delta & \delta & \delta & \delta & \theta & \theta & \iota & \theta & \iota & \theta & \sigma & \tau & \tau \\
		0 & 0 & 0 & 0 & 0 & 0 & 0 & \varepsilon & \zeta & 0 & 0 & 0 & 0 & 0 & 0 & \kappa & \kappa & \lambda & \lambda & \kappa & \lambda & \upsilon & \upsilon & \upsilon \\
		0 & 0 & 0 & 0 & 0 & 0 & 0 & \zeta & \varepsilon & 0 & 0 & 0 & 0 & 0 & 0 & \lambda & \lambda & \kappa & \kappa & \lambda & \kappa & \upsilon & \upsilon & \upsilon \\
		0 & 0 & 0 & 0 & 0 & 0 & 0 & 0 & 0 & \varepsilon & \zeta & 0 & 0 & 0 & 0 & \kappa & \kappa & \kappa & \lambda & \lambda & \lambda & \upsilon & \upsilon & \upsilon \\
		0 & 0 & 0 & 0 & 0 & 0 & 0 & 0 & 0 & \zeta & \varepsilon & 0 & 0 & 0 & 0 & \lambda & \lambda & \lambda & \kappa & \kappa & \kappa & \upsilon & \upsilon & \upsilon \\
		0 & 0 & 0 & 0 & 0 & 0 & 0 & 0 & 0 & 0 & 0 & \varepsilon & \zeta & 0 & 0 & \kappa & \lambda & \kappa & \kappa & \lambda & \lambda & \upsilon & \upsilon & \upsilon \\
		0 & 0 & 0 & 0 & 0 & 0 & 0 & 0 & 0 & 0 & 0 & \zeta & \varepsilon & 0 & 0 & \lambda & \kappa & \lambda & \lambda & \kappa & \kappa & \upsilon & \upsilon & \upsilon \\
		0 & 0 & 0 & 0 & 0 & 0 & 0 & 0 & 0 & 0 & 0 & 0 & 0 & \varepsilon & \zeta & \kappa & \lambda & \lambda & \kappa & \kappa & \lambda & \upsilon & \upsilon & \upsilon \\
		0 & 0 & 0 & 0 & 0 & 0 & 0 & 0 & 0 & 0 & 0 & 0 & 0 & \zeta & \varepsilon & \lambda & \kappa & \kappa & \lambda & \lambda & \kappa & \upsilon & \upsilon & \upsilon \\
		0 & 0 & 0 & 0 & 0 & 0 & 0 & 0 & 0 & 0 & 0 & 0 & 0 & 0 & 0 & \mu & \nu & \nu & \nu & \nu & \xi & \tau & \varphi & \tau \\
		0 & 0 & 0 & 0 & 0 & 0 & 0 & 0 & 0 & 0 & 0 & 0 & 0 & 0 & 0 & \nu & \mu & \nu & \xi & \nu & \nu & \tau & \tau & \varphi \\
		0 & 0 & 0 & 0 & 0 & 0 & 0 & 0 & 0 & 0 & 0 & 0 & 0 & 0 & 0 & \nu & \nu & \mu & \nu & \xi & \nu & \varphi & \tau & \tau \\
		0 & 0 & 0 & 0 & 0 & 0 & 0 & 0 & 0 & 0 & 0 & 0 & 0 & 0 & 0 & \nu & \xi & \nu & \mu & \nu & \nu & \tau & \tau & \varphi \\
		0 & 0 & 0 & 0 & 0 & 0 & 0 & 0 & 0 & 0 & 0 & 0 & 0 & 0 & 0 & \nu & \nu & \xi & \nu & \mu & \nu & \varphi & \tau & \tau \\
		0 & 0 & 0 & 0 & 0 & 0 & 0 & 0 & 0 & 0 & 0 & 0 & 0 & 0 & 0 & \xi & \nu & \nu & \nu & \nu & \mu & \tau & \varphi & \tau \\
		0 & 0 & 0 & 0 & 0 & 0 & 0 & 0 & 0 & 0 & 0 & 0 & 0 & 0 & 0 & o & o & \pi & o & \pi & o & \chi & \psi & \psi \\
		0 & 0 & 0 & 0 & 0 & 0 & 0 & 0 & 0 & 0 & 0 & 0 & 0 & 0 & 0 & \pi & o & o & o & o & \pi & \psi & \chi & \psi \\
		0 & 0 & 0 & 0 & 0 & 0 & 0 & 0 & 0 & 0 & 0 & 0 & 0 & 0 & 0 & o & \pi & o & \pi & o & o & \psi & \psi & \chi
	\end{pmatrix}
\end{align} is equal to \( T_e \) with \( e \in S_4 \) for \( d \to d D \).

As in Appendix~\ref{app:iso_spectrum_2}, our first lemma states that \( \mathcal{T}_e \) with \( e \in S_4 \) is block triangular. The definition of blocks now arises from the indexing of rows and columns in base \( 24 \). In particular, any diagonal block (but the first) is defined by \( i_h = \cdots = i_{j + 1} = o_h = \cdots = o_{j + 1} = 0 \) and \( i_j = o_j \neq 0 \). There are four classes of diagonal blocks: \begin{itemize}
    \item The first diagonal block is in its own class. It is given by \begin{align}
        \left ( I \otimes \bra{0}^{\otimes (h - 1)} \otimes \bra{0} \right ) \bigl ( M_h \cdots M_1 \bigr ) \left ( \ket{+} \otimes I \otimes \ket{0}^{\otimes (h - 1)} \right ) = \bigl ( I \otimes \bra{0} \bigr ) M \bigl ( \ket{+} \otimes I \bigr ) = N.
    \end{align}
    \item The second class of diagonal blocks corresponds to transpositions. \( i_j \) and \( o_j \) correspond to the same transposition. There are six sub-blocks in this class because there are six different transpositions in \( S_4 \).
    \item The third class of diagonal blocks corresponds to permutations with a single fixed point. \( i_j \) and \( o_j \) correspond to any of the two permutations with the same single fixed point. There are four sub-blocks in this class because there are four different choices of a single fixed point.
    \item The fourth class of diagonal blocks corresponds to permutations with no fixed point. \( i_j \) and \( o_j \) correspond to any of the nine permutations with no fixed point.
\end{itemize}

In particular, we will prove that a block of \( \mathcal{T}_e \) is zero if its defining row digit \( o_j \) is higher than its defining column digit \( i_j \) stand in a certain relation to each other. As the proof relies exclusively on Eq.~\eqref{eq:iso_spectrum_4_1}, \( \mathcal{T}_e \) again inherits its upper block triangularity from the upper triangularity of \( N \).

\begin{lemma} \label{lem:iso_spectrum_4_1}
    \( \mathcal{T}_e \) with \( e \in S_4 \) is upper block triangular.
\end{lemma}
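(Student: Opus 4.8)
The plan is to run the proof of Lemma~\ref{lem:iso_spectrum_2_1} essentially verbatim, with the two possible values of a digit there (writing $0$ for the identity and $1$ for the unique transposition of $S_2$) replaced by the $24$ values available in $S_4$. The only ingredients are the collapse identity~\eqref{eq:iso_spectrum_4_1}, $\bigl(\bra{o}\otimes\bra{0}\bigr)M\bigl(I\otimes\ket{i}\bigr)=\fv{o}{N}{i}\bra{i}$, and the structure of the $24\times24$ matrix $N$ of Eq.~\eqref{eq:iso_spectrum_4_2}. For the latter I would record two facts: the column of $N$ indexed by the identity permutation ($0$) vanishes off its identity entry, $\fv{\rho}{N}{0}=\delta_{\rho,0}$, which is immediate from the definition of the Weingarten function; and, since $N$ equals $T_e$ (with $e\in S_4$) under the substitution $d\to dD$, Propositions~\ref{prop:conjugacy_class} and \ref{prop:block_triangular} apply unchanged, so $N$ is upper block triangular in the ordering of $S_4$ by number of fixed points, then by fixed-point set, then by conjugacy class --- that is, $\fv{o}{N}{i}=0$ whenever $o$ lies in a strictly lower $N$-block than $i$, transpositions do not mix with one another, three-cycles mix only when they share their fixed point, and the fixed-point-free permutations form a single block.

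First I would fix the block ordering of $\mathcal{T}_e$ by sorting rows and columns by the position $j$ of their leading non-identity digit (an index lies in the ``level-$j$'' stratum if its digit at position $j$ is nonzero and all of its higher digits equal $0$), and within each level by the $N$-block of that leading digit; at level $1$ this reproduces $\mathcal{T}_e^{(1)}=N$, and at each level $j\geq2$ one obtains the three remaining block classes described before the statement (one block per transposition, one block per fixed point for the three-cycles, and one block for the nine fixed-point-free permutations). Given an entry $\fv{o_1\cdots o_h}{\mathcal{T}_e}{i_1\cdots i_h}$, let $j$ be the leading non-identity position of the column index and telescope Eq.~\eqref{eq:iso_spectrum_4_1} through $M_h\cdots M_1$ from the $\bra{0}$ boundary inward: since $i_{j+1}=\cdots=i_h=0$, each of $M_h,\dots,M_{j+1}$ collapses against the $\bra{0}$ propagated from the top and, by the identity-column property $\fv{\rho}{N}{0}=\delta_{\rho,0}$, contributes a factor $\delta_{o_\ell,0}$; hence the entry vanishes unless $o_{j+1}=\cdots=o_h=0$ as well. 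One further application of Eq.~\eqref{eq:iso_spectrum_4_1} at $M_j$ pulls out the factor $\fv{o_j}{N}{i_j}$, which is $0$ whenever $o_j$ lies in a strictly lower $N$-block than $i_j$; what remains is a contraction of $M_{j-1}\cdots M_1$ against the $\ket{+}$ boundary and $\bra{i_j}$ on the $j$-th leg, which is exactly the picture one level lower and feeds only the diagonal blocks and the blocks above them. Combining the two vanishing conditions yields $\fv{o}{\mathcal{T}_e}{i}=0$ whenever the block of the row index lies strictly below that of the column index, i.e.\ $\mathcal{T}_e$ is upper block triangular.

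The hard part is organizational rather than conceptual: one has to check that the telescoping reduction is genuinely clean --- that after eating $M_h,\dots,M_{j+1}$ the remaining object is a bare product $(\cdots)M_j\cdots M_1(\cdots)$ with no residual horizontal dependence, exactly as in the proof of Lemma~\ref{lem:iso_spectrum_2_1} --- and that the coarse block order just introduced (level, then number of fixed points, then fixed-point set, then conjugacy class of the leading digit) coincides with the one under which $N$, and hence every surviving factor $\fv{o_j}{N}{i_j}$, is upper block triangular, so that the four classes of diagonal blocks listed above are precisely the blocks not forced to vanish. Once this bookkeeping is in place, the argument is the digit-by-digit analogue of the proof of Lemma~\ref{lem:iso_spectrum_2_1}, with $N$ playing the role that its top-left $2\times2$ corner played there.
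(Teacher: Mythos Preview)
Your proposal is correct and follows essentially the same approach as the paper: both rely on the collapse identity $\bigl(\bra{o}\otimes\bra{0}\bigr)M\bigl(I\otimes\ket{i}\bigr)=\fv{o}{N}{i}\bra{i}$ together with the inherited upper block triangularity of $N$ (which is $T_e$ with $e\in S_4$ under $d\to dD$). The paper's proof is terser---it simply states the list of $(i_j,o_j)$ pairs for which the reduced expression vanishes---whereas you spell out the telescoping through $M_h,\dots,M_{j+1}$ and the level-by-level bookkeeping more explicitly; but the mechanism is identical.
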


\begin{proof}
    From Eqs.~\eqref{eq:iso_spectrum_4_1}, it follows that \begin{align}
        \left ( I^{\otimes (j - 1)} \otimes \bra{o_j} \otimes \bra{0}^{\otimes (h - j)} \otimes \bra{0} \right ) \bigl ( M_h \cdots M_1 \bigr ) \left ( \ket{+} \otimes I^{\otimes (j - 1)} \otimes \ket{i_j} \otimes \ket{0}^{\otimes (h - j)} \right ) = 0
    \end{align} if \begin{itemize}
        \item \( i_j \) corresponds to the trivial permutation and \( o_j \) does not,
        \item \( i_j \) corresponds to a transposition and \( o_j \) corresponds to a different transposition or a permutation with one or no fixed point,
        \item \( i_j \) corresponds to a permutation with a single fixed point and \( o_j \) corresponds to a permutation with a different single fixed point or no fixed point,
    \end{itemize} which concludes the proof.
\end{proof}

In our second lemma, we utilize the block triangularity of \( \mathcal{T}_e \) with \( e \in S_4 \) to make direct a statement about its two leading eigenvalues.

\begin{lemma} \label{lem:iso_spectrum_4_2}
	Let \( | \lambda_1 | > | \lambda_2 | > \cdots \geq 0 \) denote the distinct eigenvalues of \( \mathcal{T}_e \) with \( e \in S_4 \). Then, for any \( h \), \( \lambda_1 = 1 \) and \( \lambda_2 = \beta \). Furthermore, \( \lambda_1 \) is nondegenerate, and \( \lambda_2 \) has a degeneracy of six.
\end{lemma}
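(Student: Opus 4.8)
The plan is to follow the blueprint of the proof of Lemma~\ref{lem:iso_spectrum_2_2}: exploit the block-triangular structure of Lemma~\ref{lem:iso_spectrum_4_1} to reduce the spectral problem to that of the diagonal blocks, and then dispatch the four classes of blocks separately, uniformly in the height \( h \).

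By Lemma~\ref{lem:iso_spectrum_4_1} the spectrum of \( \mathcal{T}_e \) with \( e \in S_4 \) is the union of the spectra of its diagonal blocks, catalogued just before that lemma. The first block is exactly the matrix \( N \) of Eq.~\eqref{eq:iso_spectrum_4_2}, which is \( T_e \) with \( e \in S_4 \) under the substitution \( d \mapsto d D \). Its spectrum is pinned down by the structural results of App.~\ref{app:conjecture}: Propositions~\ref{prop:conjugacy_class} and \ref{prop:block_triangular} make \( T_e \) block triangular, Proposition~\ref{prop:lambda_1} gives the simple leading eigenvalue \( 1 \), and Conjecture~\ref{con:lambda_2}, which is known to hold for \( k = 4 \), identifies the subleading eigenvalue \( \mu_2 \) of Eq.~\eqref{eq:mu_2} with degeneracy \( \binom{4}{2} = 6 \) and every remaining eigenvalue of strictly smaller modulus. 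Since \( d \mapsto d D \) sends \( \mu_2 \) to \( \beta \), the block \( N \) contributes the eigenvalue \( 1 \) once, the eigenvalue \( \beta \) six times (all six copies coming from its transposition sub-block, which equals \( \beta I_6 \) in Eq.~\eqref{eq:iso_spectrum_4_2}), and only eigenvalues of modulus \( < \beta \) otherwise.

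It then remains to show that every diagonal block other than \( N \) has spectral radius strictly below \( \beta \). For a block sitting at digit position \( 2 \le j \le h \), I would peel off the topmost layer \( M_j \) by means of Eq.~\eqref{eq:iso_spectrum_4_1}, exactly as in the two-copy proof. This extracts the sub-block of \( N \) selected by the class of the \( j \)th digit --- the \emph{scalar} \( \beta \) for a transposition block, a copy of \( \bigl( \begin{smallmatrix} \varepsilon & \zeta \\ \zeta & \varepsilon \end{smallmatrix} \bigr) \) for a single-fixed-point block, and the \( 9 \times 9 \) block occupying rows and columns \( 16 \) through \( 24 \) of Eq.~\eqref{eq:iso_spectrum_4_2} for the no-fixed-point block --- and leaves behind the height-\( (j-1) \) chain \( \left( I^{\otimes (j-1)} \otimes \langle\,\cdot\,| \right) \left( M_{j-1} \cdots M_1 \right) \left( \ket{+} \otimes I^{\otimes (j-1)} \right) \) restricted to a nontrivial computational index. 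As in App.~\ref{app:iso_spectrum_2}, \( M \in \mathbb{R}^{576 \times 576} \) is column substochastic with a unique unit-sum column (the all-trivial one), so this chain is column substochastic, and \emph{strictly} so once a nontrivial index has been selected; hence it has spectral radius \( \le 1 \), and \( < 1 \) for the transposition blocks. For a transposition block the extracted piece is the scalar \( \beta \), so the block equals \( \beta \) times a strictly substochastic matrix and has spectral radius \( < \beta \), reproducing the \( S_2 \) argument verbatim. For the remaining two classes the extracted sub-block of \( N \) has all of its eigenvalues of modulus \( < \beta \) --- they are eigenvalues of \( N \) distinct from \( 1 \) and from its six copies of \( \beta \), namely \( \varepsilon \pm \zeta \) in the single-fixed-point case --- and combining this with column substochasticity of the chain again forces the block's spectral radius below \( \beta \). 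Since the six copies of \( \beta \) thus originate solely from \( N \), we obtain \( \lambda_1 = 1 \) simple and \( \lambda_2 = \beta \) with degeneracy six.

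The main obstacle lies in making the last paragraph rigorous. One needs to verify, for the \( S_4 \) transfer matrix, that the \( 576 \times 576 \) single-layer matrix \( M \) is genuinely column substochastic with the all-trivial column as its unique unit-sum column; and, for the single-fixed-point and no-fixed-point classes, one must justify the step ``(sub-block of \( N \) with spectral radius \( < \beta \)) composed with (substochastic chain) has spectral radius \( < \beta \),'' which does not follow from \( \rho(AB) \le \rho(A)\rho(B) \) since that inequality fails for non-commuting matrices. I expect this to be handled either by a similarity transformation on the \( j \)th digit that simultaneously diagonalizes the relevant sub-block of \( N \) while preserving the substochastic estimate on the chain, or by a Perron--Frobenius/Gershgorin bound adapted to the upper-block-triangular form; this is where the bulk of the technical effort of App.~\ref{app:iso_spectrum_4} will go.
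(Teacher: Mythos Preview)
Your skeleton is exactly the paper's: block-triangularity from Lemma~\ref{lem:iso_spectrum_4_1} reduces to the diagonal blocks, the first block is \( N \) with the claimed contribution \( 1 \) (simple) and \( \beta \) (sixfold), and the transposition blocks are handled verbatim as in the \( S_2 \) case. You also correctly isolate the only nontrivial step --- bounding the single-fixed-point and no-fixed-point blocks --- and honestly flag why the naive ``\( \rho(\text{sub-block of }N) < \beta \) times \( \rho(\text{chain}) \le 1 \)'' argument does not close.

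Where your proposal diverges from the paper is in the resolution of that obstacle. The paper does \emph{not} use a similarity transformation or a tailored Perron--Frobenius/Gershgorin argument. Instead it replaces your spectral information on the sub-block of \( N \) by a much stronger \emph{entrywise} fact: every entry \( p \) of the \( 2\times 2 \) single-fixed-point sub-block satisfies \( p < \beta/2 \), and every entry of the \( 9\times 9 \) no-fixed-point sub-block satisfies \( p < \beta/9 \) (both checked explicitly, with reference to~\cite{GitHub}). With this, the full \( m\cdot 24^{j-1} \times m\cdot 24^{j-1} \) diagonal block is entrywise dominated by \( (\beta/m) \) times an \( m\times m \) array whose \( (o_j,i_j) \)-block is the chain with boundary \( \bra{o_j} \); summing the \( m \) row-blocks recombines disjoint row-slices of the underlying substochastic product, so the column-sum bound survives and yields spectral radius \( < m\cdot(\beta/m)\cdot 1 = \beta \). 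In other words, the factor \( 1/m \) is precisely what absorbs the \( m \)-fold blow-up that would otherwise spoil a \( \rho(AB) \)-type estimate.

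So your plan is correct, but the ``bulk of the technical effort'' you anticipate is in fact a short numerical verification: check that \( |M| \) has column sums \( \le 1 \) with equality only in the trivial column, and that \( \{\varepsilon,\zeta\} < \beta/2 \) and \( \{\mu,\nu,\xi,o,\pi,\tau,\varphi,\chi,\psi\} < \beta/9 \). The paper relegates all of this to~\cite{GitHub}. Your suggested similarity/Gershgorin route might also work, but it is not what the paper does and would likely be more laborious than the entrywise bound.
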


\begin{proof}
    As is the case for \( \mathcal{T}_e \) with \( e \in S_2 \), the spectrum of \( \mathcal{T}_e \) with \( e \in S_4 \) is given by the union of the spectra of its diagonal blocks. The two leading eigenvalues of the first diagonal block \( N \) are \( 1 \) and \( \beta \). In the following, we show that any other diagonal block can be written as a product of \( \beta \) and a matrix whose spectral radius is strictly bounded by \( 1 \), implying that its eigenvalues are strictly smaller than \( \beta \). We again structure the proof in steps.
    
    \begin{step}
        We show that any diagonal block but the first can be written as a product of beta and a matrix. From Eq.~\eqref{eq:iso_spectrum_4_1}, it follows that, \begin{itemize}
            \item if \( i_j \) and \( o_j \) correspond to the same transposition, \begin{align}
                & \left ( I^{\otimes (j - 1)} \otimes \bra{o_j} \otimes \bra{0}^{\otimes (h - j)} \otimes \bra{0} \right ) \bigl ( M_h \cdots M_1 \bigr ) \left ( \ket{+} \otimes I^{\otimes (j - 1)} \otimes \ket{i_j} \otimes \ket{0}^{\otimes (h - j)} \right ) \nonumber \\
                & \qquad = \left ( I^{\otimes (j - 1)} \otimes \bra{o_j} \otimes \bra{0} \right ) \bigl ( M_j \cdots M_1 \bigr ) \left ( \ket{+} \otimes I^{\otimes (j - 1)} \otimes \ket{i_j} \right ) \\
                & \qquad = \beta \left ( I^{\otimes (j - 1)} \otimes \bra{o_j} \right ) \bigl ( M_{j - 1} \cdots M_1 \bigr ) \left ( \ket{+} \otimes I^{\otimes (j - 1)} \right ),
            \end{align}
            \item if \( i_j \) and \( o_j \) correspond to any two permutations with the same single fixed point, \begin{align}
                & \left ( I^{\otimes (j - 1)} \otimes \bra{o_j} \otimes \bra{0}^{\otimes (h - j)} \otimes \bra{0} \right ) \bigl ( M_h \cdots M_1 \bigr ) \left ( \ket{+} \otimes I^{\otimes (j - 1)} \otimes \ket{i_j} \otimes \ket{0}^{\otimes (h - j)} \right ) \nonumber \\
                & \qquad = \left ( I^{\otimes (j - 1)} \otimes \bra{o_j} \otimes \bra{0} \right ) \bigl ( M_j \cdots M_1 \bigr ) \left ( \ket{+} \otimes I^{\otimes (j - 1)} \otimes \ket{i_j} \right ) \\
                & \qquad = p \left ( I^{\otimes (j - 1)} \otimes \bra{o_j} \right ) \bigl ( M_{j - 1} \cdots M_1 \bigr ) \left ( \ket{+} \otimes I^{\otimes (j - 1)} \right ) \\
                & \qquad < \frac{\beta}{2} \left ( I^{\otimes (j - 1)} \otimes \bra{o_j} \right ) \bigl ( M_{j - 1} \cdots M_1 \bigr ) \left ( \ket{+} \otimes I^{\otimes (j - 1)} \right ),
            \end{align} where \( \{ \varepsilon, \zeta \} \ni p < \beta / 2 \)~\cite{GitHub} depends on \( i_j \) and \( o_j \),
            \item if \( i_j \) and \( o_j \) correspond to any permutation with no fixed point, \begin{align}
                & \left ( I^{\otimes (j - 1)} \otimes \bra{o_j} \otimes \bra{0}^{\otimes (h - j)} \otimes \bra{0} \right ) \bigl ( M_h \cdots M_1 \bigr ) \left ( \ket{+} \otimes I^{\otimes (j - 1)} \otimes \ket{i_j} \otimes \ket{0}^{\otimes (h - j)} \right ) \nonumber \\
                & \qquad = \left ( I^{\otimes (j - 1)} \otimes \bra{o_j} \otimes \bra{0} \right ) \bigl ( M_j \cdots M_1 \bigr ) \left ( \ket{+} \otimes I^{\otimes (j - 1)} \otimes \ket{i_j} \right ) \\
                & \qquad = p \left ( I^{\otimes (j - 1)} \otimes \bra{o_j} \right ) \bigl ( M_{j - 1} \cdots M_1 \bigr ) \left ( \ket{+} \otimes I^{\otimes (j - 1)} \right ) \\
                & \qquad < \frac{\beta}{9} \left ( I^{\otimes (j - 1)} \otimes \bra{o_j} \right ) \bigl ( M_{j - 1} \cdots M_1 \bigr ) \left ( \ket{+} \otimes I^{\otimes (j - 1)} \right ),
            \end{align} where \( \{ \mu, \nu, \xi, o, \pi, \tau, \varphi, \chi, \psi \} \ni p < \beta / 9 \)~\cite{GitHub} depends on \( i_j \) and \( o_j \).
        \end{itemize}
    \end{step}
    
    \begin{step}
        We now argue that the spectral radius of the matrix \begin{align} \label{eq:iso_spectrum_4_3}
            \left ( I^{\otimes (j - 1)} \otimes \bra{o_j} \right ) \bigl ( M_{j - 1} \cdots M_1 \bigr ) \left ( \ket{+} \otimes I^{\otimes (j - 1)} \right )
        \end{align} is strictly bounded by \( 1 \) for \( 2 \leq j \leq h \). It holds that the spectral radius of \( M \)  is bounded by \( 1 \). While the first column of \( | M | \) sums to \( 1 \), the sums of the other columns are strictly bounded by \( 1 \)~\cite{GitHub}. As a result, the spectral radius of \( M_{j - 1} \cdots M_1 \) is bounded by \( 1 \). The boundary condition \( \ket{o} \) does not affect this because it specifies a subset of columns of the matrix \( M_{j - 1} \cdots M_1 \). In fact, it imposes a strict bound because this subset does not include the only column summing to \( 1 \). Also the boundary condition \( \bra{+} \) does not affect the bound on the spectral radius. The boundary condition means that Eq.~\eqref{eq:iso_spectrum_4_3} is a sum of matrices. Each of these matrices comprises a disjoint subset of rows of the matrix \( M_{j - 1} \cdots M_1 \). Because the spectral radius of \( M_{j - 1} \cdots M_1 \), the matrix comprising the whole set of rows, is bounded by \( 1 \), so is the sum of the matrices comprising the disjoint subsets.
    \end{step}
    
    \begin{step}
        \( 1 \) and \( \beta \) are the two leading eigenvalues of the first diagonal block \( N \). They are nondegenerate. Because the spectral radius of the matrix in Eq.~\eqref{eq:iso_spectrum_4_3} is strictly bounded by \( 1 \), the eigenvalues of any other diagonal block are strictly smaller than \( \beta \).
    \end{step}
    
    \noindent The statement follows.
\end{proof}

\section{PROOF OF RESULT~\ref{res:iso_renyi}} \label{app:iso_renyi}

\isorenyi*

\begin{proof}
    We split the proof into four steps, following the structure of the proof of Result~\ref{res:mps_renyi}. The steps are overall very similar to those of that proof.
    
    \begin{step} \label{step:iso_renyi_1}
        We rewrite \( \mathbb{E} I_2 (A : B) \) in terms of expressions of the form of Eq.~\eqref{eq:iso_caffe}. As in one dimension, we make the assumption that \( \mathbb{E} \log (X) = \log (\mathbb{E} X) \). Then, \begin{align}
            \mathbb{E} I_2 (A : B) = \log \left [ \mathbb{E} \tr \left ( \varrho_{A B}^2 \right ) \right ] - \log \left [ \mathbb{E} \tr \left ( \varrho_A^2 \right ) \right ] - \log \left [ \mathbb{E} tr \left ( \varrho_B^2 \right ) \right ].
        \end{align} \( \mathbb{E} \tr \left ( \varrho_A^2 \right ) \), \( \mathbb{E} \tr \left ( \varrho_B^2 \right ) \), and \( \mathbb{E} \tr \left ( \varrho_{A B}^2 \right ) \) can be written in the desired form [see Eqs.~\eqref{eq:mps_renyi_2} and \eqref{eq:mps_renyi_3}].
    \end{step}
    
    \begin{step} \label{step:iso_renyi_2}
        We express \( \mathbb{E} I_2 (A : B) \) in terms of the transfer tensors defined in Sec.~\ref{sec:iso_transfer} and use Eq.~\eqref{eq:iso_to_mps} to map contractions of two-dimensional tensor networks to multiplications of matrices. The latter is enabled by our definition of subsystems \( A \) and \( B \) [see Fig.~\ref{fig:iso_setup}~(a)]. We have done this for \( \mathbb{E} \tr \left ( \varrho_A^2 \right ) \) in graphical notation in Sec.~\ref{sec:iso_renyi} [see Eqs.~\eqref{eq:iso_renyi_1} and \eqref{eq:iso_renyi_2}]. It is easy to confirm that \begin{align}
            \mathbb{E} I_2 (A : B) = \log \left ( \fv{\mathcal{I}_2}{\mathcal{T}_e^c \mathcal{T}_{(1 2)}^a \mathcal{T}_e^r \mathcal{T}_{(1 2)}^b}{\mathcal{F}_2} \right ) - \log \left ( \fv{\mathcal{I}_2}{\mathcal{T}_e^c \mathcal{T}_{(1 2)}^a}{\mathcal{F}_2} \right ) - \log \left ( \fv{\mathcal{I}_2}{\mathcal{T}_e^{c + a + r} \mathcal{T}_{(1 2)}^b}{\mathcal{F}_2} \right ).
        \end{align}
    \end{step}
    
    \begin{step} \label{step:iso_renyi_3}
        We expand \( \mathbb{E} I_2 (A : B) \) in terms of the spectrum of \( \mathcal{T}_e \) with \( e \in S_2 \), which we consider in Appendix~\ref{app:iso_spectrum_2}. Because we know \( \lambda_1 \) and \( \lambda_2 \) as well as their algebraic and geometric multiplicities, we do not need \( \mathcal{T}_e \) to be diagonalizable. Expanding \( \mathcal{T}_e^c \) and taking the limit of \( c \to \infty \) yields \begin{align}
    	    \mathbb{E} I_2 (A : B) & = \log \left ( \fv{L_1}{\mathcal{T}_{(1 2)}^a \mathcal{T}_e^r \mathcal{T}_{(1 2)}^b}{\mathcal{F}_2} \right ) - \log \left ( \fv{L_1}{\mathcal{T}_{(1 2)}^a}{\mathcal{F}_2} \right ) - \log \left ( \fv{L_1}{\mathcal{T}_{(1 2)}^b}{\mathcal{F}_2} \right ),
    	\end{align} where we have used that \( \braket{\mathcal{I}_2}{R_1} = 1 \). After expanding \( \mathcal{T}_e^r \) and using that \( \ket{\mathcal{F}_2} = \ket{R_1} \), we have \begin{align}
    		I_2 (A : B) & = \log \left [ \fv{L_1}{\mathcal{T}_{(1 2)}^a}{R_1} \fv{L_1}{\mathcal{T}_{(1 2)}^b}{R_1} + \lambda_2^r \fv{L_1}{\mathcal{T}_{(1 2)}^a}{R_2} \fv{L_2}{\mathcal{T}_{(1 2)}^b}{R_1} + O \left ( r^{v - 1} \lambda_3^r \right ) \right ] \nonumber \\
    		& \hphantom{{} = {}} - \log \left ( \fv{L_1}{\mathcal{T}_{(1 2)}^a}{R_1} \right ) - \log \left ( \fv{L_1}{\mathcal{T}_{(1 2)}^b}{R_1} \right ),
    	\end{align} where \( v \) denotes the size of the largest Jordan block with respect to \( \lambda_3 \).
    \end{step}
    
    \begin{step} \label{step:iso_renyi_4}
        Finally, we can write \( \mathbb{E} I_2 (A : B) \) in the form of Definition~\ref{def:definition}. With \( \Lambda = \max \left ( \left \{ \lambda_2^{2 r}, r^{v - 1} \lambda_3^r \right \} \right ) \), \begin{align}
    		\mathbb{E} I_2 (A : B) & = \log \left [ 1 + \lambda_2^r \frac{\fv{L_1}{\mathcal{T}_{(1 2)}^a}{R_2} \fv{L_2}{\mathcal{T}_{(1 2)}^b}{R_1}}{\fv{L_1}{\mathcal{T}_{(1 2)}^a}{R_1} \fv{L_1}{\mathcal{T}_{(1 2)}^b}{R_1}} + O \left ( r^{v - 1} \lambda_3^r \right ) \right ] \\
    		& = \lambda_2^r \frac{\fv{L_1}{\mathcal{T}_{(1 2)}^a}{R_2} \fv{L_2}{\mathcal{T}_{(1 2)}^b}{R_1}}{\fv{L_1}{\mathcal{T}_{(1 2)}^a}{R_1} \fv{L_1}{\mathcal{T}_{(1 2)}^b}{R_1}} + O (\Lambda) \\
    		& \equiv K \exp \left ( - \frac{r}{\xi} \right ) + O \left [ \exp \left ( - \frac{r}{\chi} \right ) \right ],
    	\end{align} where \begin{align}
    		K = \frac{\fv{L_1}{\mathcal{T}_{(1 2)}^a}{R_2} \fv{L_2}{\mathcal{T}_{(1 2)}^b}{R_1}}{\fv{L_1}{\mathcal{T}_{(1 2)}^a}{R_1} \fv{L_1}{\mathcal{T}_{(1 2)}^b}{R_1}}
    	\end{align} and \begin{align}
    		\xi = - \frac{1}{\log (\lambda_2)} = - \left [ \log \left ( \frac{d D^3 - d D}{d^2 D^4 - 1} \right ) \right ]^{- 1} = \xtwo > \chi.
    	\end{align}
    \end{step}
	
	\noindent This concludes the proof.
\end{proof}

\section{PROOF OF RESULT~\ref{res:iso_norm}} \label{app:iso_norm}

\isonorm*

\begin{proof}
    We split the proof into four steps, following the structure of the proof of Result~\ref{res:mps_renyi}. The steps are overall very similar to those of the proof of Result~\ref{res:mps_norm} (see Appendix~\ref{app:mps_norm}).

    \begin{step} \label{step:iso_norm_1}
        We rewrite \( \mathbb{E} N (A : B) \) in terms of expressions of the form of Eq.~\eqref{eq:iso_caffe}. As in one dimension, with the Hilbert-Schmidt inner product, \begin{align}
            \mathbb{E} N (A : B) = \mathbb{E} \tr \left ( \varrho_{A B}^2 \right ) + \mathbb{E} \tr \left ( \varrho_A^2 \right ) \tr \left ( \varrho_B^2 \right ) - 2 \mathbb{E} \tr \left [ \varrho_{A B} \left ( \varrho_A \otimes \varrho_B \right ) \right ].
        \end{align} The right-hand side can be written in the desired form [see Eq.~\eqref{eq:mps_norm_5}].
    \end{step}
    
    \begin{step} \label{step:iso_norm_2}
        We express \( \mathbb{E} N (A : B) \) in terms of the transfer tensors defined in Sec.~\ref{sec:iso_transfer} and use Eq.~\eqref{eq:iso_to_mps} to map contractions of two-dimensional tensor networks to multiplications of matrices. The latter is enabled by our definition of subsystems \( A \) and \( B \) [see Fig.~\ref{fig:iso_setup}~(a)]. It is easy to confirm that \begin{align}
    		\mathbb{E} N (A : B) & = \fv{\mathcal{I}_4}{\mathcal{T}_e^c \mathcal{T}_{(1 2)}^a \mathcal{T}_e^r \mathcal{T}_{(1 2)}^b}{\mathcal{F}_4} + \fv{\mathcal{I}_4}{\mathcal{T}_e^c \mathcal{T}_{(3 4)}^a \mathcal{T}_e^r \mathcal{T}_{(1 2)}^b}{\mathcal{F}_4} - 2 \fv{\mathcal{I}_4}{\mathcal{T}_e^c \mathcal{T}_{(1 2)}^a \mathcal{T}_e^r \mathcal{T}_{(1 3)}^b}{\mathcal{F}_4} \\
    		& = \fv{\mathcal{I}_4}{\mathcal{T}_e^c \mathcal{T}_{(1 2)}^a \mathcal{T}_e^r \left ( \mathcal{T}_{(1 2)}^b + \mathcal{T}_{(3 4)}^b - 2 \mathcal{T}_{(1 3)}^b \right )}{\mathcal{F}_4} \\
        	& \equiv \fv{\mathcal{I}_4}{\mathcal{T}_e^c \mathcal{A} \mathcal{T}_e^r \mathcal{B}}{\mathcal{F}_4},
    	\end{align} where we have defined \begin{align}
    		\mathcal{A} = \mathcal{T}_{(1 2)}^a \qquad \text{and} \qquad \mathcal{B} = \mathcal{T}_{(1 2)}^b + \mathcal{T}_{(3 4)}^b - 2 \mathcal{T}_{(1 3)}^b.
    	\end{align}
    \end{step}
    
    \begin{step} \label{step:iso_norm_3}
        We expand \( \mathbb{E} N (A : B) \) in terms of the spectrum of \( \mathcal{T}_e \) with \( e \in S_4 \), which we consider in Appendix~\ref{app:iso_spectrum_4}. Because we know \( \lambda_1 \) and \( \lambda_2 \) as well as their algebraic and geometric multiplicities, we do not need \( \mathcal{T}_e \) to be diagonalizable. Expanding \( \mathcal{T}_e^c \) and taking the limit of \( c \to \infty \) yields \begin{align}
    	    \mathbb{E} N (A : B) & = \fv{L_1}{\mathcal{A} \mathcal{T}_e^r \mathcal{B}}{\mathcal{F}_4},
    	\end{align} where we have used that \( \braket{\mathcal{I}_4}{R_1} = 1 \). After expanding \( T_e^r \) and using that \( \ket{\mathcal{F}_4} = \ket{R_1} \), we have \begin{align}
    	    \mathbb{E} N (A : B) & = \fv{L_1}{\mathcal{A}}{R_1} \fv{L_1}{\mathcal{B}}{R_1} + \lambda_2^r \sum_{\mu = 1}^{w_2} \fv{L_1}{\mathcal{A}}{R_2^{(\mu)}} \fv{L_2^{(\mu)}}{\mathcal{B}}{R_1} + O \left ( \lambda_3^{r} \right ) \\
    	    & = \lambda_2^r \sum_{\mu = 1}^{w_2} \fv{L_1}{\mathcal{A}}{R_2^{(\mu)}} \fv{L_2^{(\mu)}}{\mathcal{B}}{R_1} + O \left ( \lambda_3^{r} \right ),
    	\end{align} where, in the second line, we have used that \begin{align} \label{eq:iso_temp}
    	    \fv{L_1}{\mathcal{B}}{R_1} = 0,
    	\end{align} which we prove in the following.
    	
    	As in the proof of Result~\ref{res:mps_norm} (see Appendix~\ref{app:mps_norm}), we prove that \( \fv{L_1}{\mathcal{T}_t^b}{R_1} \) does not depend on the two elements the transposition \( t \in S_4 \) acts upon. In fact, the proof follows from the proof of Eq.~\eqref{eq:mps_temp} of that Appendix. We just need two additional considerations. First, the proof of Lemma~\ref{lem:iso_spectrum_4_1} is not specific to the trivial permutation \( e \in S_4 \). In particular, \( \mathcal{T}_\rho \) exhibits an upper block triangular structure for any \( \rho \in S_4 \). The first diagonal block is given by \( T_\rho \) with \( \rho \in S_4 \), \begin{align}
    	    \Bigl ( \fv{s_i}{\mathcal{T}_\rho}{s_j} \Bigr )_{\substack{1 \leq i \leq 24 \\ 1 \leq j \leq 24}} = T_\rho,
    	\end{align} which implies that \begin{align}
    	    \Bigl ( \fv{s_i}{\mathcal{T}_\rho^b}{s_j} \Bigr )_{\substack{1 \leq i \leq 24 \\ 1 \leq j \leq 24}} = T_\rho^b.
    	\end{align} Second, the eigenvectors \( \bra{L_1} \) and \( \ket{R_1} \) of \( \mathcal{T}_e \) with \( e \in S_4 \) arise from those of \( T_e \) with \( e \in S_4 \). That is, \begin{align}
    	    \ket{R}_1 = s_1 \qquad \text{and} \qquad \Bigl ( \braket{L_1}{s_i} \Bigr )_{1 \leq i \leq 7} = \left ( 1 \quad \frac{\alpha}{\beta - 1} \quad \frac{\alpha}{\beta - 1} \quad \frac{\alpha}{\beta - 1} \quad \frac{\alpha}{\beta - 1} \quad \frac{\alpha}{\beta - 1} \quad \frac{\alpha}{\beta - 1} \right ).
    	\end{align} \( \fv{L_1}{\mathcal{T}_t^b}{R_1} \) thus does not depend on the two elements the transposition \( t \in S_4 \) acts upon because \( \fv{L_1}{T_t^b}{R_1} \) does not. This implies that \begin{align}
    	    \fv{L_1}{\mathcal{B}}{R_1} = \fv{L_1}{\left ( \mathcal{T}_{(1 2)}^b + \mathcal{T}_{(3 4)}^b - 2 \mathcal{T}_{(1 3)}^b \right )}{R_1} = 0.
    	\end{align}
    \end{step}
    
    \begin{step} \label{step:iso_norm_4}
        Finally, we can write \( \mathbb{E} N (A : B) \) in the form of Definition~\ref{def:definition}. That is, \begin{align}
            \mathbb{E} N (A : B) \equiv K \exp \left ( - \frac{r}{\xi} \right ) + O \left [ \exp \left ( - \frac{r}{\chi} \right ) \right ],
        \end{align} where \begin{align}
    		K = \sum_{\mu = 1}^{w_2} \fv{L_1}{\mathcal{A}}{R_2^{(\mu)}} \fv{L_2^{(\mu)}}{\mathcal{B}}{R_1}
    	\end{align} and \begin{align}
    		\xi = - \frac{1}{\log (\lambda_2)} = - \left [ \log \left ( \frac{d D^3 - d D}{d^2 D^4 - 1} \right ) \right ]^{- 1} = \xtwo > \chi.
        \end{align}
    \end{step}
    
    \noindent This concludes the proof.
\end{proof}

\end{document}